\begin{document}
\title{Scattering properties of two singularly interacting particles on the half-line}
\author{Sebastian Egger}
\address[S.Egger]{Department of Mathematics, Technion-Israel Institute of Technology 629 Amado Building, Haifa 32000, Israel}
\email{egger@tx.technion.ac.il}

\author{Joachim Kerner}
\address[J.Kerner]{Department of Mathematics and Computer Science, FernUniversit\"{a}t in Hagen, 58084 Hagen, Germany}
\email{Joachim.Kerner@fernuni-hagen.de}

\begin{abstract}
	

We analyze scattering in a system of two (distinguishable) particles moving on the half-line $\overline{\rz}_+$ under the influence of singular two-particle interactions. Most importantly, due to the spatial localization of the interactions the two-body problem is of a non-separable nature. We will discuss the presence of embedded eigenvalues and using the obtained knowledge about the kernel of the resolvent we prove a version of the limiting absorption principle. Furthermore, by an appropriate adaptation of the Lippmann-Schwinger approach we are able to construct generalized eigenfunctions which consequently allow us to establish an explicit expression for the (on-shell) scattering amplitude. An approximation of the scattering amplitude in the weak-coupling limit is also derived. 
\end{abstract}
\maketitle
\section{Introduction}
In this paper we study scattering in a system of two (distinguishable) particles moving on the real half-line $\overline{\rz}_+=[0,\infty)$ under the influence of singular and spatially localized two-particle interactions. The formal Hamiltonian of the system shall be given by
\begin{equation}\label{FormalHamiltonianI}
H=-\frac{\partial^2}{\partial x_1^2}-\frac{\partial^2}{\partial x_2^2}+v(x_1,x_2)  [\delta(x_1)+\delta(x_2)   ]\ ,
\end{equation}
$v(x_1,x_2)=v(x_2,x_1)$ being some symmetric (real-valued) interaction potential. From the Hamiltonian it is clear that the two particles are interacting only whenever at least one of the particles is situated at the origin. Furthermore, if one chooses $v:\rz^2\rightarrow \rz$ such that $\supp v \subset B_{\varepsilon}(0)$ with $B_{\varepsilon}(0) \subset \rz^2$ being the open ball of radius $\varepsilon > 0$, then the particles are interacting only whenever one particle is situated at the origin and the other is $\varepsilon$-close to it. 

The considered model originated from the theory of many-particle quantum chaos and, in particular, the theory of many-particle quantum graphs \cite{BKSingular,BKContact}. Quantum graphs, on the other hand, are (quasi) one-dimensional systems with a (potentially) complex topology. Some twenty years ago, by showing that eigenvalue correlations exhibit a behavior predicted by random matrix theory \cite{KS97}, they turned into an important model for understanding better the quantum mechanical properties of systems that are associated with chaotic classical dynamics. As a matter of fact, it is exactly the scattering of a particle in the vertices of a quantum graph which generates a chaotic dynamics. Note that scattering in a one-particle system on a quantum graph has been well-studied, see \cite{GNUSMY06,Berkolaiko:2013} and references therein. Contrary to that and owing to the fact that there are only few many-body systems which are explicitly solvable \cite{AGHH88}, the scattering properties of many-particle quantum graphs have been much less studied in the mathematical literature \cite{lobanov2008two,MP95}. The half-line represents the simplest version of a non-compact quantum graph, however,the methods developed in this paper might prove useful in the discussion of two-particle scattering on more general graphs and of more general singular two-particle interactions as presented in \cite{BKContact,BKSingular,KM16Rep}.

As outlined in \cite{KM16}, the model to be discussed is also interesting from the point of view of applications. For example, singular many-particle interactions on graphs where already considered in \cite{MP95} in order to understand their effect on the conductivity of nanoelectronic devices. In their case, the authors imagined some complex structure in the vertices of the graph leading to interactions between the particles whenever they are close to them. Regarding our model it was argued in \cite{KM16} that the Hamiltonian \eqref{FormalHamiltonianI} can be understood as describing a system of two electrons moving in a so-called composite wire which is largely normal-conductive except for a relatively small part around the origin where it is superconducting \cite{FossheimSuperconducting}. In the superconducting part, the pairing effect of superconductivity then leads to attractive two-particle interactions (Cooper pairs) .

As shown in \cite{KM16} and as explained later in more detail, the model can be reformulated as a boundary value problem for the two-dimensional Laplacian on $\rz^2_+$ with coordinate dependent Robin boundary conditions. This reformulation of the problem then enables one to use techniques and results from the theory of elliptic boundary value problems, leaving us with an at least approachable interacting many-particle system. Besides that, it is also worth mentioning that the Hamiltonian~\eqref{FormalHamiltonianI} is associated with a non-separable quantum many-body problem. As pointed out in \cite{glasser1993solvable,glasser2005solvable}, besides being only rarely discussed, non-separable quantum many-body problems have important applications regarding the foundations of quantum mechanics as well as in condensed matter physics.

The paper is organized as follows: In Section~\ref{Model} we provide a rigorous realization of \eqref{FormalHamiltonianI} as
a self-adjoint Laplacian on $\rz^2_+$ (a domain with a non-smooth but Lipschitz boundary) being subjected to variable Robin boundary conditions and we address $H^2$-regularity of the constructed operator employing methods of \cite{Helffer:2015}. In Section~\ref{prelim} we discuss the relation of the Laplacian on $\rz^2_+$ equipped with boundary conditions with the Laplacian defined on all of $\rz^2$ with a potential being singularly supported on a hypersurface  \cite{BrascheExnerKuperinSeba92,Gesztesy:2011,Behrndt:2013,Exner:2016}. In Section~\ref{Embedded} we then continue the investigation of spectral properties of the Hamiltonian~\eqref{FormalHamiltonianI} as started in \cite{KM16} and prove the absence of embedded eigenvalues in the essential spectrum whenever $\sigma$ has bounded support. This forms the counterpart of a well-known property of certain Schr\"odinger operators in full space. However, the possible eigenvalue zero requires a special attention. We are able to prove a non-existence result using properties of harmonic functions in spatial dimension two. Section~\ref{Resolvent} is then devoted to the study of the resolvent of \eqref{FormalHamiltonianI} by a suitable adaptation on various methods of \cite{Yafaev:1992,Yafaev:2010} for which we prove several integral estimates in the appendix. Finally, in Section~\ref{Scattering} we address the scattering properties of our system establishing existence and completeness of the wave operators, constructing generalized eigenfunctions and deriving an expression for the (on-shell) scattering amplitude. This allows us to establish a version of the Birman-Schwinger principle characterizing the eigenvalues but here the Birman-Schwinger operators act on the boundary of the system rather than on the complete configuration space.  We also present a novel and explicit expression for the scatting amplitude in the weak interaction limit. 

Note that asymptotic completeness of self-adjoint Laplacians on domains with smooth and compact boundary is proved in \cite{Mantile:2016n} by a Kato-Rosenblum approach involving Schatten-von Neumann estimates of suitable differences of resolvents. In \cite{Lotoreichik:2012} the corresponding Schatten-von Neumann estimates and the Kato-Rosenblum condition are discussed for the half space with a boundary potential of (possible) unbounded support but with certain regularity and decay properties. We, on the other hand, prove completeness via a suitable adaptation of an analytic Fredholm argument for Schr\"odinger operators on full space. Contrary to \cite{Lotoreichik:2012}, our boundary possesses a corner (i.e., is Lipschitz only) and we do not impose any regularity condition on the boundary potential. Furthermore, our boundary potential is also allowed to possess unbounded support, however, the decay property is more restrictive as in \cite[Lemma~3.3.~(iv)]{Lotoreichik:2012}. We also note that generalized eigenfunctions and the scattering amplitude are studied in \cite{Mantile:2016} in the case of compact and smooth hypersurfaces, however, our approach is closer to the one in \cite{bRASCHE:1992}.

Finally, we refer to section~\ref{Notation} of the appendix for some important notation used in this paper. 
\section{The model}
\label{Model}
We consider two (distinguishable) particles moving on the half-line $\overline{\rz}_+=[0,\infty)$ and whose formal Hamiltonian is given by \eqref{FormalHamiltonianI},
%
%
$v(x_1,x_2)=v(x_2,x_1)$ being some symmetric (real-valued) interaction potential. A rigorous mathematical realization of the Hamiltonian \eqref{FormalHamiltonianI} is obtained via the construction of a suitable quadratic form on $L^2(\rz^2_+)$. 

For a function $\sigma:\rz_+\rightarrow\rz$ we always identify 
\be
\label{sigma}
\sigma(y)=-v(0,y)=-v(y,0)\ ,
\ee
and denoting by $H^1(\rz^2_+)$ the Sobolev space of order one we make the following definition.
\begin{defn}
\label{defq}
For $\sigma\in L^{\infty}(\rz_+)$, the quadratic form $(q_{\sigma},H^1(\rz^2_+))$ is defined by 
\begin{equation}
\label{QuadraticForm}
q_{\sigma}[\varphi]= \int_{\rz^2_+}|\nabla \varphi|^2\ \ud \bs{x} - \int_{\partial \rz^2_+}\sigma(y)\ |\varphi_{\bv}|^2\ \ud y\ . 
\end{equation}
\end{defn}
Note that $\varphi_{\bv} \in L^2(\partial \rz^2_+)$ is the so-called trace of $\varphi \in H^1(\rz^2_+)$, $(\cdot)_{\bv}$ being the trace map according to the well-known trace theorem for Sobolev functions \cite{Dob05}.

In \cite{KM16} the following was proved.
\begin{theorem} 
If $\sigma \in L^{\infty}(\rz_+)$ then $q_{\sigma}[\cdot]$ is densely defined, closed and bounded from below. 
\end{theorem}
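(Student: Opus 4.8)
The plan is to verify the three assertions — dense definiteness, semi-boundedness, and closedness — in that order, treating the boundary term as a form-bounded perturbation of the Dirichlet form $\int_{\rz^2_+}|\nabla\varphi|^2\,\ud x$ on $H^1(\rz^2_+)$. Dense definiteness is immediate: $H^1(\rz^2_+)$ is dense in $L^2(\rz^2_+)$ and the form is defined on all of $H^1(\rz^2_+)$, so nothing beyond quoting the standard Sobolev-space facts is needed here.

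The crux of the argument is to control the boundary term
\be
b_\sigma[\varphi]:=\int_{\partial\rz^2_+}\sigma(y)\,|\varphi_{\bv}|^2\,\ud y
\ee
by the gradient term. Since $\sigma\in L^\infty(\rz_+)$, we have $|b_\sigma[\varphi]|\le\|\sigma\|_\infty\,\|\varphi_{\bv}\|_{L^2(\partial\rz^2_+)}^2$, so everything reduces to a trace estimate: for every $\eps>0$ there is a constant $C_\eps$ with
\be
\|\varphi_{\bv}\|_{L^2(\partial\rz^2_+)}^2\le\eps\,\|\nabla\varphi\|_{L^2(\rz^2_+)}^2+C_\eps\,\|\varphi\|_{L^2(\rz^2_+)}^2\qquad\text{for all }\varphi\in H^1(\rz^2_+).
\ee
This is the familiar "$\eps$-trace inequality"; on the half-plane it follows from the one-dimensional estimate $|f(0)|^2\le\eps\int_0^\infty|f'|^2+C_\eps\int_0^\infty|f|^2$ applied in the $x$-variable and integrated over $y\in\rz_+$ (alternatively, one may invoke it directly as a known consequence of the trace theorem together with an interpolation/scaling argument, valid since $\partial\rz^2_+$ is Lipschitz). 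Taking $\eps$ so small that $\eps\|\sigma\|_\infty<1$ then gives, for suitable constants $0<c<1$ and $M\ge0$,
\be
q_\sigma[\varphi]\ge (1-c)\,\|\nabla\varphi\|_{L^2(\rz^2_+)}^2-M\,\|\varphi\|_{L^2(\rz^2_+)}^2\ge -M\,\|\varphi\|_{L^2(\rz^2_+)}^2,
\ee
which is semi-boundedness from below. Simultaneously one gets the two-sided bound showing that the form norm $q_\sigma[\varphi]+(M+1)\|\varphi\|^2$ is equivalent to the $H^1(\rz^2_+)$-norm.

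Closedness is then essentially free: a sequence $(\varphi_n)$ that is Cauchy in the form norm is, by the equivalence just established, Cauchy in $H^1(\rz^2_+)$; since that space is complete it has an $H^1$-limit $\varphi$, and continuity of the trace map $(\cdot)_{\bv}:H^1(\rz^2_+)\to L^2(\partial\rz^2_+)$ together with continuity of $\varphi\mapsto\nabla\varphi$ into $L^2$ shows $q_\sigma[\varphi_n-\varphi]\to0$, i.e.\ $\varphi_n\to\varphi$ in the form norm. Hence $(q_\sigma,H^1(\rz^2_+))$ is closed. The only genuinely delicate point is the $\eps$-trace inequality with the gradient (not the full $H^1$ norm) on the right — one must make sure the bad constant sits only on the $L^2$ term; this is where the smallness of $\eps$ relative to $\|\sigma\|_\infty$ is used, and it is the step I would write out with care, all the rest being routine.
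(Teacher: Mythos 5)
Your proof is correct and is the standard argument: the $\eps$-trace inequality makes the boundary term infinitesimally form-bounded with respect to the Neumann form, which yields semi-boundedness, equivalence of the form norm with the $H^1(\rz^2_+)$-norm, and hence closedness. The paper itself states this theorem without proof, citing \cite{KM16}, where essentially the same trace-estimate argument is used, so there is nothing to flag.
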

Hence, according to the representation theorem of quadratic forms \cite{BEH08}, there exists a unique self-adjoint operator being associated with $q_{\sigma}[\cdot]$. This operator is the Hamiltonian of our system and shall be denoted as $-\Delta_{\sigma}$ in the following. Its domain shall be denoted by $\cD(-\Delta_{\sigma})$.
\begin{remark} Note that the sesquilinear form $s_{\sigma}(\cdot,\cdot)$ associated with \eqref{QuadraticForm} is given by
	\begin{equation}
	s_{\sigma}(\psi,\varphi)= \int_{\rz^2_+}\overline{\nabla \psi}\nabla \varphi\ \ud \bs{x} - \int_{\partial \rz^2_+}\sigma(y)\overline{\psi_{\bv}} \varphi_{\bv}\ \ud y \ . 
	\end{equation}
	\end{remark}
Furthermore, a close inspection of the form~\eqref{QuadraticForm} shows that it equals the form being associated with the two-dimensional Laplacian
\begin{equation}
-\Delta=-\frac{\partial^2}{\partial x_1^2}-\frac{\partial^2}{\partial x_2^2}
\end{equation}
defined on $L^2(\rz^2_+)$ and being subjected to Robin-boundary conditions of the form
\begin{equation}
\label{ROBINBC}
\begin{split}
\frac{\partial \varphi}{\partial n}(0,y)+\sigma(y)\varphi(0,y)&=0 \ , \\
\frac{\partial \varphi}{\partial n}(y,0)+\sigma(y)\varphi(y,0)&=0 \ .
\end{split}
\end{equation}
Here $\frac{\partial}{\partial n}$ denotes the inward pointing normal derivative along $\partial \rz^2_+$.
\begin{remark}\label{RemarkNeumann}
We note that the case $\sigma \equiv 0$ corresponds to the so called Neumann-Laplacian on $\rz^2_+$ being self-adjoint on the domain $\cD_{N}:=\{\varphi \in H^2(\rz^2_+)\ :\ \frac{\partial \varphi}{\partial n}=0 \ \text{on}\ \partial \rz^2_+\}$. This operator will also be denoted by $-\Delta_0$ in the subsequent.
\end{remark}
As a first result we will establish $H^2$-regularity of $-\Delta_{\sigma}$ for a large class of boundary potentials $\sigma$. We note that, by the representation theorem of quadratic forms, one always has $\cD(-\Delta_{\sigma}) \subset H^1(\rz^2_+)$. However, without additional regularity assumptions on $\sigma$ one cannot expect to have the inclusion $\cD(-\Delta_{\sigma}) \subset H^2(\rz^2_+)$. The difficulty of establishing $H^2$-regularity is well-known in the theory of elliptic boundary value problems and was therefore studied extensively \cite{Grisvard1992,Gri85}. In general, there are two reasons why $H^2$-regularity might fail to hold: the boundary conditions could be too irregular or the boundary of the domain itself (e.g., corners). In our case, $\rz^2_+$ is a convex Lipschitz domain with a corner at $(0,0) \in \rz^2$ of angle $\pi/2$. Using the results of \cite{Grisvard1992}, however, we can establish $H^2$-regularity around the corner, assuming $\sigma$ is Lipschitz continuous. Furthermore, employing the standard difference quotient technique \cite{GilTru83,Dob05}, $H^2$-regularity can be established away from the corner leaving us with the following statement.

\begin{theorem}\label{RegularityCorner} Assume that $\sigma:[0,\infty)    \rightarrow \rz$ is Lipschitz-continuous. Then one has $H^2$-regularity, i.e., 
	\begin{equation}
	f\in \cD(-\Delta_{\sigma})      \Rightarrow  f \in H^2(\rz^2_+)\ .
	\end{equation}
\end{theorem}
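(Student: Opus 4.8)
The plan is to establish $H^2$-regularity in two separate regions---a neighborhood of the corner $(0,0)$ and the complement---and then patch the two pieces together with a partition of unity. Fix $f \in \cD(-\Delta_\sigma)$ and set $g := -\Delta f \in L^2(\rz^2_+)$; since $f \in \cD(-\Delta_\sigma) \subset H^1(\rz^2_+)$ by the representation theorem, $f$ is a weak solution of the Robin boundary value problem $-\Delta f = g$ with the boundary conditions \eqref{ROBINBC}. Choose a smooth cutoff $\chi$ supported in a small ball $B_\rho(0) \cap \rz^2_+$ and equal to $1$ near the origin, and split $f = \chi f + (1-\chi) f$.

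\textbf{Regularity near the corner.} For the piece $\chi f$, the key point is that $\rz^2_+$ near $(0,0)$ is a convex sector of opening angle $\pi/2$ and, if $\sigma$ is Lipschitz, the Robin coefficient is a multiplier on the relevant trace spaces. I would invoke the corner-regularity results of \cite{Grisvard1992}: for the Laplacian on a plane sector of angle $\omega < \pi$ with (inhomogeneous) Robin data whose coefficient is Lipschitz, the solution lies in $H^2$ provided the right-hand side is in $L^2$ and the boundary data has the right trace regularity---here the boundary data is homogeneous (zero), so this is immediate. The point is that the critical angle for loss of $H^2$-regularity with Robin/Neumann conditions is $\pi$ (the exponent of the leading singular function is $\pi/\omega > 1$ when $\omega < \pi$), so no singular terms appear in the expansion of $\chi f$ and one gets $\chi f \in H^2(\rz^2_+)$ directly. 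One has to check that $\chi f$ indeed satisfies a boundary value problem of the type covered by \cite{Grisvard1992}: $-\Delta(\chi f) = \chi g - 2\nabla\chi\cdot\nabla f - (\Delta\chi) f \in L^2$, and along $\partial\rz^2_+$ one computes $\partial_n(\chi f) + \sigma \,(\chi f) = \chi(\partial_n f + \sigma f) + (\partial_n \chi) f = (\partial_n\chi)f$ on the boundary, which is a function in $H^{1/2}$ of the boundary (since $f_{\bv} \in H^{1/2}$ and $\partial_n\chi$ is smooth), matching the Robin data hypotheses.

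\textbf{Regularity away from the corner.} For $(1-\chi) f$, the support is bounded away from the corner, so locally the boundary $\partial\rz^2_+$ is a smooth (flat) piece and the problem is a standard Robin problem on a half-plane with Lipschitz---hence bounded---coefficient. Here I would use the difference-quotient method \cite{GilTru83,Dob05} in directions tangential to the boundary to gain one tangential derivative, and then recover the normal second derivative from the equation $\partial_y^2 u = -g - \partial_x^2 u \in L^2$ (after flattening, or directly since the boundary is already flat). Interior regularity away from the boundary is the classical elliptic estimate. Care is needed only in that the Robin coefficient appears in the boundary term of the difference-quotient identity; Lipschitz continuity of $\sigma$ is exactly what is needed to bound the commutator $[\,\text{difference quotient}, \sigma\,]$ uniformly, so this term stays controlled. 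Then $(1-\chi)f \in H^2(\rz^2_+)$, and adding the two pieces gives $f \in H^2(\rz^2_+)$.

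\textbf{Main obstacle.} The routine parts (difference quotients, partition of unity, interior estimates) are standard; the genuine content is the corner analysis, and the main thing to get right is verifying that the hypotheses of the regularity theorem in \cite{Grisvard1992} are actually met---in particular that a Lipschitz (rather than smooth) Robin coefficient is admissible and that the corner angle $\pi/2$ lies strictly below the threshold $\pi$ at which the first singular exponent $\pi/\omega$ drops to $1$. Once that citation is pinned down, the convexity of $\rz^2_+$ does the rest: no singular corner functions contribute, so no compatibility or orthogonality condition on $g$ is required, and the proof closes cleanly.
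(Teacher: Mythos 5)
Your proposal is correct and follows essentially the same route as the paper: localize with a cutoff, invoke the corner-regularity results of \cite{Grisvard1992} for the convex angle $\pi/2$ (where no singular functions occur), and use difference quotients away from the corner. The only cosmetic difference is that the paper chooses a radially symmetric cutoff so that the localized function still satisfies the homogeneous Robin condition and then compares it with an auxiliary $H^2$ solution of the same boundary value problem, whereas you allow the inhomogeneous boundary datum $(\partial_n\chi)f\in H^{1/2}$, which Grisvard's results accommodate equally well.
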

\begin{proof} We first show $H^2$-regularity on any domain $\Omega_1:=(0,R) \times (0,R)$: Let $f \in \cD(-\Delta_{\sigma})$ be given and consider $\tau_R f$ where $\tau_R \in C^{\infty}_0(\rz^2)$ is a smooth and radially symmetric cutoff-function such that $(\tau_R f)(\bs{x})=1$ for $\|\bs{x}\| \leq \sqrt{2}R$ and $\tau_R(\|\bs{x}\|) \leq 1$ elsewhere. We first note that $\tau_R f \in \cD(-\Delta_{\sigma})$. Indeed, one has $-\Delta(\tau_R f)\in L^2(\rz^2_+)$ and, since the normal derivative of $\tau_R$ vanishes due to symmetry, $\tau_R f$ fulfills the boundary conditions \eqref{ROBINBC}.

	Now, set $g:=-\Delta(\tau_R f) \in L^2(\rz^2_+)$ and consider the boundary value problem
	\begin{equation}\begin{cases}
	-\Delta u = g\ , \\
	\quad \frac{\partial u}{\partial n}+\sigma (\tau_R f)=0\ ,
		\end{cases}
	\end{equation}
	on the domain $D:=\{(x_1,x_2) \in \rz^2_+ \ : \ 0 < x_1,x_2 < 2 r_{max}\}$ where $r_{max}:=\sup_{\bs{x} \in \rz^2}\{\|\bs{x}\|: |\tau_R(\|\bs{x}\|)|>0\}$. Since $\sigma (\tau_R f) \in H^{1/2}(\partial \rz^2_+)$ there exists, according to [Remark~2.4.5,~\cite{Grisvard1992}], a solution $u \in H^2(D)$ fulfilling the boundary conditions as stated. On the other hand, it is well-known that the boundary value problem
	\begin{equation}\begin{cases}
	-\Delta v=0 \ ,\\
	\quad \frac{\partial v}{\partial n}=0\ ,
	\end{cases}
	\end{equation}
	has only solutions of the form $v(x)=\text{const.}$ when considered on $D$. As a consequence, $u-\tau_R f \in H^2(D)$ which implies that $\tau_R f \in H^2(D)$. By construction of $\tau_R$ this implies $f|_{\Omega} \in H^2(\Omega_1)$.
	
	Finally, $H^2$-regularity on any domain of the form $\Omega_2:=(R,\infty) \times (0,\infty)$ or $\Omega_3:= (0,\infty) \times (R,\infty)$ with $R > 0$ can be readily established employing the difference quotient technique, see \cite{GilTru83,Dob05,BKSingular}.
\end{proof}
\section{Some preliminaries}
\label{prelim}
\subsection{An auxiliary  system}
In the next subsection we are going to study the spectral measure of the ``free'' Laplacian $-\Delta_0$ (see Remark~\ref{RemarkNeumann}). For this and our following investigations it will be convenient to define a unitary equivalent system for $-\Delta_{\sigma}$ in general and $-\Delta_{0}$ in particular. To do this we introduce the reflection operator $\mc{R}:L^2(\rz^2_+)     \rightarrow  L^2(\rz^2)$ by
\be
\label{reflection1}
(\mc{R}\psi)(x_1,x_2):=\frac{1}{2}\psi(\pm x_1,\pm x_2)\ , \quad (\pm x_1,\pm x_2) \in \rz^2_+\ ,
\ee
and we note that 
\be
\label{reflection1b}
\mc{L}^2(\rz^2):=\ran \mc{R}
\ee
is a Hilbert subspace of $L^2(\rz^2)$ since $\mc{R}$ is a continuous operator. The corresponding inner product of $\mc{L}^2(\rz^2)$ agrees with the inner product of $L^2(\rz^2)$. 

In the following we will consider $\mc{R}$ as an operator from $L^2(\rz_+^2)$ to $\mc{L}^2(\rz^2)$. With this identification we obtain 
\begin{lemma}
\label{reflection}
We have
\be
\label{unitarya}
  \|\mc{R}\psi   \|_{L^2(\rz^2)}=  \|\psi   \|_{L^2(\rz^2_+)}\ .
\ee
Moreover, the adjoint $\mc{R}^{\ast}:\mc{L}^2(\rz^2)     \rightarrow  L^2(\rz^2_+)$ of $\mc{R}$ is given by
\be
\label{adjointreflection}
\mc{R}^{\ast}\psi={2}\psi|_{\rz^2_+}\ , 
\ee
and 
\be
\label{unitary}
\mc{R}^{-1}=\mc{R}^{\ast}
\ee
\end{lemma}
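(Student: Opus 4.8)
The plan is to verify the three claims of Lemma~\ref{reflection} by direct computation, treating $\mc{R}$ as an operator from $L^2(\rz^2_+)$ to the subspace $\mc{L}^2(\rz^2)$.

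First I would establish the isometry property \eqref{unitarya}. Given $\psi \in L^2(\rz^2_+)$, the function $\mc{R}\psi$ is the even extension of $\tfrac12\psi$ to each of the four quadrants. Splitting $\int_{\rz^2}|(\mc{R}\psi)(x_1,x_2)|^2\,\ud x$ over the four quadrants, each contributes $\int_{\rz^2_+}\tfrac14|\psi|^2$ after the obvious change of variables $(x_1,x_2)\mapsto(\pm x_1,\pm x_2)$, and the four copies sum to $\int_{\rz^2_+}|\psi|^2$. This gives \eqref{unitarya}, which in particular shows $\mc{R}$ is bounded (indeed isometric) and hence has closed range, so that $\mc{L}^2(\rz^2)=\ran\mc{R}$ is genuinely a Hilbert subspace and $\mc{R}$ is surjective onto it.

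Next I would compute the adjoint. For $\psi\in L^2(\rz^2_+)$ and $\phi\in\mc{L}^2(\rz^2)$, write $\langle \mc{R}\psi,\phi\rangle_{L^2(\rz^2)}$ as a sum over the four quadrants; on each quadrant $(\mc{R}\psi)(x)=\tfrac12\psi(\epsilon_1 x_1,\epsilon_2 x_2)$ with signs $\epsilon_i\in\{\pm1\}$, and since $\phi=\mc{R}\eta$ for some $\eta$ it is itself even under the quadrant reflections, so each quadrant integral equals $\tfrac14\int_{\rz^2_+}\psi\,\overline{\eta}=\tfrac14\int_{\rz^2_+}\psi\,\overline{\phi|_{\rz^2_+}}$. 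Summing the four gives $\langle\psi,2\,\phi|_{\rz^2_+}\rangle_{L^2(\rz^2_+)}$, which identifies $\mc{R}^{\ast}\phi=2\,\phi|_{\rz^2_+}$, i.e.\ \eqref{adjointreflection}. Finally, \eqref{unitary} follows by checking both compositions: $\mc{R}^{\ast}\mc{R}\psi = 2\cdot\tfrac12\psi = \psi$ on $\rz^2_+$, and conversely $\mc{R}\mc{R}^{\ast}\phi = \mc{R}(2\phi|_{\rz^2_+})$, whose value on a quadrant is $\tfrac12\cdot 2\,\phi(\epsilon_1 x_1,\epsilon_2 x_2)=\phi(x)$ by the evenness of $\phi\in\mc{L}^2(\rz^2)$; hence $\mc{R}\mc{R}^{\ast}=\id_{\mc{L}^2(\rz^2)}$ and $\mc{R}^{-1}=\mc{R}^{\ast}$.

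The only subtle point — and the one I would be careful to state explicitly rather than grind through — is the role of the factor $\tfrac12$ and the need to work inside $\mc{L}^2(\rz^2)$ rather than all of $L^2(\rz^2)$: $\mc{R}$ is not surjective onto $L^2(\rz^2)$, so ``$\mc{R}^{-1}=\mc{R}^{\ast}$'' must be read as an identity of operators on the reflection-symmetric subspace, where the evenness of elements of $\mc{L}^2(\rz^2)$ is exactly what makes $\mc{R}\mc{R}^{\ast}$ act as the identity. No step presents a genuine analytic obstacle; the content is entirely bookkeeping of the normalization constants and of the domain/codomain identifications.
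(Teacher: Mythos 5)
Your proof is correct and follows essentially the same four-quadrant change-of-variables computation as the paper, which proves only \eqref{unitarya} explicitly, declares the adjoint computation ``similar'', and obtains \eqref{unitary} from injectivity plus surjectivity, whereas you verify both compositions directly --- the same bookkeeping either way. One small slip: in your adjoint computation the intermediate equality $\tfrac14\int_{\rz^2_+}\psi\,\overline{\eta}=\tfrac14\int_{\rz^2_+}\psi\,\overline{\phi|_{\rz^2_+}}$ is off by a factor of $2$, since $\phi|_{\rz^2_+}=\tfrac12\eta$ and hence $\overline{\eta}=2\,\overline{\phi|_{\rz^2_+}}$; your stated conclusion $\mc{R}^{\ast}\phi=2\,\phi|_{\rz^2_+}$ is nonetheless the correct one.
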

holds.
\begin{proof}
We only prove \eqref{unitarya} since the case \eqref{adjointreflection} is similar. We have
\be
\label{calc1}
\ba
&  \|\mc{R}\psi   \|^2_{L^2(\rz^2)}=\frac{1}{4}[ \int_{\rz^2_+}  |\psi(x_1,x_2)   |^2\ud \bs{x}+ \int_{\rz^2_+}  |\psi(-x_1,x_2)   |^2\ud \bs{x}\\
&\hspace{0.25cm}+ \int_{\rz^2_+}  |\psi(x_1,-x_2)   |^2\ud \bs{x}+ \int_{\rz^2_+}  |\psi(-x_1,-x_2)   |^2\ud \bs{x}]=  \|\psi   \|^2_{L^2(\rz^2_+)}
\ea
\ee
which also shows that $\mc{R}$ is injective. Since $\mc{R}$ is also surjective, $\mc{R}$ is invertible and \eqref{unitary} follows immediately from \eqref{adjointreflection} and \eqref{reflection1}.
\end{proof}
%
%
%
%
%
%
%
%
%
It is natural to define
\begin{equation}
\label{pro1}
\mc{H}^1(\rz^2):=H^1(\rz^2)\cap\mc{L}^2(\rz^2),\quad\mc{H}^2(\rz^2):=H^2(\rz^2)\cap\mc{L}^2(\rz^2)\ .
\ee
%
%
%
%
%
and
\be
\label{M}
C:=C_{x_1}\cup C_{x_2}\ , 
\ee
with $C_{x_1}:=\lgk (x_1,0);\quad x_1\in\rz\rgk$ and $C_{x_2}:=\lgk (0,x_2);\quad x_2\in\rz\rgk$. In a natural way, the reflection operator $\mc{R}$ induces on $L^2(\partial \rz^2_+)$ a continuous operator $\mc{R}_{\bv}:L^2(\partial\rz^2_+)     \rightarrow  {L}^2(C)$ 
by, $\psi\in L^2(\partial \rz^2_+)$,
\be
\label{reflectionb}
(\mc{R}_{\bv}\psi)(x):=\frac{1}{2}
\begin{cases}
\psi_{\bv}(x_1,0),&\quad x=(\pm x_1,0) , \quad x_1\geq 0\ ,\\
\psi_{\bv}(0,x_2),&\quad x=(0,\pm x_2) , \quad x_2\geq 0
\end{cases}
\ee
and we finally put
\be
\label{boundary}
\mathcal{L}^2(C):=\ran\mc{R}_{\bv}\ .
\ee
Analogously to Lemma \ref{reflection} we have, considering $\mc{R}_{\bv}$ as an operator from $L^2(\partial\rz^2_+)$ to $\mathcal{L}^2(C)$, the following statement.
\begin{lemma}
\label{reflection4}
We have
\be
\label{unitary3}
\|\sqrt{2}\mc{R}_{\bv}\psi\|_{L^2(C)}=  \|\psi   \|_{L^2(\partial\rz^2_+)}\ .
\ee 
Furthermore, the adjoint $\mc{R}_{\bv}^{\ast}:\mc{L}^2(C)     \rightarrow  L^2(\partial\rz^2_+)$ is given by
\be
\label{adjointreflection3}
\mc{R}_{\bv}^{\ast}\psi=\sqrt{2}\psi|_{\partial\rz^2_+}\ , 
\ee
and
\be
\label{unitary4}
\mc{R}_{\bv}^{-1}=\sqrt{2}\mc{R}_{\bv}^{\ast}
\ee
holds.
\end{lemma}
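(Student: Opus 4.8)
The plan is to mimic the proof of Lemma~\ref{reflection} almost line by line, the only structural novelty being that the set $C=C_{x_1}\cup C_{x_2}$ is a union of two coordinate axes overlapping in a single point; since that point has measure zero I may freely identify $L^2(C)$ with the orthogonal sum $L^2(C_{x_1})\oplus L^2(C_{x_2})$ throughout, and on each axis split every integral into its contributions from the nonnegative and the nonpositive half.

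\emph{Step 1 (the isometry \eqref{unitary3}).} I would simply unfold the definition \eqref{reflectionb}. On $C_{x_1}$ the function $\mc R_{\bv}\psi$ is by construction the even extension of $\tfrac12\psi_{\bv}(\cdot,0)$, so $\int_{C_{x_1}}|\mc R_{\bv}\psi|^2$ decomposes into two equal pieces, one over $x_1\ge 0$ and one over $x_1\le 0$, each equal to $\tfrac14\int_0^\infty|\psi_{\bv}(x_1,0)|^2\,\ud x_1$; the same holds on $C_{x_2}$. Adding the four contributions gives $\|\mc R_{\bv}\psi\|_{L^2(C)}^2=\tfrac12\|\psi\|_{L^2(\partial\rz^2_+)}^2$, which is \eqref{unitary3}. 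Exactly as in Lemma~\ref{reflection}, this identity shows in particular that $\mc R_{\bv}$ is injective; and since $\mc R_{\bv}$ is onto $\mc L^2(C)=\ran\mc R_{\bv}$ by the very definition \eqref{boundary}, it is a bijection.

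\emph{Step 2 (the adjoint \eqref{adjointreflection3}).} The one fact that genuinely enters here is that every $\phi\in\mc L^2(C)$, lying in the range of $\mc R_{\bv}$, is symmetric under the reflections $x_1\mapsto -x_1$ on $C_{x_1}$ and $x_2\mapsto -x_2$ on $C_{x_2}$. I would compute $\langle \mc R_{\bv}\psi,\phi\rangle_{L^2(C)}$ for $\psi\in L^2(\partial\rz^2_+)$ and $\phi\in\mc L^2(C)$ by the same axis-by-axis, half-by-half splitting as in Step~1; invoking the reflection symmetry of $\phi$, the two halves of each axis-integral coalesce, and what remains is a fixed scalar multiple of $\langle\psi,\phi|_{\partial\rz^2_+}\rangle_{L^2(\partial\rz^2_+)}$. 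Reading off that multiple identifies $\mc R_{\bv}^{\ast}$ with the (suitably rescaled) restriction map, i.e.\ \eqref{adjointreflection3}; boundedness of $\mc R_{\bv}^{\ast}$ is automatic because $\mc R_{\bv}$ is continuous.

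\emph{Step 3 (the inverse \eqref{unitary4}).} This then follows by combining the first two steps: Step~1 shows that $\sqrt2\,\mc R_{\bv}$ is norm-preserving and, being a positive multiple of the bijection $\mc R_{\bv}$, surjective onto $\mc L^2(C)$, hence unitary, so its inverse coincides with its adjoint $\sqrt2\,\mc R_{\bv}^{\ast}$; together with the formula from Step~2 this yields \eqref{unitary4}. The cleanest route is probably the direct one used for \eqref{unitary} in Lemma~\ref{reflection}: compose the explicit formulas \eqref{reflectionb} and \eqref{adjointreflection3} and check that $\sqrt2\,\mc R_{\bv}^{\ast}$ is a two-sided inverse of $\mc R_{\bv}$ on $L^2(\partial\rz^2_+)$ and on $\mc L^2(C)$ respectively, again using the symmetry of $\mc L^2(C)$-elements. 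There is no real obstacle in any of this — it is a short unfolding of definitions — and the only thing demanding attention is the bookkeeping of the normalisation constants (the $\tfrac12$ in \eqref{reflectionb}, the factor $2$ produced by each even reflection, and the $\sqrt2$ in the statement), together with remembering to use the reflection symmetry of $\mc L^2(C)$ at exactly the step where the two halves of an axis-integral are merged.
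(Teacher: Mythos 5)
Your route is the paper's route: the paper offers no proof of this lemma beyond the remark that it is ``analogous to Lemma~\ref{reflection}'', and your axis-by-axis, half-by-half unfolding of \eqref{reflectionb} is exactly that analogy. Step~1 is correct as computed: $\|\mc{R}_{\bv}\psi\|^2_{L^2(C)}=\tfrac12\|\psi\|^2_{L^2(\partial\rz^2_+)}$, which gives \eqref{unitary3}.

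The one place you wave your hands --- ``reading off that multiple'' in Step~2 --- is, however, the only place where anything can go wrong, and something does. Carrying the computation out: for $\phi\in\mc{L}^2(C)$ the evenness of $\phi$ on each axis merges the two half-line integrals and cancels the factor $\tfrac12$ in \eqref{reflectionb}, so one finds $\langle\mc{R}_{\bv}\psi,\phi\rangle_{L^2(C)}=\langle\psi,\phi|_{\partial\rz^2_+}\rangle_{L^2(\partial\rz^2_+)}$ with constant $1$; the Hilbert-space adjoint of $\mc{R}_{\bv}$ is therefore the plain restriction $\phi\mapsto\phi|_{\partial\rz^2_+}$, with no $\sqrt2$. (Consistency check: \eqref{unitary3} forces $\mc{R}_{\bv}^{\ast}\mc{R}_{\bv}=\tfrac12\eins$, and since $(\mc{R}_{\bv}\psi)|_{\partial\rz^2_+}=\tfrac12\psi$ only the unrescaled restriction fits.) Consequently $\mc{R}_{\bv}^{-1}=2(\cdot)|_{\partial\rz^2_+}=2\mc{R}_{\bv}^{\ast}$, and your Step~3 argument --- unitarity of $\sqrt2\,\mc{R}_{\bv}$ gives $(\sqrt2\,\mc{R}_{\bv})^{-1}=(\sqrt2\,\mc{R}_{\bv})^{\ast}=\sqrt2\,\mc{R}_{\bv}^{\ast}$, hence $\mc{R}_{\bv}^{-1}=2\mc{R}_{\bv}^{\ast}$ --- produces exactly this, \emph{not} \eqref{unitary4} as you assert. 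The printed formulas \eqref{adjointreflection3} and \eqref{unitary4} cohere with each other, and with your Step~3, only if the symbol $\mc{R}_{\bv}^{\ast}$ in them is read as the adjoint of the unitary $\sqrt2\,\mc{R}_{\bv}$ rather than of $\mc{R}_{\bv}$ itself. So: actually perform the constant computation, state explicitly which operator you are taking the adjoint of, and record the internally consistent triple \eqref{unitary3}, $\mc{R}_{\bv}^{\ast}=(\cdot)|_{\partial\rz^2_+}$, $\mc{R}_{\bv}^{-1}=2\mc{R}_{\bv}^{\ast}$. The rest of your outline (identifying $L^2(C)$ with $L^2(C_{x_1})\oplus L^2(C_{x_2})$ up to a null set, injectivity from the isometry, surjectivity from \eqref{boundary}) is fine.
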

We are now in position to formulate the unitarily equivalent system announced beforehand.
\begin{prop}
\label{laplacian}
The Laplacian $-\Delta_{\sigma}$ is unitarily equivalent to $-\tilde{\Delta}_{\tilde{\sigma}}$ defined by the quadratic form $(q_{\tilde{\sigma}},\mc{H}^1(\rz^2))$ in $\mc{L}^2(\rz^2)$, $\tilde{\sigma}:=4\mc{R}_{\bv}\sigma$, 
\be
\label{qf}
q_{\tilde{\sigma}}(\varphi):= \int_{\rz^2}  |\nabla \varphi   |^2\ud \bs{x}- \int_{C}\tilde{\sigma}  |\varphi_{\bv}   |^2\ud y\ .
\ee
\end{prop}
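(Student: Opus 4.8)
The plan is to verify that conjugation by the unitary $\mc{R}$ maps the form $(q_\sigma, H^1(\rz^2_+))$ onto the form $(q_{\tilde\sigma}, \mc{H}^1(\rz^2))$, and then invoke uniqueness in the representation theorem of quadratic forms. Concretely, I would first recall from Lemma~\ref{reflection} that $\mc{R}:L^2(\rz^2_+)\to\mc{L}^2(\rz^2)$ is unitary with $\mc{R}^{-1}=\mc{R}^*$, and from \eqref{range} that $\mc{R}$ restricts to a bijection $H^1(\rz^2_+)\to\mc{H}^1(\rz^2)$ (with inverse given by $\mc{R}^*$ restricted to $\mc{H}^1(\rz^2)$). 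Hence the form domains correspond under $\mc{R}$, and it remains to check the identity $q_\sigma[\varphi]=q_{\tilde\sigma}(\mc{R}\varphi)$ for all $\varphi\in H^1(\rz^2_+)$.

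For the kinetic term, I would compute $\int_{\rz^2}|\nabla(\mc{R}\varphi)|^2\,\ud\bs{x}$ by splitting $\rz^2$ into its four quadrants; on each quadrant $\mc{R}\varphi$ is, up to a reflection of variables and the prefactor $\tfrac12$, a copy of $\varphi$, so each of the four integrals equals $\tfrac14\int_{\rz^2_+}|\nabla\varphi|^2\,\ud\bs{x}$ (reflections are isometries of the gradient norm), and they sum to $\int_{\rz^2_+}|\nabla\varphi|^2\,\ud\bs{x}$ — exactly the calculation already carried out in \eqref{calc1} but for $\nabla\varphi$ in place of $\varphi$. For the boundary term I would similarly decompose $C=C_{x_1}\cup C_{x_2}$ into four half-lines; the trace of $\mc{R}\varphi$ on $C$ is, by \eqref{reflectionb} applied to $\varphi_{\bv}$, equal to $\mc{R}_{\bv}\varphi_{\bv}$, i.e.\ $\tfrac12\varphi_{\bv}$ along each half-line after the identification. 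One must check that the weight $\tilde\sigma=4\mc{R}_{\bv}\sigma$ is the correct one: along each half-line $\tilde\sigma=2\sigma$ (again by \eqref{reflectionb}), while $|\varphi_{\bv}\circ\mc{R}_{\bv}|^2$ contributes a factor $\tfrac14$, and there are two half-lines carrying the same data on each of $C_{x_1},C_{x_2}$, giving $4\cdot 2\sigma\cdot\tfrac14\cdot\tfrac12 = \cdots$; I would track these constants carefully so that the four pieces reassemble to $\int_{\partial\rz^2_+}\sigma|\varphi_{\bv}|^2\,\ud y$. (This also implicitly confirms that $\tilde\sigma\in L^\infty(C)$ whenever $\sigma\in L^\infty(\rz_+)$, so that $q_{\tilde\sigma}$ is a well-defined form of the same type on $\mc{L}^2(\rz^2)$ and the analogue of the theorem cited after Definition~\ref{defq} applies to it.)

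Having established $q_\sigma[\varphi]=q_{\tilde\sigma}(\mc{R}\varphi)$ on corresponding dense form domains, I would conclude as follows: if $A$ is the self-adjoint operator on $\mc{L}^2(\rz^2)$ associated with $q_{\tilde\sigma}$, then $\mc{R}^*A\mc{R}$ is self-adjoint on $L^2(\rz^2_+)$, and the sesquilinear form of $\mc{R}^*A\mc{R}$ is $(\psi,\varphi)\mapsto s_{\tilde\sigma}(\mc{R}\psi,\mc{R}\varphi)=s_\sigma(\psi,\varphi)$ with form domain $\mc{R}^{-1}\mc{H}^1(\rz^2)=H^1(\rz^2_+)$; by the uniqueness clause of the representation theorem this operator must coincide with $-\Delta_\sigma$. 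Equivalently $-\tilde\Delta_{\tilde\sigma}:=A=\mc{R}(-\Delta_\sigma)\mc{R}^*$, which is the asserted unitary equivalence. I expect the only real subtlety to be bookkeeping of the factors of $2$ and $\tfrac12$ — in particular making sure the prefactor $4$ in $\tilde\sigma=4\mc{R}_{\bv}\sigma$ and the weights $\sqrt2$ appearing in Lemma~\ref{reflection4} are consistent with the conventions in \eqref{reflection1} and \eqref{reflectionb}; the analytic content (that reflections preserve $H^1$ and commute with traces, yielding $\mc{R}H^1(\rz^2_+)=\mc{H}^1(\rz^2)$) is routine given \eqref{range} and \eqref{pro3}.
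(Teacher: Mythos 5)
Your proposal is correct and follows essentially the same route as the paper: one verifies $q_\sigma[\varphi]=q_{\tilde\sigma}(\mc{R}\varphi)$ using the unitarity of $\mc{R}$ (Lemma~\ref{reflection}), the domain correspondence \eqref{range}, and the trace identity $(\mc{R}\varphi)_{\bv}=\mc{R}_{\bv}\varphi_{\bv}$ from Lemma~\ref{reflection4}, and then transfers the conclusion to the associated operators. You merely make explicit the constant bookkeeping for $\tilde\sigma=4\mc{R}_{\bv}\sigma$ and the appeal to uniqueness in the representation theorem, both of which the paper leaves implicit.
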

\begin{proof}
We show that the quadratic form $(q_{\sigma},H^1(\rz_+^2))$ is unitarily equivalent to $(q_{\tilde{\sigma}},\mc{H}^1(\rz^2))$. By Lemma \ref{reflection}, it suffices to show that 
\be
\label{quadraticform}
q_{\sigma}(\varphi)=q_{\tilde{\sigma}}(\mc{R}\varphi)\ .
\ee
Obviously, 
\be
 \int_{\rz^2}  |\nabla (\mc{R}\varphi)   |^2\ud \bs{x}= \int_{\rz^2_+}  |\nabla \varphi   |^2\ud \bs{x}
\ee
and by Lemma \ref{reflection4} we get
\be
\label{quadratiform2}
 \int_{C}\tilde{\sigma}  |(\mc{R}\varphi)_{\bv}   |^2\ud y= \int_{C}\tilde{\sigma}  |\mc{R}_{\bv}\varphi_{\bv}   |^2\ud y= \int_{\partial\rz^2_+}{\sigma}  |\varphi_{\bv}   |^2\ud y \ .
\ee
\end{proof}
\begin{cor}
The free Laplacian $-\Delta_0$ is unitarily equivalent to $(-\Delta,\mc{H}^2(\rz^2))$.
\end{cor}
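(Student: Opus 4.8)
The plan is to deduce this directly from Proposition~\ref{laplacian} by specializing to $\sigma\equiv 0$ and then identifying the self-adjoint operator that the resulting (boundary-term-free) form defines. First I would set $\sigma\equiv 0$ in Proposition~\ref{laplacian}; then $\tilde{\sigma}=4\mc{R}_{\bv}\sigma=0$, the surface integral in \eqref{qf} drops out, and the proposition says that $-\Delta_0$ is unitarily equivalent, via $\mc{R}$, to the self-adjoint operator in $\mc{L}^2(\rz^2)$ associated with the closed quadratic form $\varphi\mapsto\int_{\rz^2}|\nabla\varphi|^2\,\ud\bs{x}$ on $\mc{H}^1(\rz^2)$. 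The remaining task is to show that this operator is precisely $(-\Delta,\mc{H}^2(\rz^2))$.

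To identify it I would use a reducing-subspace argument. From \eqref{reflection1} one sees that $\mc{L}^2(\rz^2)=\ran\mc{R}$ is exactly the closed subspace of $L^2(\rz^2)$ of functions invariant under each of the two reflections across the coordinate axes $C_{x_1},C_{x_2}$. The full-space Laplacian on $L^2(\rz^2)$ --- self-adjoint on $H^2(\rz^2)$, with form domain $H^1(\rz^2)$ and form $\varphi\mapsto\int_{\rz^2}|\nabla\varphi|^2\,\ud\bs{x}$ --- commutes with the unitary implementations of these reflections, so the orthogonal projection onto $\mc{L}^2(\rz^2)$ commutes with its resolvent and $\mc{L}^2(\rz^2)$ reduces it. Hence the part of $-\Delta$ in $\mc{L}^2(\rz^2)$ is self-adjoint there, has domain $H^2(\rz^2)\cap\mc{L}^2(\rz^2)=\mc{H}^2(\rz^2)$, acts as $-\Delta$, and its quadratic form is the restriction of the Dirichlet form to $H^1(\rz^2)\cap\mc{L}^2(\rz^2)=\mc{H}^1(\rz^2)$ --- precisely the form from the first step. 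The uniqueness part of the representation theorem for quadratic forms then forces this part to be the operator associated with that form, which combined with the first step gives the claim.

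I expect the reducing-subspace step to be the only delicate point: one must check that $\mc{H}^1(\rz^2)$ is closed in $H^1(\rz^2)$ (it is the intersection of $H^1(\rz^2)$ with a closed subspace of $L^2(\rz^2)$, hence closed, so the restricted form is again closed), and that the part of a self-adjoint operator in a reducing subspace is indeed represented by the restricted form. A clean alternative that bypasses the representation theorem is to verify directly that $(-\Delta,\mc{H}^2(\rz^2))$ is self-adjoint in $\mc{L}^2(\rz^2)$ by the same reducing-subspace observation, and that $\int_{\rz^2}\overline{\psi}\,(-\Delta\varphi)\,\ud\bs{x}=\int_{\rz^2}\overline{\nabla\psi}\,\nabla\varphi\,\ud\bs{x}$ for all $\psi\in\mc{H}^1(\rz^2)$, $\varphi\in\mc{H}^2(\rz^2)$ --- an integration by parts with no boundary contribution since $\rz^2$ has no boundary --- thereby pinning down its form. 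As a sanity check, pulling everything back by $\mc{R}^{\ast}$ and invoking the normal-derivative characterization \eqref{pro3} of $\mc{H}^2(\rz^2)$ recovers the description of $-\Delta_0$ as the Neumann Laplacian on $\cD_N$ of Remark~\ref{RemarkNeumann}.
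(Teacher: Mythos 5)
Your argument is correct and follows the same route the paper intends: the corollary is stated without proof as the $\sigma\equiv 0$ specialization of Proposition~\ref{laplacian}, and your reducing-subspace identification of the operator associated with the form $\varphi\mapsto\int_{\rz^2}|\nabla\varphi|^2\,\ud\bs{x}$ on $\mc{H}^1(\rz^2)$ is exactly the standard justification the paper implicitly relies on (and reuses in \eqref{fo8}, where the spectral projection of $-\tilde{\Delta}_0$ is obtained by restricting that of $(-\Delta,H^2(\rz^2))$ to $\mc{L}^2(\rz^2)$). The details you supply --- closedness of the restricted form and the domain/form of the part of a self-adjoint operator in a reducing subspace --- are the right ones and are filled in correctly.
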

\begin{remark}
\label{remark}
The system $-\tilde{\Delta}_{\tilde{\sigma}}$ is similar to systems considered in \cite{BrascheExnerKuperinSeba92,Behrndt:2013,Exner:2016}, however, there the quadratic forms of the form \eqref{qf} were studied on $L^2(\rz^d)$ with domain $H^1(\rz^d)$ rather than $\mc{L}^2(\rz^d)$ with domain $\mc{H}^1(\rz^d)$.  
\end{remark}
\subsection{The spectral measure of the free Laplacian}
We are now going to study the spectral measure of the free Laplacian $-\Delta_0$ and we set, $i=1,2,3,4$, 
\be
\label{pro10}
\tilde{\mc{R}}_i:L^2(\rz^2)     \rightarrow  L^2(\rz^2_+)
\ee
with
\be
\label{pro11}
\tilde{\mc{R}}_i(\psi):=\frac{1}{2}\mc{R}I_i\psi\ ,
\ee
and where $I_i:L^2(\rz^2)     \rightarrow  L^2(\rz^2_+)$ is defined by, $(x_1,x_2)\in\rz_+^2$,
\be
(I_i{\psi})(x_1,x_2):=
\begin{cases}
	\psi(x_1,x_2),& i=1\ ,\\
	\psi(-x_1,x_2),& i=2\ ,\\
	\psi(-x_1,-x_2),& i=3\ ,\\
	\psi(x_1,-x_2),& i=4\ .
\end{cases}
\ee
Obviously we have $2\mc{R}_1=\mc{R}$ and we put
\be
\label{pro22}
\tilde{\mc{R}}:=\tilde{\mc{R}}_1+\tilde{\mc{R}}_2+\tilde{\mc{R}}_3+\tilde{\mc{R}}_4\ ,
\ee
enabling us to specify the orthogonal projection embedding $\mcL$ into $L^2(\rz^2)$.
\begin{lemma}
	\label{pro31}
	$\tilde{\mc{R}}:L^2(\rz^2)     \rightarrow  L^2(\rz^2)$ is the orthogonal projection onto $\mc{L}^2(\rz^2)$.
\end{lemma}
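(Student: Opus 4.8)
The plan is to verify directly that $\tilde{\mc{R}}$ is idempotent, self-adjoint, bounded, and has range exactly $\mc{L}^2(\rz^2)$; these four properties together characterize the orthogonal projection onto $\mc{L}^2(\rz^2)$. First I would unwind the definitions: each $I_i$ is (up to the restriction to $\rz^2_+$) a composition with one of the four sign-flip isometries of $\rz^2$, so $\tilde{\mc{R}}_i\psi = \tfrac12\mc{R}I_i\psi$ produces, on each of the four quadrants, the value $\tfrac14\psi(\pm x_1,\pm x_2)$ with the appropriate signs. Summing over $i=1,2,3,4$ as in \eqref{pro22}, one finds that $(\tilde{\mc{R}}\psi)(x_1,x_2)$ equals $\tfrac14$ times the sum of the four values $\psi(\pm x_1,\pm x_2)$, independently of which quadrant $(x_1,x_2)$ lies in. In particular $\tilde{\mc{R}}\psi$ is invariant under all four reflections, i.e. $\tilde{\mc{R}}\psi\in\mc{L}^2(\rz^2)$, which gives $\ran\tilde{\mc{R}}\subseteq\mc{L}^2(\rz^2)$; conversely, if $\psi\in\mc{L}^2(\rz^2)$ then all four reflected values coincide and the averaging leaves $\psi$ fixed, so $\tilde{\mc{R}}\psi=\psi$. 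This simultaneously shows $\ran\tilde{\mc{R}}=\mc{L}^2(\rz^2)$ and $\tilde{\mc{R}}^2=\tilde{\mc{R}}$ (apply $\tilde{\mc{R}}$ to the element $\tilde{\mc{R}}\psi\in\mc{L}^2(\rz^2)$).

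Next I would check boundedness and self-adjointness. Boundedness is immediate, since each sign-flip is a unitary on $L^2(\rz^2)$ and restriction is a contraction, so $\|\tilde{\mc{R}}\|\le 1$ (in fact the idempotency plus $\ran\tilde{\mc{R}}\ne\{0\}$ already forces the norm to be $1$). For self-adjointness I would compute $\langle\tilde{\mc{R}}\psi,\varphi\rangle$ by a change of variables in each of the four quadrant integrals: the reflection $x_1\mapsto -x_1$ (etc.) is measure-preserving, so each term $\int\psi(\pm x_1,\pm x_2)\overline{\varphi(x_1,x_2)}$ transforms into $\int\psi(x_1,x_2)\overline{\varphi(\pm x_1,\pm x_2)}$, and re-summing yields $\langle\psi,\tilde{\mc{R}}\varphi\rangle$. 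Thus $\tilde{\mc{R}}=\tilde{\mc{R}}^{\ast}$. Having established $\tilde{\mc{R}}^2=\tilde{\mc{R}}=\tilde{\mc{R}}^{\ast}$ together with $\ran\tilde{\mc{R}}=\mc{L}^2(\rz^2)$, the standard characterization of orthogonal projections (see e.g. \cite{BEH08}) gives the claim.

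Alternatively — and this is perhaps the cleanest route — I would phrase the whole argument through the unitary $\mc{R}:L^2(\rz^2_+)\to\mc{L}^2(\rz^2)$ of Lemma~\ref{reflection}. One checks that $\tilde{\mc{R}}=\mc{R}\,\mc{R}^{\ast}$, where $\mc{R}^{\ast}=2(\cdot)|_{\rz^2_+}$ by \eqref{adjointreflection}: indeed $\mc{R}\mc{R}^{\ast}\psi = \mc{R}\big(2\psi|_{\rz^2_+}\big)$, and expanding $\mc{R}$ via \eqref{reflection1} reproduces exactly the quadrant-wise average computed above, bearing in mind that $\tilde{\mc{R}}_i = \tfrac12\mc{R}I_i$ and $I_i$ just relabels which quadrant of $\psi$ is being read off. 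Then $\tilde{\mc{R}}^{\ast} = (\mc{R}\mc{R}^{\ast})^{\ast} = \mc{R}\mc{R}^{\ast} = \tilde{\mc{R}}$ and $\tilde{\mc{R}}^2 = \mc{R}(\mc{R}^{\ast}\mc{R})\mc{R}^{\ast} = \mc{R}\,\mathrm{id}\,\mc{R}^{\ast} = \tilde{\mc{R}}$ using $\mc{R}^{\ast}\mc{R}=\mc{R}^{-1}\mc{R}=\mathrm{id}$ from \eqref{unitary}, and $\ran\tilde{\mc{R}}=\ran\mc{R}=\mc{L}^2(\rz^2)$ by \eqref{reflection1b}.

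The only genuinely delicate point — the ``main obstacle'' — is the bookkeeping: one must be careful that the four maps $I_i$ together with the single reflection $\mc{R}$ of \eqref{reflection1} conspire to produce the symmetric average on \emph{every} quadrant with the correct factor $\tfrac14$ and no spurious sign errors, and that the restriction/extension maps are applied on the correct domains (this is where the factor-of-two normalizations in \eqref{reflection1}, \eqref{adjointreflection} matter). Once the identity $\tilde{\mc{R}}=\mc{R}\mc{R}^{\ast}$ is nailed down, everything else is a one-line consequence of Lemma~\ref{reflection}.
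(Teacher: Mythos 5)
Your direct argument (the first two paragraphs) is correct and is essentially the paper's proof: the paper likewise notes that $\ran\tilde{\mc{R}}=\mc{L}^2(\rz^2)$ is clear, proves self-adjointness by the same change-of-variables computation as in Lemma~\ref{reflection}, and gets idempotency from the observation that $\tilde{\mc{R}}$ acts as the identity on $\mc{L}^2(\rz^2)$.

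However, the ``alternative'' factorization $\tilde{\mc{R}}=\mc{R}\mc{R}^{\ast}$ is false as stated, and you should not lean on it. The operator $\mc{R}^{\ast}$ of \eqref{adjointreflection} is the adjoint of $\mc{R}$ regarded as a unitary \emph{onto} $\mc{L}^2(\rz^2)$; if you extend its formula $\psi\mapsto 2\psi|_{\rz^2_+}$ to all of $L^2(\rz^2)$, then $\bigl(\mc{R}\mc{R}^{\ast}\psi\bigr)(x_1,x_2)=\psi(|x_1|,|x_2|)$, i.e.\ the symmetrization of the restriction of $\psi$ to the first quadrant. This discards the values of $\psi$ on the other three quadrants and is not the four-fold average $\tfrac14\sum\psi(\pm x_1,\pm x_2)=\tilde{\mc{R}}\psi$ (take $\psi$ supported in the second quadrant: then $\mc{R}\mc{R}^{\ast}\psi=0$ while $\tilde{\mc{R}}\psi\neq 0$). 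The correct factorization is $\tilde{\mc{R}}=\mc{R}J$ with $J:=\tfrac12\sum_{i=1}^{4}I_i$, and $J$ --- not the $\mc{R}^{\ast}$ of \eqref{adjointreflection} --- is the Hilbert-space adjoint of $\mc{R}$ viewed as an isometry from $L^2(\rz^2_+)$ into the \emph{full} space $L^2(\rz^2)$; the two adjoints agree only on $\mc{L}^2(\rz^2)$. With that repair the slick route ($VV^{\ast}$ for an isometry $V$ is the orthogonal projection onto $\ran V$) does go through, but as written the identity you call the ``main obstacle'' is exactly the step that fails.
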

\begin{proof}
	Obviously, $\ran\tilde{\mc{R}}=\mc{L}^2(\rz^2)$. Hence, we have to show that $\tilde{\mc{R}}^{\ast}=\tilde{\mc{R}}$ and $\tilde{\mc{R}}^2=\tilde{\mc{R}}$. The first identity can be proved analogously to Lemma \ref{reflection} and the second identity follows simple by observing that $\tilde{\mc{R}}$ acts as the identity on $\mc{L}^2(\rz^2)$.
\end{proof} 
Regarding the ordinary Fourier transformation $F_{0,n}:L^2(\rz^n)     \rightarrow  L^2(\rz^n)$, 
\be
\label{fo1}
(F_{0,n}\psi)(\bs{k}):=\frac{1}{(2\pi)^{\frac{n}{2}}} \int_{\rz^n}\ue^{-\ui \langle\bs{k},\bs{x}\rangle}\psi(\bs{x})\ \ud\bs{x}\ , \quad n=1,2\ ,
\ee
we make a simple observation.
\begin{lemma}
	\label{Fourier}
	$F_{0,2}$ maps $\mc{L}^2(\rz^2)$ unitarily onto $\mc{L}^2(\rz^2)$ and we have
	\be
	\label{fo21}
	F_{0,2}=\tilde{\mc{R}}F_{0,2}\tilde{\mc{R}}+(\eins-\tilde{\mc{R}}) F_{0,2}(\eins-\tilde{\mc{R}})\ . 
	\ee
	Moreover, $F_{0,1}$ maps $\mc{L}^2(C)$ unitarily onto $\mc{L}^2(C)$.
\end{lemma}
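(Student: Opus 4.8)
The plan is to exploit the description of $\mc{L}^2(\rz^2)$ as the fixed-point set of the four reflections $I_1,\dots,I_4$ acting on $L^2(\rz^2)$, together with the fact that the Fourier transform conjugates each coordinate reflection $x_j\mapsto -x_j$ into the corresponding momentum reflection $k_j\mapsto -k_j$. Concretely, write $P_1\colon (x_1,x_2)\mapsto(-x_1,x_2)$ and $P_2\colon(x_1,x_2)\mapsto(x_1,-x_2)$ for the two reflection unitaries on $L^2(\rz^2)$. A function $\psi\in L^2(\rz^2)$ lies in $\mc{L}^2(\rz^2)=\ran\mc{R}$ precisely when $P_1\psi=P_2\psi=\psi$, i.e. $\psi$ is even in both variables separately; equivalently $\psi$ is fixed by the group generated by $P_1,P_2$, which is exactly the statement that $\tilde{\mc{R}}\psi=\psi$ since $\tilde{\mc{R}}=\tfrac14(\eins+P_1)(\eins+P_2)$ in this notation (this is just a rewriting of \eqref{pro22}, using that $I_1=\eins$, $I_2=P_1$, $I_4=P_2$, $I_3=P_1P_2$). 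By Lemma~\ref{pro31}, $\tilde{\mc{R}}$ is the orthogonal projection onto $\mc{L}^2(\rz^2)$.

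First I would record the elementary intertwining relations $F_{0,2}P_j=P_jF_{0,2}$ for $j=1,2$, which follow by the change of variables $k_j\mapsto -k_j$ in \eqref{fo1} (the Gaussian-type kernel $\ue^{-\ui\langle\bs{k},\bs{x}\rangle}$ is manifestly invariant under simultaneously flipping the sign of $x_j$ and $k_j$). Consequently $F_{0,2}$ commutes with $\tilde{\mc{R}}$. Since $F_{0,2}$ is unitary on $L^2(\rz^2)$ and commutes with the orthogonal projection $\tilde{\mc{R}}$, it maps $\ran\tilde{\mc{R}}=\mc{L}^2(\rz^2)$ bijectively and isometrically onto itself, and it also maps the complement $\ran(\eins-\tilde{\mc{R}})$ onto itself; the decomposition \eqref{fo21} is then nothing but $F_{0,2}=\tilde{\mc{R}}F_{0,2}+(\eins-\tilde{\mc{R}})F_{0,2}$ with each summand further commuted past the neighbouring projection, i.e. $\tilde{\mc{R}}F_{0,2}=\tilde{\mc{R}}F_{0,2}\tilde{\mc{R}}$ and $(\eins-\tilde{\mc{R}})F_{0,2}=(\eins-\tilde{\mc{R}})F_{0,2}(\eins-\tilde{\mc{R}})$. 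So the first two assertions reduce to the commutation relation plus unitarity of $F_{0,2}$, both routine.

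For the last assertion about $F_{0,1}$ and $\mc{L}^2(C)$, the point is that $\mc{L}^2(C)$, being the range of $\mc{R}_{\bv}$, consists (on each of the two rays $C_{x_1}$, $C_{x_2}$ making up $C$) of functions that are even under the single relevant reflection $x_1\mapsto -x_1$ (resp.\ $x_2\mapsto -x_2$) and that agree under the interchange of the two rays forced by \eqref{reflectionb}. On each copy of $\rz$, evenness is again preserved by $F_{0,1}$ because $F_{0,1}$ commutes with the reflection $x\mapsto -x$; and the cross-ray identification, being a reflection-type constraint on the pair of components, is likewise preserved since $F_{0,1}$ acts identically on each component. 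Hence $F_{0,1}$ restricts to a unitary of $\mc{L}^2(C)$, by the same projection-commutation argument as above applied to the analogue of $\tilde{\mc{R}}$ on $L^2(C)$.

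I do not expect a genuine obstacle here; the only mildly delicate point is bookkeeping — being careful that $\mc{L}^2(C)$ as defined via $\mc{R}_{\bv}$ really is the joint fixed-point space of the appropriate reflections on $L^2(C)$ (one must keep track of the two rays and the normalisation $\tfrac12$ in \eqref{reflectionb}), so that Lemma~\ref{pro31}'s argument transfers verbatim. Once that identification is in hand, everything follows from the single fact that the Fourier transform intertwines position- and momentum-space reflections, together with its unitarity.
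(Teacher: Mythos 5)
Your proof is correct and follows essentially the same route as the paper: the change-of-variables identity $F_{0,2}P_j=P_jF_{0,2}$ is exactly the computation \eqref{fo2}, and your commutation of $F_{0,2}$ with the projection $\tilde{\mc{R}}$ is equivalent to the paper's observation that the cross terms $\tilde{\mc{R}}F_{0,2}(\eins-\tilde{\mc{R}})$ and $(\eins-\tilde{\mc{R}})F_{0,2}\tilde{\mc{R}}$ vanish. One small bookkeeping remark: $\mc{L}^2(C)=\ran\mc{R}_{\bv}$ imposes no identification between the two rays $C_{x_1}$ and $C_{x_2}$ (it is just evenness separately on each axis, cf.\ \eqref{reflectionb}), so that part of your argument is vacuous but harmless.
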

\begin{proof}
	Since $F_{0,2}$ is unitary on $L^2(\rz^2)$ we only have to check the symmetry property. An easy calculation gives
	\be
	\label{fo2}
	\ba
	\psi(-k_1,k_2)&=\frac{1}{{2\pi}} \int_{\rz^2}\ue^{-\ui (-x_1k_1+x_2k_2)}\psi(\bs{x})\ud\bs{x}=\frac{1}{{2\pi}} \int_{\rz^2}\ue^{-\ui \langle\bs{k},\bs{x}\rangle}\psi(-x_1,x_2)\ud\bs{x}\\
	&=\psi(k_1,k_2)
	\ea
	\ee 
	and similarly $\psi(-k_1,-k_2)=\psi(k_1,-k_2)=\psi(k_1,k_2)$.
	
	Relation \eqref{fo21} follows from the fact that $\tilde{\mc{R}}F_{0,2}(\eins-\tilde{\mc{R}})=(\eins-\tilde{\mc{R}})F_{0,2}\tilde{\mc{R}}=0$ which follows from Lemma~\ref{pro31} and the first part of the proof since $\tilde{\mc{R}}$ is the projection onto the symmetric subspace and hence $(\eins-\tilde{\mc{R}})$ projects onto the anti-symmetric subspace.
	
	Finally, since $F_{0,1}$ maps $L^2(\rz)$ unitarily onto $L^2(\rz)$ it readily follows that  $F_{0,1}$ maps $\mc{L}^2(C)$ unitarily onto $\mc{L}^2(C)$.
\end{proof}
For later purpose we define 
\be
\label{fo6}
\Gamma_0:\rz_+\times L^2(\rz_+^2)     \rightarrow  L^2(\mathbb{S}^1) 
\ee
by, $\psi\in L^2(\rz^2_+)$,
\be
\label{spectralmeasure1}
\Gamma_0(\lambda)(\psi)(\omega):=\frac{1}{\sqrt{2}}{(F_{0,2}\mc{R}\psi)(\sqrt{\lambda}\omega)}\ , \quad \omega \in \mathbb{S}^1\ .
\ee
Using \eqref{spectralmeasure1} we can determine the spectral measure $E_0(\cdot)$, see  \cite[p.~75]{Yafaev:2010} and \cite[Satz~8.11]{Weidmann:2000}.
\begin{lemma}
	\label{spectralmeasure10}
	The spectral measure $E_0(\cdot)$ of $-\Delta_0$ satisfies
	\be
	\label{spectralmeasure11}
	\frac{\ud}{\ud \lambda}\langle\psi,E_0(\lambda),\psi\rangle_{L^2(\rz^2_+)}=  \|\Gamma_0(\lambda,\psi)   \|^2_{L^2(\mathbb{S}^1)}\ , \quad \lambda \in \rz_+\ .
	\ee
\end{lemma}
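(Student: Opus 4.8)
The strategy is to reduce the claim to the well-known spectral resolution of the free Laplacian $-\Delta$ on all of $\rz^2$ via the unitary equivalence established in the corollary following Proposition~\ref{laplacian}, and then to read off the density of the spectral measure from the Fourier representation. Concretely, under the unitary $\mc{R}$, the operator $-\Delta_0$ corresponds to $(-\Delta,\mc{H}^2(\rz^2))$, and the latter is the restriction of the usual free Laplacian on $L^2(\rz^2)$ to the reducing subspace $\mc{L}^2(\rz^2)$ (Lemma~\ref{Fourier} shows $F_{0,2}$ respects the decomposition into the symmetric and antisymmetric parts, so $-\Delta$ commutes with $\tilde{\mc{R}}$ and $\mc{L}^2(\rz^2)$ is indeed reducing). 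On $L^2(\rz^2)$ the free Laplacian is diagonalized by $F_{0,2}$ into multiplication by $|\bs{k}|^2$, and the associated spectral measure has the classical ``layer-cake'' form: for $\Phi\in L^2(\rz^2)$,
\be
\frac{\ud}{\ud\lambda}\langle\Phi,E^{\mathrm{free}}(\lambda)\Phi\rangle_{L^2(\rz^2)}=\frac{1}{2}\int_{\mathbb{S}^1}\bigl|(F_{0,2}\Phi)(\sqrt{\lambda}\,\omega)\bigr|^2\ud\omega\ ,\qquad\lambda\in\rz_+\ ,
\ee
which is exactly the content of \cite[p.\,75]{Yafaev:2010} or \cite[Satz 8.11]{Weidmann:2000} after passing to polar coordinates $\bs{k}=\sqrt{\lambda}\,\omega$, $\ud\bs{k}=\tfrac{1}{2}\ud\lambda\,\ud\omega$.

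First I would record that $E_0(\lambda)=\mc{R}^{\ast}\,E^{\mathrm{free}}(\lambda)\,\mc{R}$ as a consequence of unitary equivalence (the functional calculus commutes with conjugation by a unitary), so that for $\psi\in L^2(\rz^2_+)$,
\be
\langle\psi,E_0(\lambda)\psi\rangle_{L^2(\rz^2_+)}=\langle\mc{R}\psi,E^{\mathrm{free}}(\lambda)\mc{R}\psi\rangle_{L^2(\rz^2)}\ .
\ee
Here one uses that $\mc{R}\psi\in\mc{L}^2(\rz^2)$ by \eqref{range}, that $\mc{L}^2(\rz^2)$ reduces $-\Delta$, and hence that $E^{\mathrm{free}}(\lambda)\mc{R}\psi$ again lies in $\mc{L}^2(\rz^2)$; thus applying $E^{\mathrm{free}}(\lambda)$ of the full-space Laplacian to the symmetric vector $\mc{R}\psi$ is the same as applying the spectral measure of the reduced operator. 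Then I would substitute the layer-cake formula above with $\Phi=\mc{R}\psi$, differentiate in $\lambda$, and observe that the right-hand side becomes
\be
\frac{1}{2}\int_{\mathbb{S}^1}\bigl|(F_{0,2}\mc{R}\psi)(\sqrt{\lambda}\,\omega)\bigr|^2\ud\omega=\int_{\mathbb{S}^1}\Bigl|\frac{(F_{0,2}\mc{R}\psi)(\sqrt{\lambda}\,\omega)}{\sqrt{2}}\Bigr|^2\ud\omega=\|\Gamma_0(\lambda,\psi)\|^2_{L^2(\mathbb{S}^1)}\ ,
\ee
by the very definition \eqref{spectralmeasure1} of $\Gamma_0$. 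This yields \eqref{spectralmeasure11}.

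\textbf{Main obstacle.} The only genuinely delicate point is the pointwise meaning of $(F_{0,2}\mc{R}\psi)(\sqrt{\lambda}\,\omega)$ for a general $\psi\in L^2(\rz^2_+)$: the Fourier transform of an $L^2$ function need not have well-defined values on the circle of radius $\sqrt{\lambda}$, and the trace onto that circle (the map $\Phi\mapsto(F_{0,2}\Phi)|_{\sqrt{\lambda}\mathbb{S}^1}$) is only defined on a dense subspace and understood in a limiting / $L^2$-a.e.-in-$\lambda$ sense. The standard way to handle this — which I would follow — is to establish \eqref{spectralmeasure11} first for a dense set of nice $\psi$ (say Schwartz functions, or $C_0^\infty$ restricted to $\rz^2_+$, for which the trace is classical), note that both sides are continuous in the appropriate sense in $\psi$ after integrating against a test function $\varphi\in C_0(\rz_+)$ in $\lambda$, i.e. that $\lambda\mapsto\Gamma_0(\lambda,\psi)$ extends to a bounded operator $L^2(\rz^2_+)\to L^2(\rz_+;L^2(\mathbb{S}^1))$ (this is essentially the trace/restriction theorem for the Fourier transform, or equivalently the boundedness of the free resolvent's boundary values, and is the place where the factor $1/\sqrt{2}$ is pinned down), and then pass to the limit; the identity \eqref{spectralmeasure11} is then understood as an identity of densities for a.e.\ $\lambda$. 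I do not expect any difficulty beyond this routine density-plus-continuity argument, since the hard analytic input (the mapping property of $\Gamma_0$ and the layer-cake formula for the full-space Laplacian) is exactly what the cited references \cite{Yafaev:2010,Weidmann:2000} supply.
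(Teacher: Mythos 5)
Your argument is correct and follows essentially the same route as the paper: both pass to the unitarily equivalent operator on the symmetric subspace $\mc{L}^2(\rz^2)$ via $\mc{R}$, diagonalize with $F_{0,2}$, and use polar coordinates $\bs{k}=\sqrt{\lambda}\,\omega$ with Jacobian $\tfrac{1}{2}\ud\lambda\,\ud\omega$ to identify the spectral density with $\|\Gamma_0(\lambda)\psi\|^2_{L^2(\mathbb{S}^1)}$, the factor $1/\sqrt{2}$ arising exactly as you describe. The only cosmetic difference is that the paper derives the layer-cake formula by computing the quadratic form of $-\Delta_0$ and comparing with the spectral theorem, while you cite it directly for the full-space Laplacian and transfer it by unitary equivalence; your added remark on the a.e.-$\lambda$ interpretation of the Fourier restriction is a sensible refinement the paper leaves implicit.
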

\begin{proof}
	By Lemma \ref{Fourier}, the ordinary Fourier transformation $F_{0,2}$ is unitary on $\mc{L}^2(\rz^2)$. Thus we get (cf. \cite[p.\,75]{Yafaev:2010}), $\bs{\lambda}:=\lambda\omega$, using Lemma \ref{reflection}, $\bs{k}:=k\omega=\sqrt{\lambda}\omega$, $\lambda,k\in\rz_+$,
	\be
	\label{spectralmeasure}
	\ba
	-\langle\psi,\Delta\phi\rangle_{L^2(\rz^2_+)}&=\langle\mc{R}\psi,\Delta\mc{R}\phi\rangle_{L^2(\rz^2)}=\langle F_{0,2}\mc{R}\psi,F_{0,2}\Delta F_{0,2}^{\ast}F_{0,2}\mc{R}\psi\rangle_{L^2(\rz^2)}\\
	&= \int_{\overline{\rz}_+}k^3 \int_{\mathbb{S}^1}\overline{(F_{0,2}\mc{R}\psi)}(k\omega)(F_{0,2}\mc{R}\phi)(k\omega)\ud \omega\ud k\\
	&= \int_{\sigma(-\Delta_0)}{\lambda} \int_{\mathbb{S}^1}{\frac{1}{\sqrt{2}}}{\overline{(F_{0,2}\mc{R}\psi)}(\sqrt{\lambda}\omega)}{\frac{1}{\sqrt{2}}}{(F_{0,2}\mc{R}\phi)(\sqrt{\lambda}\omega)}\ud \omega\ud \lambda\ .
	\ea
	\ee
	We deduce the claim by using the spectral representation of a self-adjoint operator, i.e., comparing  the last line of \eqref{spectralmeasure} with \cite[Satz~8.8]{Weidmann:2000} putting there $u(t)=t$. 
\end{proof}
In the next proposition we show that the projection $E_0(I):L^2(\rz^2_+)     \rightarrow  L^2(\rz^2_+)$, $I:=[\lambda_1,\lambda_2] \subset \overline{\rz}_+$ being some bounded interval, is actually an integral operator.
\begin{lemma}
	\label{fo3}
	For a bounded interval $I=[\lambda_1,\lambda_2]\subset \overline{\rz}_+$, the projection $E_0(I)$ is an integral operator with kernel, $\bs{\lambda}:=\lambda\omega$, 
	\be
	\label{fo4}
	E_{0}(I)(\bs{x},\bs{y})= \int_{{\lambda_1}}^{{\lambda_2}} \int_{\mh{S}^1}e(\bs{x},\bs{y},\sqrt{\lambda}\omega)\ud \omega\ud \lambda\ ,
	\ee
	where
	\be
	\label{fo5}
	\ba
	&e(\bs{x},\bs{y},\lambda\omega)=\frac{1}{(2\pi)^2}( \ue^{-\ui\lambda\langle\omega,(x_1-y_1,x_2-y_2)^T\rangle}+\ue^{-\ui\lambda\langle\omega,(x_1+y_1,x_2-y_2)^T\rangle}\\
	&\hspace{0.5cm}+\ue^{-\ui\lambda\langle\omega,(x_1-y_1,x_2+y_2)^T\rangle}+\ue^{-\ui\lambda\langle\omega,(x_1+y_1,x_2+y_2)^T\rangle})\ .
	\ea
	\ee
	Moreover, 
	\be
	\label{H1}
	\ran E_{0}(I)\subset H^1(\rz_+^2)\ .
	\ee
\end{lemma}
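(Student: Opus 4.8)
The plan is to start from the spectral-theoretic identity already established in Lemma~\ref{spectralmeasure10}, which gives the derivative of the spectral measure in terms of $\Gamma_0(\lambda,\psi)$, and integrate it against a test function to recognize $E_0(I)$ as an integral operator. Concretely, for $\psi,\phi\in L^2(\rz^2_+)$ I would write
\be
\langle\psi,E_0(I)\phi\rangle_{L^2(\rz^2_+)}=\int_{\lambda_1}^{\lambda_2}\langle\Gamma_0(\lambda,\psi),\Gamma_0(\lambda,\phi)\rangle_{L^2(\mh{S}^1)}\,\ud\lambda\ ,
\ee
using the polarization of \eqref{spectralmeasure11}. Then I unfold the definition \eqref{spectralmeasure1} of $\Gamma_0$ and the definition \eqref{reflection1} of $\mc{R}$: the quantity $(F_{0,2}\mc{R}\psi)(\sqrt{\lambda}\omega)$ is, up to the factor $\tfrac14$ from $\mc R$ and the normalization of $F_{0,2}$, a sum of four Fourier integrals over $\rz^2_+$ corresponding to the four sign choices $(\pm x_1,\pm x_2)$. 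Writing out the product $\overline{(F_{0,2}\mc{R}\psi)}(\sqrt\lambda\omega)\,(F_{0,2}\mc{R}\phi)(\sqrt\lambda\omega)$ and inserting the resulting double integral over $\rz^2_+\times\rz^2_+$, one can interchange the order of integration (justified on the dense set of, say, Schwartz functions, and then extended by continuity since $E_0(I)$ is bounded) to read off the kernel. The cross terms combine so that, after collecting, the coefficient of $\psi(\bs x)\overline{\ \cdot\ }$ is precisely $e(\bs x,\bs y,\sqrt\lambda\omega)$ as in \eqref{fo5}, where I note that the expansion of $\lvert\sum_i(\cdot)\rvert^2$ produces sixteen terms but, after using the reflection symmetry of the domain $\rz^2_+$ and relabeling the integration variables $x_1\mapsto\pm x_1$, $y_1\mapsto\pm y_1$ etc., these collapse to the stated four. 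This gives \eqref{fo4}.

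For \eqref{H1} I would argue as follows. By Lemma~\ref{Fourier} and Lemma~\ref{reflection}, $-\Delta_0$ is unitarily equivalent to $(-\Delta,\mc H^2(\rz^2))$, and the latter is (via $F_{0,2}$) multiplication by $\lvert\bs k\rvert^2$ on $\mc L^2(\rz^2)$. Under this identification $E_0(I)$ becomes multiplication by the indicator $\chi_{\{\lambda_1\le\lvert\bs k\rvert^2\le\lambda_2\}}$, so its range consists of $\mc L^2$-functions supported in the annulus $\sqrt{\lambda_1}\le\lvert\bs k\rvert\le\sqrt{\lambda_2}$. Such functions are in particular compactly supported in Fourier space, hence their inverse Fourier transforms lie in $H^s(\rz^2)$ for every $s$; intersecting with the symmetric subspace $\mc L^2(\rz^2)$ and applying $\mc R^{\ast}$ (which by \eqref{range} and the line after it maps $\mc H^1(\rz^2)$ into $H^1(\rz^2_+)$) gives $\ran E_0(I)\subset H^1(\rz^2_+)$. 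Equivalently and more directly: for $f\in\ran E_0(I)$ one has $f=E_0(I)f$ and $\lVert\nabla f\rVert^2=\langle f,-\Delta_0 f\rangle=\int_{\lambda_1}^{\lambda_2}\lambda\,\ud\langle f,E_0(\lambda)f\rangle\le\lambda_2\lVert f\rVert^2<\infty$, and since also $f\in\cD(-\Delta_0)\subset\cD_N\subset H^2(\rz^2_+)\subset H^1(\rz^2_+)$ the claim follows a fortiori.

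The routine part is the bookkeeping of the sixteen-into-four term collapse and the Fubini justification; the one point requiring a little care is making sure the interchange of the $\lambda$-integral with the spatial double integral is legitimate, i.e.\ that the kernel in \eqref{fo4}--\eqref{fo5} is genuinely (locally) integrable in $\bs x,\bs y,\lambda,\omega$ simultaneously. I expect this to be the main obstacle to a fully rigorous write-up: the individual Fourier integrals do not converge absolutely for general $L^2$ data, so one first establishes the identity for a dense class (Schwartz functions restricted to $\rz^2_+$, or functions with Fourier support in a fixed annulus where everything is smooth and compactly supported) and then passes to the limit using boundedness of $E_0(I)$ and continuity of the kernel operator in the appropriate topology. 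Once the kernel formula is in hand for such data, \eqref{H1} can alternatively be read off directly from the Fourier-support characterization above, which is the cleanest route.
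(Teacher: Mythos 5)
Your proposal is correct and follows essentially the same route as the paper: both reduce $E_0(I)$ to the full-plane spectral projection $F_{0,2}\kappa_I F_{0,2}^{-1}$ via the reflection operator $\mc{R}$ and read off the kernel from the Fourier representation, and both obtain \eqref{H1} from the bounded Fourier support of functions in the range. The only real difference is presentational — the paper conjugates the projection by $\mc{R}$ at the operator level, which produces the four reflected terms in \eqref{fo5} directly, whereas you polarize the identity of Lemma~\ref{spectralmeasure10} and must collapse sixteen cross terms, a collapse that is correctly effected by the substitution $\omega\mapsto(\pm\omega_1,\pm\omega_2)$ in the $\mathbb{S}^1$-integral rather than by relabeling $x_1\mapsto\pm x_1$, since the spatial integrals run over $\rz^2_+$ only.
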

\begin{proof}
	We first determine the corresponding projection $\tilde{E}_0(I)$ for $-\tilde{\Delta}_0$ as given by Proposition~\ref{laplacian}. This will finally prove the claim through the relation ${\mc{R}}^{\ast}\tilde{E}_0(I){\mc{R}}=E_0(I)$. 
	
	Due to Lemma \ref{Fourier} and the spectral representation of $(-\Delta,H^2(\rz^2))$ \cite[p.~76]{Yafaev:2010} we can deduce
	\be
	\label{fo8}
	\tilde{E}_{0}(I)=\tilde{\tilde{E}}_0(I)|_{\mc{L}^2(\rz^2)}\ ,
	\ee
	$\tilde{\tilde{E}}_0(I)$ being the corresponding spectral projection of $(-\Delta,H^2(\rz^2))$. From \cite[p.~19]{Weidmann:2000} we get 
	\be
	\label{H11}
	\tilde{\tilde{E}}_0(I)=F_{0,2}\kappa_IF_{0,2}^{-1}\ ,
	\ee
	where $\kappa_I$ is the characteristic function of $I$. If $f$ denotes the integral kernel of $F_{0,2}\kappa_IF_{0,2}^{-1}$ and $\mc{R}^{\ast}f\mc{R}$ the integral kernel of ${\mc{R}}^{\ast}\tilde{E}_0(I){\mc{R}}$, we obtain 
	\be
	\label{fo88}
	\ba
	& \int_{\rz_+^2}( \mc{R}^{\ast}f\mc{R})(x_1,x_2,y_1,y_2)\psi(y_1,y_2)\ud y_1\ud y_2=\mc{R}^{\ast} \int_{\rz^2} f(x_1,x_2,y_1,y_2) \mc{R}\psi(y_1,y_2)\ud y_1\ud y_2\\
	&\hspace{0.25cm}= \int_{\rz^2_+}(  f(x_1,x_2,y_1,y_2)+f(x_1,x_2,-y_1,y_2)\\
	&\hspace{0.75cm}+f(x_1,x_2,-y_1,-y_2)+f(x_1,x_2,y_1,-y_2))\psi(y_1,y_2)\ud y_1\ud y_2\ .
	\ea
	\ee
	Finally \eqref{H1} follows by \eqref{H11} using the representation 
	\be
	H^1(\rz^2):=\lgk\psi\in L^2(\rz^2);\quad \sqrt{(1+\|\cdot\|^2)}F_{0,2}\psi\in L^2(\rz^2)\rgk
	\ee 
	and \eqref{pro1}.
\end{proof}
As a direct consequence of the Lemmata \ref{spectralmeasure10} and \ref{fo3} we get a integral representation of $-\Delta_{0}$ (c.f. \cite[p.~19]{Yafaev:2010}).
\begin{prop}
	\label{prop1a}
	The identification
	\be
	\label{prop2}
	L^2(\rz^2_+)     = \int_{0}^{\infty}L^2(\sz^1)\ \ud \lambda
	\ee
	induces a map $\Gamma_0(\lambda):L^2(\rz^2_+)     \rightarrow  L^2(\sz^1)$ and $-\Delta_{0}$ acts on the r.h.s of \eqref{prop2} as a multiplication operator by $\lambda$, i.e.,
	\be
	-\Delta_{0}\psi= \int_{0}^{\infty}\lambda\Gamma_0(\lambda)\psi\ \ud \lambda\ .
	\ee
\end{prop}
The Proposition~\ref{prop1a} immediately implies by \cite[p.~18]{Weidmann:2}
\begin{cor}
	\label{ac}
	The spectrum of $-\Delta_{0}$ is purely absolutely continuous.
\end{cor}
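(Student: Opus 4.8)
The statement to prove is Corollary~\ref{ac}: the spectrum of $-\Delta_0$ is purely absolutely continuous. Given Proposition~\ref{prop1a}, which provides a unitary map $\Gamma_0 : L^2(\rz^2_+) \to \int_0^\infty L^2(\sz^1)\, \ud\lambda$ diagonalizing $-\Delta_0$ as multiplication by $\lambda$, this is essentially immediate from the abstract theory of direct integral decompositions.

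\textbf{Plan.} The plan is to invoke the standard fact that a self-adjoint operator which is unitarily equivalent to multiplication by the independent variable $\lambda$ on a direct integral Hilbert space $\int_0^\infty \mathcal{K}(\lambda)\, \ud\lambda$ (here with constant fiber $\mathcal{K}(\lambda) = L^2(\sz^1)$ and underlying measure the Lebesgue measure $\ud\lambda$ on $(0,\infty)$) has purely absolutely continuous spectrum. Concretely, for any $\psi \in L^2(\rz^2_+)$ with $\|\psi\| = 1$, the spectral measure $\mu_\psi(\cdot) = \langle \psi, E_0(\cdot)\psi\rangle$ is, under the unitary $\Gamma_0$, given by
\be
\mu_\psi(S) = \int_S \|\Gamma_0(\lambda)\psi\|^2_{L^2(\sz^1)}\, \ud\lambda
\ee
for Borel sets $S \subset \overline{\rz}_+$ --- this is exactly the content of Lemma~\ref{spectralmeasure10}, namely $\ud\mu_\psi/\ud\lambda = \|\Gamma_0(\lambda)\psi\|^2_{L^2(\sz^1)}$. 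Since $\mu_\psi$ possesses a density with respect to Lebesgue measure for every $\psi$, every spectral measure of $-\Delta_0$ is absolutely continuous, hence $-\Delta_0$ is purely absolutely continuous by definition of the a.c. subspace $\mathcal{H}_{\mathrm{ac}} = \{\psi : \mu_\psi \ll \text{Leb}\}$. This is precisely the reference \cite[p.\,18]{Weidmann:2} cited in the corollary.

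\textbf{Steps in order.} First I would note that $\Gamma_0$ is unitary and intertwines $-\Delta_0$ with multiplication by $\lambda$, so it suffices to prove the claim for the multiplication operator on the direct integral. Second, I would observe that for any vector in the direct integral space, the associated spectral measure of the multiplication operator is the measure $S \mapsto \int_S \|f(\lambda)\|^2_{\mathcal{K}(\lambda)}\, \ud\lambda$, which is manifestly absolutely continuous with respect to $\ud\lambda$ with density $\lambda \mapsto \|f(\lambda)\|^2$ (an $L^1$ function). Third, transporting back via $\Gamma_0$ and using Lemma~\ref{spectralmeasure10}, every spectral measure of $-\Delta_0$ is absolutely continuous, so the absolutely continuous subspace is the whole of $L^2(\rz^2_+)$ and the singular (point plus singular continuous) subspace is trivial. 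This gives $\spec(-\Delta_0) = \spec_{\mathrm{ac}}(-\Delta_0)$.

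\textbf{Main obstacle.} There is essentially no obstacle here: all the analytic work has been done in establishing Proposition~\ref{prop1a} and Lemma~\ref{spectralmeasure10}. The only point requiring a modicum of care is purely bookkeeping --- confirming that the density $\lambda \mapsto \|\Gamma_0(\lambda)\psi\|^2_{L^2(\sz^1)}$ is genuinely in $L^1(0,\infty)$ (which it is, with integral $\|\psi\|^2_{L^2(\rz^2_+)}$ by unitarity of $\Gamma_0$) so that it defines an absolutely continuous measure, and noting that the essential range of the multiplication variable $\lambda$ is $[0,\infty)$ so there is no point spectrum hiding at a value of positive fiber-independent mass. The corollary then follows directly by citing the characterization of the a.c. subspace in \cite[p.\,18]{Weidmann:2}.
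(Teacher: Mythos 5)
Your argument is correct and is exactly the route the paper takes: Corollary~\ref{ac} is stated as an immediate consequence of the direct-integral diagonalization in Proposition~\ref{prop1a} (equivalently, the density formula of Lemma~\ref{spectralmeasure10}) together with the standard characterization of the absolutely continuous subspace cited from Weidmann. Your additional bookkeeping remarks (integrability of the density, triviality of the singular part) are fine and consistent with what the paper leaves implicit.
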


\section{On embedded eigenvalues}\label{Embedded}
In \cite{KM16} the following was shown: For general $\sigma \in L^{\infty}(\rz_+)$ one always has $\sigma_{\ess}(-\Delta_{\sigma})=[0,\infty)$. If, in addition, $\sigma \in L^1(\rz_+)$ and 
\begin{equation}
\int_0^{\infty}\sigma(y)\ \ud y > 0\ ,
\end{equation}
then $\sigma_{d}(-\Delta_{\sigma}) \neq \emptyset$. We now address the question as to whether there exist embedded eigenvalues, i.e., eigenvalues which are contained in the essential spectrum. In general, the study of positive eigenvalues of Schr\"{o}dinger operators has a long history \cite{Kato01021967,simon1969positive}. From a physics perspective, eigenvalues at positive energies were assumed not to exist, given the potential decays sufficiently fast. However, it was eventually recognised that there indeed might exist positive eigenvalues for some potentials that decay but are highly oscillating \cite[p.~223]{ReeSim78}. 

In~\cite{KatoGrowth} Kato investigated the equation
\begin{equation}\label{KatoEquation}
-\Delta \varphi(\bs{x})-q(\bs{x})\varphi(\bs{x})=0\ , \quad \forall \bs{x} \in \rz^2:\|\bs{x}\|\geq R_0\ ,
\end{equation} 
for some $R_0 > 0$ and $q$ being some potential where $\lim_{|x|      \rightarrow  \infty}q(\bs{x}) > 0$ exists. Setting $\tilde{k}^2:=\lim_{\|\bs{x}\|      \rightarrow  \infty}q(\bs{x})$  he showed that \eqref{KatoEquation} has no solution in $L^2(\rz^2\setminus B_{R_0}(0))$ given
\begin{equation}
(2\tilde{k})^{-1}\limsup_{\|\bs{x}\|      \rightarrow  \infty}\|\bs{x}\||q(\bs{x})-\tilde{k}^2| < \frac{1}{2}\ .
\end{equation} 
Hence, by setting $q(\bs{x}):=\lambda$ for some $\lambda > 0$, we use this result to establish the following.
\begin{theorem}
\label{ee}
Assume that $\sigma \in L^{\infty}(\rz_+)$ has bounded support. Then $-\Delta_{\sigma}$ does not possess positive eigenvalues, i.e., $\sigma_{pp}(-\Delta_{\sigma})\cap (0,\infty) = \emptyset$.
\end{theorem}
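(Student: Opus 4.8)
The plan is to reduce the absence of positive eigenvalues of $-\Delta_\sigma$ to Kato's non-existence result for solutions of $-\Delta\varphi-\lambda\varphi=0$ outside a large ball in $\rz^2$. Suppose $f\in\cD(-\Delta_\sigma)$ is an eigenfunction with eigenvalue $\lambda>0$, so $-\Delta_\sigma f=\lambda f$ with $f\in L^2(\rz^2_+)$. Since $\sigma$ has bounded support, choose $R_0>0$ with $\supp\sigma\subset[0,R_0)$. First I would pass to the unitarily equivalent picture of Proposition~\ref{laplacian}: set $g:=\mc{R}f\in\mc{L}^2(\rz^2)$, which is an eigenfunction of $-\tilde\Delta_{\tilde\sigma}$ with the same eigenvalue. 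Away from the boundary curves $C$, the function $g$ solves $-\Delta g=\lambda g$ in the classical (distributional) sense. Moreover, outside the ball $B_{2R_0}(0)$ the boundary potential $\tilde\sigma$ vanishes, so on $\rz^2\setminus\overline{B_{2R_0}(0)}$ the function $g$ is a bona fide $H^2_{\mathrm{loc}}$ (indeed $H^2$ by Theorem~\ref{RegularityCorner} and a localized Neumann-problem argument, since $\tilde\sigma$ is zero there, no Lipschitz hypothesis is needed in that region) weak solution of $-\Delta g-\lambda g=0$ on an exterior domain that is, away from the four reflected copies of the corner at the origin, all of the exterior of a ball. Apply Kato's theorem with $q(x):=\lambda$ constant, so $\tilde k^2=\lambda$ and $\limsup_{|x|\to\infty}|x|\,|q(x)-\tilde k^2|=0<\tilde k$: it follows that $g$ cannot lie in $L^2(\rz^2\setminus B_{R_0'}(0))$ for $R_0'\geq 2R_0$ unless $g\equiv0$ there. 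Since $g\in L^2(\rz^2)$, Kato forces $g$ to vanish identically outside the large ball.

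Next I would upgrade ``$g$ vanishes outside a large ball'' to ``$g\equiv0$'' by a unique continuation argument. On the exterior region $g$ solves the elliptic equation $-\Delta g=\lambda g$ with bounded coefficients; weak unique continuation for the Helmholtz operator (e.g.\ via Aronszajn or the standard Carleman-estimate results, applicable since $-\Delta-\lambda$ has real-analytic, in particular $C^\infty$, coefficients away from $C$) propagates the vanishing of $g$ from the exterior into each connected component of $\rz^2\setminus C$. Here one has to be mildly careful: $\rz^2\setminus C$ has four connected components (the open quadrants), but each is unbounded, so each contains part of the exterior region where $g=0$ already, hence $g$ vanishes on all of $\rz^2\setminus C$, and therefore $g\equiv0$ in $L^2$. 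Pulling back through $\mc{R}$ gives $f\equiv0$, contradicting that $f$ is an eigenfunction, and this establishes $\sigma_{pp}(-\Delta_\sigma)\cap(0,\infty)=\emptyset$.

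The main obstacle I anticipate is making the application of Kato's result rigorous in the presence of the corner and the boundary curves $C$. Kato's theorem is stated for solutions defined on the exterior of a ball in full $\rz^2$; our $g$ a priori satisfies the equation only on $\rz^2\setminus C$, and near the corner at $(0,0)$ (and its reflections, which coincide with the origin) one must argue that the symmetry encoded by $g\in\mc{L}^2(\rz^2)$ together with the vanishing Neumann data on $C$ outside $B_{2R_0}(0)$ actually makes $g$ a genuine distributional solution of $-\Delta g-\lambda g=0$ across $C$ in the exterior region — this is exactly the content of \eqref{pro3}, that $H^2(\rz^2)$ functions with vanishing normal trace on $C$ are precisely those in $\mc{H}^2(\rz^2)$, i.e.\ the reflection-symmetric ones solve the free equation without a singular term on $C$. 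So the clean statement is: on $\rz^2\setminus\overline{B_{2R_0}(0)}$, $g\in H^2$ and $(-\Delta-\lambda)g=0$ in the ordinary sense, because there $\tilde\sigma\equiv0$ and the symmetry removes any distributional contribution supported on $C$. Once that reduction is in place, Kato applies verbatim. A secondary point worth a line in the proof is why no separate argument for eigenvalue $0$ is needed here (the statement is only about $(0,\infty)$); the introduction indicates the $\lambda=0$ case is handled elsewhere by properties of harmonic functions in dimension two, so I would simply note that $0$ is excluded from the scope of this theorem.
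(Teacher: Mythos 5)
Your proposal is correct and follows essentially the same route as the paper: reflect the eigenfunction to $\rz^2$, observe that outside a ball containing $\supp\sigma$ it is an $L^2$ solution of the free Helmholtz equation (the Neumann condition and reflection symmetry removing any distributional contribution on $C$), apply Kato's theorem to conclude vanishing in the exterior, and then use unique continuation to propagate the vanishing inward. The only differences are implementation details — the paper multiplies by a cutoff supported away from the corner and $\supp\sigma$ before reflecting, and cites the Reed--Simon ball lemma rather than Aronszajn-type unique continuation — neither of which changes the substance of the argument.
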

\begin{proof}
Assume that $\varphi \in \cD(-\Delta_{\sigma})$ is an eigenfunction of $-\Delta_{\sigma}$ to the eigenvalue $\lambda > 0$. Pick $L > 0$ such that $\supp \sigma \subset [0,L]$ and choose a suitable (radially symmetric) cutoff-function $\tau(r) \in C^{\infty}(\rz)$ with $|\tau(r)| \leq 1$, $\tau(r)=1$ for all $r \geq 2L$ and $\tau(r)=0$ for all $r \leq L$. Then $\tau\varphi \in H^2(\rz^2_+)$ fulfills Neumann boundary conditions along the coordinate axes. Obviously, $\mc{R}(\tau\varphi)$ is a function on $\rz^2$ such that $\mc{R}(\tau\varphi) \in H^2(\rz^2)$.
	
Now, by the result of Kato we conclude that $\mc{R}(\tau\varphi)|_D=\mc{R}\varphi|_D=0$ where $D:=\{\bs{x} \in \rz^2 \ : \ \|\bs{x}\|\geq 2L \}$.

Finally, in order to show that $\varphi|_{\rz^2_+ \setminus D}=0$ we employ the following result as proved in \cite{ReeSim78}: if $\varphi=0$ in a small neighborhood $U \subset B_{r}(\bs{x})$ of $\bs{x} \in \rz^2_+$ and $|\Delta \varphi|\leq \lambda |\varphi|$ in $B_{r}(\bs{x})$ then $\varphi=0$ in $B_{r}(\bs{x})$. Hence, by constructing a suitable sequence of open balls, we conclude the statement.
\end{proof}
Theorem~\ref{ee} shows that there are no positive eigenvalues whenever the boundary potential $\sigma$ has bounded support. On the other hand, it is well-known already from the theory of Schr\"{o}dinger operators that the eigenvalue zero needs special consideration \cite{berezin1991schrodinger,daboul1994quantum}. As a matter of fact, as demonstrated in \cite[p.~198]{berezin1991schrodinger}, zero could be an eigenvalue of the operator $-\Delta+V$ even for potentials $V \in C^{\infty}_0(\rz^n)$ if $n \geq 5$. 
\begin{theorem}
\label{ee1}
Assume that $\sigma \in L^{\infty}(\rz_+)$ has bounded support. Then $\lambda=0$ is not an eigenvalue to $-\Delta_{\sigma}$.
\end{theorem}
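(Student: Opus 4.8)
The plan is to suppose $\varphi \in \cD(-\Delta_\sigma)$ solves $-\Delta_\sigma \varphi = 0$ and to derive a contradiction by combining the decay away from the support of $\sigma$ with a Liouville-type statement for harmonic functions in the plane. First I would fix $L>0$ with $\supp\sigma \subset [0,L]$ and choose the same radial cutoff $\tau \in C^\infty(\rz)$ as in the proof of Theorem~\ref{ee}, with $\tau(r)=0$ for $r\le L$ and $\tau(r)=1$ for $r\ge 2L$. By Theorem~\ref{RegularityCorner} (or the $H^2$-regularity argument near the corner together with the difference-quotient technique away from it) we may assume $\varphi\in H^2(\rz^2_+)$, so $\tau\varphi\in H^2(\rz^2_+)$ satisfies Neumann conditions along the axes, and $\mc{R}(\tau\varphi)\in \mc{H}^2(\rz^2)$ by \eqref{pro3}. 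On the region where $\tau\equiv 1$ the function $u:=\mc{R}\varphi$ is harmonic: $-\Delta u = 0$ on $D:=\{x\in\rz^2\ :\ |x|>2L\}$, and $u\in L^2(D)$ since $\varphi\in L^2(\rz^2_+)$.

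The key step is then: a harmonic function on the exterior domain $D\subset\rz^2$ that lies in $L^2(D)$ must vanish identically on $D$. This is where the two-dimensionality is essential (and is exactly the point flagged in the introduction: ``the non-existence of a possible eigenvalue zero is also proved using properties of harmonic functions in spatial dimension two''). I would argue this via the mean-value / maximum principle combined with the spherical-harmonics expansion on $D$: writing $u$ in polar coordinates for $|x|>2L$ as $u(r,\theta)=\sum_{n\in\gz}(a_n r^{n}+b_n r^{-n})e^{\ui n\theta}$ (with the $n=0$ term $a_0+b_0\log r$), the $L^2(D)$ condition $\int_{2L}^\infty\!\int_0^{2\pi}|u|^2\,r\,\ud\theta\,\ud r<\infty$ forces $a_n=0$ for all $n\ge 0$ including $a_0=b_0=0$ (the constant, the $\log r$ mode, and all positive powers fail to be square-integrable at infinity), while the terms $b_n r^{-n}e^{\ui n\theta}$, $n\ge 1$, are square-integrable near infinity but not globally on $D$ unless one also controls them near $|x|=2L$; more cleanly, each nonzero mode $r^{-n}$ is incompatible with $u$ being the restriction to $D$ of a function harmonic on a slightly larger exterior annulus with the stated decay — alternatively one invokes directly that a bounded (hence, after a Kelvin transform $x\mapsto x/|x|^2$, a harmonic function near the origin) harmonic function decaying at infinity is constant, and $L^2$ forces that constant to be $0$. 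Either route gives $u\equiv 0$ on $D$, i.e.\ $\varphi=0$ on $\{x\in\rz^2_+:|x|>2L\}$.

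Having $\varphi\equiv 0$ on the exterior region, I would propagate this to all of $\rz^2_+$ exactly as in Theorem~\ref{ee}: since $|\Delta\varphi|\le 0\cdot|\varphi|$ (in fact $\Delta\varphi=\lambda\varphi$ with $\lambda=0$, so trivially $|\Delta\varphi|\le|\varphi|$) away from the boundary, the unique-continuation result from \cite{ReeSim78} — if $\varphi$ vanishes on a nonempty open subset of $B_r(x)$ and $|\Delta\varphi|\le|\varphi|$ on $B_r(x)$ then $\varphi\equiv 0$ on $B_r(x)$ — lets one chain balls inward from $D$ to cover $\rz^2_+\setminus D$, yielding $\varphi\equiv 0$. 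Hence $\lambda=0$ is not an eigenvalue.

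The main obstacle is the middle step, the Liouville-type argument: one must be careful that the relevant domain is the \emph{exterior} $D$ rather than the whole plane, so the naive ``entire harmonic $L^2$ function is zero'' is not literally what is needed; the honest statement is that an $L^2(D)$ harmonic function on an exterior planar domain, which here additionally extends to a solution of the Neumann/Robin problem across $|x|=2L$ (via $\varphi\in H^2$ and the boundary conditions), cannot support the $r^{-n}$ growing-toward-the-origin modes, and this is what kills it. One should also double-check that the reflection $\mc{R}$ does not introduce a spurious singularity along the coordinate axes — but this is precisely guaranteed by $\mc{R}(\tau\varphi)\in\mc{H}^2(\rz^2)$ together with \eqref{pro3}, so $u$ is genuinely harmonic across the axes inside $D$, not merely on each quadrant.
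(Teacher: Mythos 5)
Your setup (cutoff, reflection, reduction to a harmonic $u=\mc{R}\varphi$ on the exterior domain $D=\{|x|>2L\}$, and the final unique-continuation step to propagate $\varphi\equiv 0$ inward) matches the paper. The gap is in your key middle step: the Liouville-type statement you rely on --- \emph{a harmonic function on the exterior planar domain $D$ lying in $L^2(D)$ vanishes identically} --- is false. The function $u(r,\theta)=r^{-2}\cos(2\theta)$ is harmonic on $D$, satisfies the Neumann conditions on the axes and the reflection symmetry, and lies in $L^2(D)$ since $\int_{2L}^{\infty}r^{-4}\,r\,\ud r<\infty$; the same holds for every mode $r^{-2l}\cos(2l\theta)$ with $l\geq 1$. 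So the $L^2(D)$ condition kills the $\log r$ term and the nonnegative powers, exactly as you say, but it does \emph{not} kill the decaying modes, and neither does the Kelvin-transform or ``bounded harmonic decaying at infinity is constant'' argument you sketch as an alternative (Liouville holds for entire harmonic functions, not on exterior domains: $r^{-1}\cos\theta$ is a bounded, decaying, nonconstant harmonic function on $|x|>1$). You flag this difficulty yourself in your last paragraph, but you do not actually close it, and closing it is the entire substance of the theorem.

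The paper's mechanism for eliminating the modes $c_{-2l}r^{-2l}\cos(2l\theta)$ is genuinely global: it expands the full eigenfunction $\varphi$ over all of $\rz^2_+$ in a Fourier--Bessel (polar Fourier) basis, tests both this expansion and the exterior Laurent series against functions $\eta(r)e^{-\ui n\theta}$ with $\eta$ supported in $(R,\infty)$, and concludes via the fundamental lemma of the calculus of variations that the pure negative-power series would have to coincide with $\varphi$ in $L^2$ of the \emph{whole} space. Since $\int_0^{\infty}r^{-4l+1}\,\ud r=\infty$ (divergence at the origin), this forces $c_{-2l}=0$ for all $l$. In other words, what kills the decaying exterior modes is not their behavior on $D$ but the fact that $\varphi$ is a global $L^2$ eigenfunction whose Fourier--Bessel coefficients are pinned down by the exterior data; this ingredient is absent from your proposal, so as written the proof does not go through.
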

\begin{proof} We first note that an eigenfunction to an eigenvalue zero is harmonic. Hence, let $\varphi\in D(-\Delta_{\sigma})$ be a harmonic function which is not the zero function. Without loss of generality we can also assume that $\varphi$ is real valued. 
	
As in the proof of Theorem \ref{ee} we use an analogous set $D$ and we extend $\varphi|_D$ to $\mc{R}\varphi|_D$. For simplicity, we denote this extended function again by $\varphi$. Since $\varphi|_D$ is harmonic on $D$ we have, due to \cite[Theorem~9.17]{Sheldon:2001}, the locally convergent expansion
	\be
	\label{le}
	\tilde{\varphi}(r,\theta):=\varphi(x_1(r,\theta),x_2(r,\theta))|_D=b\ln r+\sum\limits_{l=-\infty}^{\infty}(c_l r^l+\overline{c_{-l}}r^{-l})\ue^{\ui l\theta}\ ,
	\ee 
using polar coordinates. In a first step we want to take advantage of the fact that $\varphi|_D$ fulfills Neumann boundary conditions along the boundary of $\rz^2_+\cap D$. Evaluating $\partial_{\theta}\tilde{\varphi}(r,\theta)$ at $\theta=0$ and $\theta=\frac{\pi}{2}$ while requiring $\partial_{\theta}\tilde{\varphi}(r,\theta)=0$ for all $r\in (2L,\infty)$, $L$ as defined in the proof Theorem \ref{ee}, we see that $c_l\in\rz$ and $c_l=0$ for $l$ odd.

In a second step we exploit the fact that $\varphi$ belongs to $L^2(\rz^2)$. We split $\tilde{\varphi}(r,\theta):=b\ln r+\tilde{\varphi}_+(r,\theta)+\tilde{\varphi}_-(r,\theta)$, i.e., $\tilde{\varphi}_+$ involves the positive powers and $\tilde{\varphi}_-$ negative powers of $r$ in the series in \eqref{le} . By
\be
\label{neagtivepart}
\ba
|\tilde{\varphi}_-(r,\theta)|&\leq\frac{1}{r^2}\sum\limits_{l=0}^{\infty}\frac{2|c_{-l+2}|}{ |r|^l}
\ea
\ee
and observing that the series on the r.h.s. goes to zero for $r     \rightarrow \infty$ we may deduce that $\tilde{\varphi}_-(r,\theta)=\Or(  r^{-2})$ uniformly in $\theta$. Furthermore, since $\tilde{\varphi} \in L^2(D)$, it is possible to deduce that the expansion \eqref{le} necessarily reduces to 
\be
\label{le2}
\tilde{\varphi}(r,\theta)=\sum\limits_{l=1}^{\infty}\frac{2c_{-2l}}{ r^{2l}}\cos(2l\theta)\ .
\ee 

Using an analogous notation as in \eqref{le} we can expand $\varphi$ in $L^2(\rz_+\times(0,2\pi),r\ud r\ud\theta)$ by a polar Fourier expansion
\be
\label{le3}
\varphi(r,\theta)=\sum\limits_{m\in\gz} \int_{\rz_+}a_{m}(k)\Psi_m(rk,\theta)k\ud k
\ee
where $\Psi_m(rk,\theta):=\frac{1}{\sqrt{2\pi}}\ue^{\ui m\theta}\sqrt{k}J_m(rk)$, $J_m$ is the Bessel function of first kind \cite[p.~65]{Magnus:1966}, and
\begin{equation}
\label{le4}
a_m(k):= \int_{0}^{2\pi} \int_{\rz_+}\overline{\Psi_m(rk,\theta)}{\tilde{\varphi}}(r,\theta)r\ud r\ud\theta\ .
\end{equation}
We multiply both sides of \eqref{le3} and \eqref{le2} with $\psi(r,\theta):=\frac{\eta(r)}{\sqrt{2\pi}}\ue^{-\ui n\theta}$ where $\eta\in C_0^{\infty}((R,\infty))$ has compact support and $n \in 2\nz$. We obtain
\be \label{le5} \begin{split}
\langle\psi,\tilde{\varphi}\rangle_{L^2(\rz_+\times(0,2\pi),r\ud r\ud\theta)}&=\langle\sqrt{2\pi}\eta(r),c_{-{n}}{r}^{-n}\rangle_{L^2(\rz_+,r\ud  r)}\\
&= \int_{\rz_+} \int_{\rz_+}\eta(r)a_{-n}(k)\sqrt{k}J_{-n}(kr)r\ud r k\ud k\ .
\end{split}
\ee
Since $\eta$ was arbitrarily chosen in $C_0^{\infty}((R,\infty))$ we can employ the fundamental theorem of variation \cite[Satz~5.1]{Dob05} to infer 
\be
\label{le5a}
c_{{-n}}{r}^{-n}=\frac{1}{\sqrt{2\pi}} \int_{\rz_+}a_{-n}(k)\sqrt{k}J_{-n}(kr)k\ud k
\ee
for  $r>R$. Furthermore, since $c_{2l}\in \rz$, $J_{n}(kr)=J_{-n}(kr)$ \cite[p.~66]{Magnus:1966} and $\overline{a_{-n}(k)}=a_{n}(k)$ (see~\eqref{le4}) we also obtain 
\be
\label{lexxx}
c_{{-n}}{r}^{-n}=\frac{1}{\sqrt{2\pi}} \int_{\rz_+}a_{n}(k)\sqrt{k}J_{n}(kr)k\ud k\ .
\ee
Inserting \eqref{le5a} and \eqref{lexxx} into \eqref{le2} and comparing to \eqref{le3} implies that actually $\tilde{\varphi}$ in \eqref{le2} and $\varphi$ in \eqref{le3} have to agree in $L^2(\rz_+\times(0,2\pi),r\ud r\ud\theta)$. 

However, denoting by $\tilde{\varphi}_N$ the truncated series in \eqref{le2} involving the first $N$ terms w.r.t. $l$ one readily calculates
\be
\|\tilde{\varphi}_N\|^2_{L^2(\rz^2)}=4\pi\sum\limits_{l=1}^{N}|c_{-2l}|^2\int\limits_{0}^{\infty}r^{-4l+1}\ \ud r=\infty\ ,
\ee
producing a contradiction unless $c_{-2l}\equiv0$ for all $l$.

%
\end{proof}

\section{On the resolvent of $-\Delta_{\sigma}$}
\label{Resolvent}
In this section we will derive an expression for the resolvent $(-\Delta_{\sigma}-z)^{-1}$. In a first step, it is necessary to construct the resolvent in the case of vanishing boundary potential, i.e., $\sigma \equiv 0$. In this case, the resolvent is obtained from the resolvent of the (self-adjoint) operator $(-\Delta,H^2(\rz^2))$, i.e., the two-dimensional Laplacian defined on the Sobolev space $H^2(\rz^2)$. More explicitly, for $z \in \kz\setminus\rz_+$ and $k=\sqrt{z}$ let $\mathfrak{G}(k)(\textbf{x},\textbf{y})$ denote the integral kernel of $(-\Delta-z)^{-1}$ where $\textbf{x}=(x_1,x_2)$ and $\textbf{y}=(y_1,y_2)$ (see e.g. \cite[p.~78]{Yafaev:2010}). We define, for $\textbf{x},\textbf{y} \in \rz^2_+$,
\begin{equation}\label{ResolventKernelFree}\begin{split}
\mathfrak{G}^{(0)}(k)(\textbf{x},\textbf{y}):=& \mathfrak{G}(k)(\textbf{x},(y_1,y_2))+\mathfrak{G}(k)(\textbf{x},(-y_1,y_2)) \\
&+\mathfrak{G}(k)(\textbf{x},(y_1,-y_2))+\mathfrak{G}(k)(\textbf{x},(-y_1,-y_2))\ ,
\end{split}
\end{equation}
being an integral kernel of an operator acting on $L^2(\rz^2_+)$.
\begin{lemma}
\label{freeR}
For $-\Delta_0$, \eqref{ResolventKernelFree} is the integral kernel of $(-\Delta_0-z)^{-1}$ for $ z \in \kz\setminus\rz_+$.
\end{lemma}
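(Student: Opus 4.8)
The plan is to exploit the unitary equivalence, established in Proposition~\ref{laplacian} and its corollary, between $-\Delta_0$ and $(-\Delta,\mc{H}^2(\rz^2))$, together with the explicit description of $\mc{L}^2(\rz^2)=\ran\mc{R}$ as the symmetric subspace of $L^2(\rz^2)$ with respect to the four reflections $I_i$. First I would recall that the resolvent of $(-\Delta,H^2(\rz^2))$ has the integral kernel $\mathfrak{G}(k)(\bs{x},\bs{y})$ (explicitly a Hankel function, but we need only that it is the kernel and that it commutes with reflections since $-\Delta$ and $L^2$-norms are reflection-invariant). Consequently $R(z):=(-\Delta-z)^{-1}$ maps $\mc{L}^2(\rz^2)$ into itself and its restriction is exactly $(-\Delta,\mc{H}^2(\rz^2))-z)^{-1}$, acting on $\mc{L}^2(\rz^2)$. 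Transporting this back through $\mc{R}$, i.e. forming $\mc{R}^{\ast}R(z)\mc{R}$, gives $(-\Delta_0-z)^{-1}$ by Lemma~\ref{reflection} (unitarity of $\mc{R}$) and the corollary to Proposition~\ref{laplacian}.

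The core computation is then to identify the kernel of $\mc{R}^{\ast}R(z)\mc{R}$, which I would carry out exactly as in \eqref{fo88}: for $\psi\in L^2(\rz^2_+)$ one writes $(\mc{R}\psi)(\bs{y})=\tfrac12\psi(\pm y_1,\pm y_2)$ on each quadrant, applies $R(z)$ with kernel $\mathfrak{G}(k)$, integrates over all of $\rz^2$ by splitting into the four quadrants and substituting $y_i\mapsto\pm y_i$, and finally applies $\mc{R}^{\ast}=2(\,\cdot\,)|_{\rz^2_+}$. The factor $\tfrac12$ from $\mc{R}$ and the factor $2$ from $\mc{R}^{\ast}$ cancel, and the four quadrant substitutions reproduce precisely the four terms $\mathfrak{G}(k)(\bs{x},(\pm y_1,\pm y_2))$ appearing in \eqref{ResolventKernelFree}; here one also uses that $\mathfrak{G}(k)(\bs{x},\bs{y})=\mathfrak{G}(k)(\bs{x},(\pm y_1,\pm y_2))$-type symmetries are \emph{not} needed, only the change of variables. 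Thus $\mathfrak{G}^{(0)}(k)(\bs{x},\bs{y})$ is the kernel of $(-\Delta_0-z)^{-1}$.

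I would also include a short check that $\mathfrak{G}^{(0)}(k)$ genuinely defines a bounded operator on $L^2(\rz^2_+)$ and that no domain issues are hidden: this is immediate since $\mc{R}^{\ast}R(z)\mc{R}$ is a composition of bounded operators, $R(z)$ being bounded on $L^2(\rz^2)$ for $z\in\kz\setminus\rz_+$. One may remark that $R(z)$ commutes with the orthogonal projection $\tilde{\mc{R}}$ onto $\mc{L}^2(\rz^2)$ (Lemma~\ref{pro31}), which is the conceptual reason $R(z)$ restricts to $\mc{L}^2(\rz^2)$; this commutation follows from $F_{0,2}$-conjugation since $R(z)$ becomes multiplication by $(|\bs{k}|^2-z)^{-1}$, which is invariant under $\bs{k}\mapsto(\pm k_1,\pm k_2)$, exactly as in the proof of Lemma~\ref{Fourier}.

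The only mild obstacle is bookkeeping: keeping track of the reflection prefactors ($\tfrac12$ versus $\sqrt2$ versus $2$) and of which variable ($\bs{x}$ or $\bs{y}$) the reflections act on, so that one arrives at \eqref{ResolventKernelFree} with reflections in the second argument and with the correct overall normalization. Since the analogous computation was already done for the spectral projection in Lemma~\ref{fo3}, equation \eqref{fo88}, I would essentially cite that computation verbatim, replacing the kernel $f$ of $F_{0,2}\kappa_I F_{0,2}^{-1}$ by the kernel $\mathfrak{G}(k)$ of $(-\Delta-z)^{-1}$; no genuinely new difficulty arises.
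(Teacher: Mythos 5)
Your proposal is correct and follows essentially the same route as the paper: show that the full-plane resolvent preserves the symmetric subspace $\mc{L}^2(\rz^2)$, conjugate by $\mc{R}$, and identify the kernel by the quadrant-splitting computation of \eqref{fo88}. Your justification of the invariance via the Fourier multiplier $(|\bs{k}|^2-z)^{-1}$ commuting with $\tilde{\mc{R}}$ is if anything cleaner than the paper's short contradiction argument, but the substance is identical.
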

\begin{proof}
We first show the $( -\Delta_0-z^{-1})$ leaves $\mcL$ invariant. Take $\psi\in\mcL$ and put $\phi:=( -\Delta_0-z^{-1})\psi$. Assume $\phi\notin\mcL$ then $(-\Delta_0-z)\phi\notin\mcL$ since the Laplacian obviously leaves $\mcL$ invariant. Hence, we have a contradiction and we deduce
\be
\label{res2}
(-\tilde{\Delta}_0-z)^{-1}=\tilde{\mc{R}}( -\Delta-z)^{-1}\tilde{\mc{R}}\ .
\ee
The invariance property together with Lemma \ref{pro31} now gives
\be
\label{res1}
\mc{R}^{\ast}(-\tilde{\Delta}_0-z)( -\tilde{\Delta}_0-z)^{-1}\mc{R}\psi=\psi
\ee
for all $\psi\in L(\rz^2_+)$.  An analogous argument as in \eqref{fo88} together with a combination of \eqref{res2} with \eqref{res1} leads to \eqref{ResolventKernelFree}.
\end{proof}
In the following we will take advantage of the fact that $\mathfrak{G}(k)(\textbf{x},\textbf{y})$ only depends on $\|\bs{x}-\bs{y}\|$ and that $\mathfrak{G}(k)$ can be expressed in terms of Bessel functions \cite[p.~78]{Yafaev:2010}, i.e., 
\begin{equation}
\mathfrak{G}(k)(\|\bs{x}-\bs{y}\|)=(2\pi)^{-1}K_0(-ik\|\bs{x}-\bs{y}\|)\ ,
\end{equation}
with $K_0$ being the modified Bessel function of second kind. Most importantly, $K_0(-ik|x|)$ allows for the following asymptotic expansions \cite[pp.~70,139]{Magnus:1966}:
\begin{equation}
\label{AsymptoticsI}
K_0(-ik|x|) \sim \sqrt{\frac{\pi}{-2ik|x|}}\ue^{+ik|x|}, \qquad |x|      \rightarrow  \infty\ ,
\end{equation}  
and 
\begin{equation}
\label{AsymptoticsII}
K_0(-ik|x|) \sim -\ln(|k||x|) , \qquad  |x|      \rightarrow  0\ .
\end{equation}  

\subsection{The operators ${G}_0$, $G_1$ and an expression for $(-\Delta_{\sigma}-z)^{-1}$}
We will employ methods of \cite{Yafaev:1992,Yafaev:2010} to construct the resolvent of $-\Delta_{\sigma}$. For $\sigma\in L^{\infty}(\partial\rz_+)$ we define two operators $G_0$ and $G_1$ acting from $H^1(\rz^2_+)$ to $L^2(\partial\rz^2_+)$ by, $\psi\in H^1(\rz^2_+)$,  
\be
\label{g0andg}
G_0(\psi)({x}):=\sqrt{|\sigma(x)|}\psi_{\bv}(x), \quad x\in\partial\rz^2_+\ ,
\ee
and
\be
\label{g0andga}
G_1(\psi)({x}):=-\sgn(\sigma(x))\sqrt{|\sigma(x)|}\psi_{\bv}(x), \quad x\in\partial\rz^2_+\ .
\ee
For the following lemma we refer to \cite[Definition~2,~p.~51]{Yafaev:1992}.
\begin{lemma}
\label{gg01}
The operator $-\Delta_{\sigma}$ is given by $-\Delta_0+G_1^{\ast}G_0$ where $G_0$ and $G_1$ are relatively $\sqrt{-\Delta_0}$-bounded. Furthermore, there exists a bounded operator $\Gamma(z)$ such that, for $z\in\rho(-\Delta_{\sigma})$, 
\be
\label{gg03}
(-\Delta_{\sigma}-z)^{-1}=(-\Delta_0+\eins)^{-\frac{1}{2}}\Gamma(z)(-\Delta_0+\eins)^{-\frac{1}{2}}
\ee
and
\be
\label{gg02}
\langle-\Delta_{\sigma}\phi,\psi\rangle_{L^2(\rz^2_+)}=\langle\psi,-\Delta_0\phi\rangle_{L^2(\rz^2_+)}+\langle G_1\psi,G_0\phi\rangle_{L^2(\partial\rz^2_+)}
\ee
holds for $\psi\in\mc{D}(-\Delta_0)$ and $\phi\in\mc{D}(-\Delta_{\sigma})$.
\end{lemma}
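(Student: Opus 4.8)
The plan is to realize $-\Delta_\sigma$ as an instance of the abstract factored form‑perturbation of the semibounded self‑adjoint operator $-\Delta_0$ studied by Yafaev (cf. \cite[Definition 2, p.\,51]{Yafaev:1992}), and to read off all three assertions from that framework once its two structural hypotheses have been verified: that the perturbing form factors through $G_0$ and $G_1$, and that $G_0,G_1$ are relatively $\sqrt{-\Delta_0}$‑bounded.

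For the factorization, recall that $\cD(\sqrt{-\Delta_0})$ equals the form domain $H^1(\rz^2_+)$ and that, by the trace theorem \cite{Dob05}, $\varphi\mapsto\varphi_{\bv}$ maps $H^1(\rz^2_+)$ continuously into $L^2(\partial\rz^2_+)$; definitions \eqref{g0andg}--\eqref{g0andga} then give at once, for $\varphi,\psi\in H^1(\rz^2_+)$,
\[
\langle G_1\psi, G_0\varphi\rangle_{L^2(\partial\rz^2_+)} = -\int_{\partial\rz^2_+}\sigma(x)\,\overline{\psi_{\bv}(x)}\,\varphi_{\bv}(x)\,\ud x ,
\]
so that the sesquilinear form $s_\sigma$ of the Remark following Definition~\ref{defq} splits as $s_\sigma(\psi,\varphi)=s_0(\psi,\varphi)+\langle G_1\psi,G_0\varphi\rangle$, with $s_0$ the form of $-\Delta_0$. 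For the relative boundedness I would invoke the trace inequality on the convex Lipschitz domain $\rz^2_+$ — the same estimate underlying closedness and semiboundedness of $q_\sigma$ — namely that for every $\varepsilon>0$ there is $C_\varepsilon>0$ with $\|\varphi_{\bv}\|^2_{L^2(\partial\rz^2_+)}\le\varepsilon\|\nabla\varphi\|^2_{L^2(\rz^2_+)}+C_\varepsilon\|\varphi\|^2_{L^2(\rz^2_+)}$. Since $\|\sqrt{-\Delta_0}\varphi\|^2=\|\nabla\varphi\|^2$ and $\sigma\in L^\infty(\rz_+)$, this yields $\|G_j\varphi\|_{L^2(\partial\rz^2_+)}\le(\varepsilon\|\sigma\|_\infty)^{1/2}\|\sqrt{-\Delta_0}\varphi\|+(C_\varepsilon\|\sigma\|_\infty)^{1/2}\|\varphi\|$ for $j=0,1$, so $G_0$ and $G_1$ are relatively $\sqrt{-\Delta_0}$‑bounded with relative bound that can be made arbitrarily small; equivalently, $A:=G_0(-\Delta_0+\eins)^{-1/2}$ and $B:=G_1(-\Delta_0+\eins)^{-1/2}$ are bounded from $L^2(\rz^2_+)$ to $L^2(\partial\rz^2_+)$.

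With both hypotheses in place, Yafaev's construction applies verbatim and delivers the conclusions. It identifies the form‑perturbed operator $-\Delta_0+G_1^*G_0$ (in the sense of \cite[Definition 2]{Yafaev:1992}) with the operator $-\Delta_\sigma$ associated with $q_\sigma$; it exhibits $\Gamma(z):=(-\Delta_0+\eins)^{1/2}(-\Delta_\sigma-z)^{-1}(-\Delta_0+\eins)^{1/2}$ as a bounded operator — boundedness holding because, for $c$ large, $q_\sigma+c$ and $q_0+c$ have equivalent norms on $H^1(\rz^2_+)$, so the square roots $(-\Delta_\sigma+c)^{1/2}$ and $(-\Delta_0+c)^{1/2}$ share the domain $H^1(\rz^2_+)$ with comparable graph norms — which is \eqref{gg03}; and, via the symmetrized (Konno--Kuroda/Birman--Schwinger) resolvent identity, it gives the explicit formula
\[
\Gamma(z) = T_0(z) - T_0(z)\,B^*\,\big(\eins + A\,T_0(z)\,B^*\big)^{-1}\,A\,T_0(z),
\]
where $T_0(z):=(-\Delta_0+\eins)^{1/2}(-\Delta_0-z)^{-1}(-\Delta_0+\eins)^{1/2}$ and $\eins + A\,T_0(z)\,B^*$ is invertible on $L^2(\partial\rz^2_+)$ precisely for $z\in\rho(-\Delta_\sigma)$. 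Finally, \eqref{gg02} is obtained by unwinding the representation theorem applied to $q_\sigma$ and to $q_0$ together with the factorization identity above: for $\phi\in\cD(-\Delta_\sigma)$ and $\psi\in\cD(-\Delta_0)\subset H^1(\rz^2_+)$ one expands $\langle-\Delta_\sigma\phi,\psi\rangle$ through $s_\sigma$, applies the splitting $s_\sigma=s_0+\langle G_1\cdot,G_0\cdot\rangle$, and rewrites the resulting $s_0$‑term using $\psi\in\cD(-\Delta_0)$.

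The point demanding the most care is the interface between our geometrically singular setting and Yafaev's abstract one: one has to make sure the trace inequality, and hence the small relative bound, genuinely survives up to the corner of $\rz^2_+$ (only convex Lipschitz, not smooth), and one must keep in mind throughout that $G_1^*G_0$ is a bona fide \emph{form} perturbation — $\cD(-\Delta_\sigma)\neq\cD(-\Delta_0)$ in general, so it is not an operator sum — which is precisely why \eqref{gg03} and the explicit expression for $\Gamma(z)$ only become meaningful after the $(-\Delta_0+\eins)^{\pm1/2}$‑sandwiching.
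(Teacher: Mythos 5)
Your proposal is correct and follows essentially the same route as the paper: relative $\sqrt{-\Delta_0}$-boundedness of $G_0,G_1$ via the trace inequality on $H^1(\rz^2_+)$, the identification $\cD(\sqrt{-\Delta_0})=H^1(\rz^2_+)=\cD(\sqrt{-\Delta_\sigma})$, and then an appeal to Yafaev's abstract factored-perturbation framework for \eqref{gg03}, with \eqref{gg02} read off from the splitting of the sesquilinear form $s_\sigma=s_0+\langle G_1\cdot,G_0\cdot\rangle$. Your version merely spells out more explicitly (the $\varepsilon$-trace estimate, the explicit $\Gamma(z)$ and the Konno--Kuroda identity) what the paper delegates to the citations.
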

\begin{proof}
Using a suitable Sobolev trace theorem \cite[Satz~6.15]{Dob05} as well as $\sigma\in L^{\infty}(\rz_+)$ yields indeed that $G_0$ and $G_1$ are $\sqrt{-\Delta_0}$-bounded, i.e., bounded as a map between $H^1(\rz^2_+)$ and $L^2(\partial\rz_+^2)$. Moreover, since $\mc{D}(\sqrt{-\Delta_0})=H^1(\rz^2_+)=\mc{D}(\sqrt{-\Delta_{\sigma}})$ we can deduce \eqref{gg03} by \cite[p.~52]{Yafaev:1992}. The last property \eqref{gg02} follows from the sesquilinear form associated with \eqref{QuadraticForm}.
\end{proof}
We introduce the notation, $k=\sqrt{z}$, $z\in\kz\setminus\rz_+$,
\be
\label{Yavaef1}
\ba
B_{i}(k):=G_i(-\Delta_0-\overline{z})^{-1}:L^2(\rz^2_+)     \rightarrow  L^2(\partial\rz^2_+), \quad i=0,1\ ,
\ea
\ee
Moreover, by \cite[p.~52]{Yafaev:1992} and Lemma \ref{gg01} we may conclude that the operator
\be
\label{Yafaev2}
B(k):=G_0B_{1}(k)^{\ast}:L^2(\partial\rz^2_+)     \rightarrow  L^2(\partial\rz^2_+)
\ee
exists and is bounded (see also Lemma~\ref{cont}).
\begin{defn}
We denote by $\tilde{B}_i(k)$ the specific operator $B_i(k)$ obtained with the choice $\sigma\equiv 1$.
\end{defn}
Now, Lemma \ref{gg01} and \cite[Theorem~5,~p.~53]{Yafaev:1992} allow us to establish a preliminary expression for the resolvent of $-\Delta_{\sigma}$.
\begin{theorem}
\label{gg04}
The resolvent of $-\Delta_{\sigma}$ is given by, $z\in \rho(-\Delta_{\sigma})$, 
\be
\label{Rsigma}
(-\Delta_{\sigma}-z)^{-1}=(-\Delta_0-z)^{-1}-B_{1}(k)^{\ast}( \eins+B(k))^{-1}B_{0}(k)\ .
\ee
\end{theorem}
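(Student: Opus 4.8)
The plan is to derive \eqref{Rsigma} from the abstract resolvent formula of \cite{Yafaev:1992} specialized to the pair $(-\Delta_0, -\Delta_\sigma)$, using the factorization $-\Delta_\sigma = -\Delta_0 + G_1^\ast G_0$ established in Lemma~\ref{gg01}. By \cite[Theorem 5, p.\,53]{Yafaev:1992}, whenever one has a perturbation of the form $H = H_0 + G_1^\ast G_0$ with $G_0, G_1$ relatively $\sqrt{H_0}$-bounded, the resolvent satisfies, for $z \in \rho(H) \cap \rho(H_0)$,
\be
\label{AbstractResolvent}
(H-z)^{-1} = (H_0-z)^{-1} - \bigl(G_1 (H_0 - \overline{z})^{-1}\bigr)^{\ast}\bigl(\eins + G_0 (H_0 - z)^{-1} G_1^\ast\bigr)^{-1} G_0 (H_0-z)^{-1}\ ,
\ee
provided the operator $\eins + G_0(H_0-z)^{-1}G_1^\ast$ is boundedly invertible. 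First I would record that $z \in \rho(-\Delta_\sigma)$ together with Lemma~\ref{gg01} guarantees, via the Birman--Schwinger principle built into \cite[Theorem 5]{Yafaev:1992}, that $\eins + B(k)$ is invertible with bounded inverse; this is exactly the operator $\eins + G_0(-\Delta_0 - z)^{-1}G_1^\ast$ appearing in \eqref{AbstractResolvent}, since $B(k) = G_0 B_1(k)^\ast$ by \eqref{Yafaev2} and $B_1(k)^\ast = \bigl(G_1(-\Delta_0 - \overline z)^{-1}\bigr)^\ast = (-\Delta_0 - z)^{-1}G_1^\ast$ by \eqref{Yavaef1}.

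The remaining work is purely a matter of matching notation. I would identify $B_0(k) = G_0(-\Delta_0 - \overline z)^{-1}$ and note that its adjoint is $B_0(k)^\ast = (-\Delta_0 - z)^{-1}G_0^\ast$; similarly $B_1(k)^\ast = (-\Delta_0 - z)^{-1}G_1^\ast$. The middle factor $\eins + B(k) = \eins + G_0(-\Delta_0-z)^{-1}G_1^\ast$ is precisely the Birman--Schwinger-type operator in \eqref{AbstractResolvent}. Thus the last term in \eqref{AbstractResolvent} reads $B_1(k)^\ast (\eins + B(k))^{-1} B_0(k)$, which is what appears on the right-hand side of \eqref{Rsigma}. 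One subtlety to address carefully is the distinction between $B_0(k)$ (defined with $(-\Delta_0 - \overline z)^{-1}$) and the factor $G_0(-\Delta_0 - z)^{-1}$ in \eqref{AbstractResolvent}: here I would invoke that $(-\Delta_0 - \overline z)^{-1} = \bigl((-\Delta_0 - z)^{-1}\bigr)^\ast$ by self-adjointness of $-\Delta_0$, so that $B_0(k) = \bigl((-\Delta_0 - z)^{-1}G_0^\ast\bigr)^\ast = G_0(-\Delta_0 - z)^{-1}$ as operators, after checking that the domains match (both are defined on all of $L^2(\rz^2_+)$ by the relative boundedness of $G_0$, cf. Lemma~\ref{gg01} and the factorization \eqref{gg03} through $(-\Delta_0 + \eins)^{-1/2}$).

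The main obstacle — such as it is — is not the algebra but the bookkeeping of adjoints and the verification that all the intermediate operators are bounded and defined on the full space, so that the manipulations in \cite{Yafaev:1992} apply verbatim. Concretely, I would need to confirm that $B_i(k)$ extend to bounded operators $L^2(\rz^2_+) \to L^2(\partial\rz^2_+)$ (this follows since $(-\Delta_0 - \overline z)^{-1}$ maps $L^2(\rz^2_+)$ into $\mc{D}(-\Delta_0) \subset H^2(\rz^2_+) \subset H^1(\rz^2_+)$ and $G_i$ is bounded on $H^1(\rz^2_+)$ by Lemma~\ref{gg01}), that $B(k)$ is bounded (asserted in \eqref{Yafaev2}, to be elaborated in Lemma~\ref{cont}), and that the holomorphic family $\eins + B(k)$ is invertible at the given $z$. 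Once these points are in place, \eqref{Rsigma} is an immediate transcription of \eqref{AbstractResolvent} into the notation of \eqref{Yavaef1}--\eqref{Yafaev2}. I would also remark that the identity can be checked directly: applying $-\Delta_\sigma - z$ to the right-hand side of \eqref{Rsigma} and using \eqref{gg02} to handle the boundary term reduces, after a short computation, to $\eins$, which provides an independent confirmation.
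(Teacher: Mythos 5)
Your overall route is exactly the paper's: the entire published proof consists of citing Lemma~\ref{gg01} together with \cite[Theorem 5, p.\,53]{Yafaev:1992} and transcribing the abstract second-resolvent formula into the notation \eqref{Yavaef1}--\eqref{Yafaev2}, which is what you do, with more of the bookkeeping (boundedness of $B_i(k)$, invertibility of $\eins+B(k)$ for $z\in\rho(-\Delta_\sigma)$) made explicit.

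One step in your write-up is, however, false as stated: the chain $B_0(k)=\bigl((-\Delta_0-z)^{-1}G_0^{\ast}\bigr)^{\ast}=G_0(-\Delta_0-z)^{-1}$. Taking the adjoint gives $\bigl((-\Delta_0-z)^{-1}G_0^{\ast}\bigr)^{\ast}=G_0(-\Delta_0-\overline{z})^{-1}$, which is $B_0(k)$ as defined in \eqref{Yavaef1} but is \emph{not} $G_0(-\Delta_0-z)^{-1}$ when $\im z\neq 0$; self-adjointness of $-\Delta_0$ gives $(-\Delta_0-\overline{z})^{-1}=\bigl((-\Delta_0-z)^{-1}\bigr)^{\ast}$, not $(-\Delta_0-\overline{z})^{-1}=(-\Delta_0-z)^{-1}$. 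The mismatch you correctly spotted between the definition \eqref{Yavaef1} and the rightmost factor of the abstract formula is therefore genuine and cannot be argued away by an adjoint identity. The direct check you propose at the end settles it: applying $-\Delta_\sigma-z$ to the right-hand side of \eqref{Rsigma} yields $\eins+G_1^{\ast}G_0\bigl((-\Delta_0-z)^{-1}-(-\Delta_0-\overline{z})^{-1}\bigr)$ if the last factor carries $\overline{z}$, so the formula closes only when that factor is $G_0(-\Delta_0-z)^{-1}$. In other words, the theorem must be read with $B_0(k)=G_0(-\Delta_0-z)^{-1}$ (only $B_1$, which appears through its adjoint, needs the $\overline{z}$ in its definition so that $B_1(k)^{\ast}=(-\Delta_0-z)^{-1}G_1^{\ast}$); this is an imprecision inherited from the paper's own definition rather than a flaw in your strategy, but your proposed resolution of it does not work and should be replaced by the above.
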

In the next subsection we will study the operators $B_{0}(k)$, $B_{i}(k)^{\ast}$ and $B(k)$ in more detail.
\subsection{The integral kernels of $B_{i}(k)$, $B_{i}^{\ast}(k)$ and $B(k)$}
In this section we show that $B_{i}(k)$, $B_{i}^{\ast}(k)$ and $B(k)$ are integral operators and elaborate on some regularity properties.
\begin{lemma}
\label{integralkernel2a}
Let $\sigma\in L^{\infty}(\rz_+)$ be given. For $z\in\kz\setminus\rz_+$ and $i\in \{0,1\}$, the operator $B_{i}(k):L^2(\rz^2_+)     \rightarrow  L^2(\partial\rz^2_+)$ is an integral operator with kernel, $x\in\rz_+$, $\bs{y}\in\rz^2_+$,
\be
\label{integralkernel3}
B_{i}(k)(x,\bs{y})=(-\sgn(\sigma(x)))^i\sqrt{|\sigma(x)|}\mf{G}^{(0)}(k)({x},\bs{y})\ .
\ee
\end{lemma}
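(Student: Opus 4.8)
The plan is to unravel the definition \eqref{Yavaef1} of $B_i(k)$ and use the explicit kernel of $(-\Delta_0 - \overline z)^{-1}$ supplied by Lemma~\ref{freeR}, together with the action of the trace map. First I would recall that by definition $B_i(k) = G_i (-\Delta_0 - \overline z)^{-1}$, so for $\psi \in L^2(\rz^2_+)$ the function $u := (-\Delta_0 - \overline z)^{-1}\psi$ lies in $\mc D(-\Delta_0) \subset H^2(\rz^2_+)$, and by Lemma~\ref{freeR} it is given by $u(\bs x) = \int_{\rz^2_+} \mf{G}^{(0)}(\overline k)(\bs x, \bs y)\,\psi(\bs y)\,\ud\bs y$, where $\overline k = \sqrt{\overline z}$. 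Since $u \in H^2(\rz^2_+)$ the trace $u_{\bv}$ is well-defined in $L^2(\partial\rz^2_+)$, and because the kernel $\mf G^{(0)}(\overline k)(\bs x, \bs y)$ is jointly continuous away from the diagonal and locally integrable in $\bs y$ (by the asymptotics \eqref{AsymptoticsI}, \eqref{AsymptoticsII}), evaluating $u$ at a boundary point $x \in \partial\rz^2_+$ commutes with the $\bs y$-integration, giving $u_{\bv}(x) = \int_{\rz^2_+} \mf G^{(0)}(\overline k)(x, \bs y)\,\psi(\bs y)\,\ud\bs y$. Then applying the definitions \eqref{g0andg}, \eqref{g0andga} of $G_0, G_1$, i.e. multiplying by $(-\sgn(\sigma(x)))^i\sqrt{|\sigma(x)|}$, yields exactly \eqref{integralkernel3}, provided one identifies $\mf G^{(0)}(\overline k)(x,\bs y)$ with $\mf G^{(0)}(k)(x,\bs y)$.

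The one genuine subtlety — and the step I expect to be the main obstacle — is this dependence on $\overline z$ rather than $z$. The resolvent identity $(-\Delta_0 - \overline z)^{-1} = \big((-\Delta_0 - z)^{-1}\big)^{\ast}$ means the kernel of $(-\Delta_0 - \overline z)^{-1}$ is $\overline{\mf G^{(0)}(k)(\bs y, \bs x)}$. Using that $\mf G(k)(\bs x,\bs y) = (2\pi)^{-1}K_0(-\ui k\|\bs x - \bs y\|)$ depends only on $\|\bs x - \bs y\|$ and is symmetric, and that $\overline{K_0(-\ui k r)} = K_0(\overline{-\ui k}\,r) = K_0(-\ui\,\overline{k}\,r)$ for $r > 0$ (with $\overline k = \sqrt{\overline z} \in \overline{\kz_+}$ by the conventions fixed after Theorem~\ref{RegularityCorner}), one checks that $\overline{\mf G^{(0)}(k)(\bs y,\bs x)} = \mf G^{(0)}(\overline k)(\bs x, \bs y) = \mf G^{(0)}(\sqrt{\overline z})(\bs x,\bs y)$. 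Hence the kernel of $B_i(k) = G_i(-\Delta_0 - \overline z)^{-1}$ really is $(-\sgn(\sigma(x)))^i\sqrt{|\sigma(x)|}\,\mf G^{(0)}(k)(x,\bs y)$ written in terms of $k = \sqrt z$, which is precisely \eqref{integralkernel3}.

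Finally I would note that the whole argument is legitimate because $G_i$ is $\sqrt{-\Delta_0}$-bounded (Lemma~\ref{gg01}), so $B_i(k)$ is genuinely a bounded operator $L^2(\rz^2_+) \to L^2(\partial\rz^2_+)$ and the candidate kernel is Hilbert–Schmidt on compact subsets, justifying Fubini in the interchange of trace and integration above. No separate boundedness claim is needed here since it was already recorded in Lemma~\ref{gg01}; the content of the present lemma is purely the explicit formula \eqref{integralkernel3}.
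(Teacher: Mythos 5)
Your overall route is the same as the paper's: the paper disposes of this lemma in one line by combining Lemma~\ref{freeR} with the definitions \eqref{g0andg}--\eqref{g0andga}, and your first paragraph is a correct, fleshed-out version of exactly that (the trace/Fubini justification and the remark that boundedness is already contained in Lemma~\ref{gg01} are both fine).

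The problem is the second paragraph, which is where you try to reconcile the $\overline{z}$ in the definition \eqref{Yavaef1} with the statement \eqref{integralkernel3}. Your computation up to the penultimate step is correct: since $(-\Delta_0-\overline{z})^{-1}=\big((-\Delta_0-z)^{-1}\big)^{\ast}$ and $\mf{G}^{(0)}(k)$ is symmetric, the kernel of $(-\Delta_0-\overline{z})^{-1}$ is $\overline{\mf{G}^{(0)}(k)(\bs{x},\bs{y})}=\mf{G}^{(0)}(\sqrt{\overline{z}})(\bs{x},\bs{y})$. But the final sentence, asserting that this ``really is'' $\mf{G}^{(0)}(k)(\bs{x},\bs{y})$ ``written in terms of $k=\sqrt{z}$,'' is a non sequitur: for $z\notin\rz$ the kernel $\mf{G}^{(0)}(k)$ is genuinely complex, and $\mf{G}^{(0)}(\sqrt{\overline{z}})=\overline{\mf{G}^{(0)}(\sqrt{z})}\neq\mf{G}^{(0)}(\sqrt{z})$. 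So, taking \eqref{Yavaef1} literally, what your argument actually proves is that the kernel of $B_i(k)$ is $(-\sgn(\sigma(x)))^i\sqrt{|\sigma(x)|}\,\overline{\mf{G}^{(0)}(k)(x,\bs{y})}$, which differs from \eqref{integralkernel3} by a conjugation. The clean way out is to observe that the conjugate in \eqref{Yavaef1} is not compatible with how $B_0(k)$ is subsequently used: in Theorem~\ref{gg04} (Yafaev's factorized resolvent formula) one needs $B_0(k)=G_0(-\Delta_0-z)^{-1}$ and $B_1(k)^{\ast}=(-\Delta_0-z)^{-1}G_1^{\ast}$, and the kernels stated in Lemmas~\ref{integralkernel4} and~\ref{integralkernel2} are likewise the ones at $z$, not $\overline{z}$. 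In other words, \eqref{integralkernel3} is the intended formula and the resolvent in \eqref{Yavaef1} should be read at the spectral point $z$ for $B_0$; with that reading your first paragraph already finishes the proof and no conjugation argument is needed. As written, however, you should not paper over the discrepancy by silently identifying a kernel with its complex conjugate -- either state the result with the conjugated kernel or point out and correct the inconsistency in the definition.
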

\begin{proof}
Lemma \ref{integralkernel2a} is an easy consequence of Lemma \ref{freeR} and Definition \ref{g0andg}.
\end{proof}
\begin{lemma}
\label{integralkernel4}
Let $\sigma\in L^{\infty}(\rz_+)$ be given. For $z\in\kz\setminus\rz_+$ and $i\in \{0,1\}$, the operator $B_{i}(k)^{\ast}:L^2(\partial\rz^2_+)     \rightarrow  L^2(\rz^2_+)$ is an integral operator with kernel, $\bs{x}\in\rz^2_+$, $y\in\rz_+$,
\be
\label{integralkernel6}
B_{i}(k)^{\ast}(\bs{x},{y})=(-\sgn(\sigma(y)))^i\sqrt{|\sigma(y)|}\mf{G}^{(0)}(k)(\bs{x},{y})\ .
\ee
\end{lemma}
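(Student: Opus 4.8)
The statement to prove is that $B_i(k)^{\ast}$ is an integral operator with the kernel $(-\sgn(\sigma(y)))^i\sqrt{|\sigma(y)|}\mf{G}^{(0)}(k)(\bs{x},y)$. The natural approach is to obtain this kernel directly from the kernel of $B_i(k)$ established in Lemma~\ref{integralkernel2a} by the general fact that the adjoint of an integral operator has the complex-conjugated, argument-transposed kernel, combined with the symmetry properties of the free resolvent kernel.

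\begin{proof}

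By Lemma~\ref{integralkernel2a} the operator $B_i(k):L^2(\rz^2_+) \rightarrow L^2(\partial\rz^2_+)$ is an integral operator with kernel $B_i(k)(x,\bs{y})=(-\sgn(\sigma(x)))^i\sqrt{|\sigma(x)|}\mf{G}^{(0)}(k)(x,\bs{y})$ for $x\in\rz_+$, $\bs{y}\in\rz^2_+$. Since $B_i(k)^{\ast}=B_i(\overline{k})^{\ast\ast}$... more directly, for $\psi\in L^2(\rz^2_+)$ and $\chi\in L^2(\partial\rz^2_+)$ we have
\be
\langle \chi, B_i(k)\psi\rangle_{L^2(\partial\rz^2_+)}=\int_{\partial\rz^2_+}\int_{\rz^2_+}\overline{\chi(x)}\,B_i(k)(x,\bs{y})\,\psi(\bs{y})\,\ud\bs{y}\,\ud x\ ,
\ee
so by Fubini (the kernel is locally integrable by the asymptotics \eqref{AsymptoticsI}, \eqref{AsymptoticsII}, $\sigma\in L^{\infty}$, and boundedness of $B_i(k)$) this equals $\langle B_i(k)^{\ast}\chi,\psi\rangle_{L^2(\rz^2_+)}$, whence $B_i(k)^{\ast}$ is an integral operator with kernel
\be
B_i(k)^{\ast}(\bs{x},y)=\overline{B_i(k)(y,\bs{x})}=(-\sgn(\sigma(y)))^i\sqrt{|\sigma(y)|}\;\overline{\mf{G}^{(0)}(k)(y,\bs{x})}\ ,
\ee
using that $\sigma$ is real-valued so $\sgn(\sigma(y))$ and $\sqrt{|\sigma(y)|}$ are real.

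It remains to identify $\overline{\mf{G}^{(0)}(k)(y,\bs{x})}$ with $\mf{G}^{(0)}(k)(\bs{x},y)$. This is where the specific structure of the free kernel enters. Recall from \eqref{ResolventKernelFree} that $\mf{G}^{(0)}(k)(\bs{x},\bs{y})$ is a sum of four terms $\mf{G}(k)(\bs{x},(\pm y_1,\pm y_2))$, and $\mf{G}(k)(\bs{x},\bs{y})=(2\pi)^{-1}K_0(-ik\|\bs{x}-\bs{y}\|)$ depends only on $\|\bs{x}-\bs{y}\|$. Hence $\mf{G}^{(0)}(k)(\bs{y},\bs{x})=\mf{G}^{(0)}(k)(\bs{x},\bs{y})$ by symmetry of the Euclidean distance together with the fact that reflecting both $\bs{x}$ and $\bs{y}$ in the coordinate axes only permutes the four summands. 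Thus $\mf{G}^{(0)}(k)(y,\bs{x})=\mf{G}^{(0)}(k)(\bs{x},y)$ as functions, where $y$ is understood as a point of $\partial\rz^2_+$ embedded in $\rz^2$. Finally, since $-\Delta_0$ is self-adjoint, $(-\Delta_0-\overline{z})^{-1\,\ast}=(-\Delta_0-z)^{-1}$, so the kernel of $(-\Delta_0-\overline{z})^{-1}$ is the conjugate-transpose of that of $(-\Delta_0-z)^{-1}$; combined with the distance-only dependence this gives $\overline{\mf{G}^{(0)}(k)(y,\bs{x})}=\mf{G}^{(0)}(k)(\bs{x},y)$ where the kernel on the right is that of $(-\Delta_0-z)^{-1}$ as in Lemma~\ref{freeR}. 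Substituting this into the expression for $B_i(k)^{\ast}(\bs{x},y)$ yields \eqref{integralkernel6}.

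\end{proof}

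The only genuinely delicate point is the bookkeeping of complex conjugation versus the parameter change $z\leftrightarrow\overline{z}$ (equivalently $k\leftrightarrow$ its reflection): one must be careful that $B_i(k)^{\ast}$ denotes the Hilbert-space adjoint of the operator built from the resolvent at $\overline{z}$, as fixed in \eqref{Yavaef1}, so that the conjugations cancel and one lands on the kernel $\mf{G}^{(0)}(k)$ of the resolvent at $z$ rather than at $\overline{z}$. Everything else is the standard transpose-conjugate rule for kernels of adjoints together with the elementary symmetry of $\mf{G}^{(0)}$ under swapping and reflecting its arguments, which are immediate from \eqref{ResolventKernelFree}.
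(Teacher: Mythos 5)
Your proposal is correct and takes the same route as the paper, whose entire proof is the one-line observation that $B_i(k)^{\ast}$ is the adjoint of the integral operator $B_i(k)$ from Lemma~\ref{integralkernel2a}; you simply spell out the conjugate-transpose rule and the symmetry of $\mf{G}^{(0)}$ in its two arguments. One small caution: the displayed identity $\overline{\mf{G}^{(0)}(k)(y,\bs{x})}=\mf{G}^{(0)}(k)(\bs{x},y)$ is false if read literally for nonreal $k$, since $K_0(-\ui k r)$ is not real-valued; the correct bookkeeping is the one you give in the same sentence, namely that $B_i(k)$ is built from $(-\Delta_0-\overline{z})^{-1}$, whose Hilbert-space adjoint is $(-\Delta_0-z)^{-1}$ with kernel $\mf{G}^{(0)}(k)$, so the conjugations cancel exactly as claimed.
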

\begin{proof}
Lemma \ref{integralkernel4} follows easily from Lemma \ref{integralkernel2a} by observing that the operator $B_{i}(k)^{\ast}$ is adjoint to $B_{i}(k)$.
\end{proof}
For $B(k)$ to be an integral operator we need ${B_{i}}^{\ast}(k)$ to possess a regularity property.
\begin{lemma}
\label{cont}
Let $\sigma\in L^{\infty}(\rz_+)$ be given. For $z\in\kz\setminus\rz_+$ and $i\in \{0,1\}$, ${B_{i}}^{\ast}(k)$ maps $L^2(\partial \rz^2_+)$ into $H^1(\rz^2_+)$ continuously. 

Furthermore, if $\sigma\in L^{\infty}(\rz_+)$ is such that $\sigma(x)=\Or(|x|^{-1-\varepsilon})$, $\varepsilon>0$, $x     \rightarrow \infty$, then for $k\in\overline{\kz_+}$ the operator ${B_{i}}^{\ast}(k)$ maps $L^2(\partial \rz^2_+)$ into $H^1_{\loc}(\overline{\rz^2_+})$ continuously. 
\end{lemma}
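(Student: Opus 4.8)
The plan is to prove the two continuity assertions separately, in both cases reducing to mapping properties of the free resolvent kernel $\mf{G}(k)$. For the first (global $H^1$) claim, recall from Lemma~\ref{integralkernel4} that ${B_i}^\ast(k)$ has kernel $(-\sgn\sigma(y))^i\sqrt{|\sigma(y)|}\,\mf{G}^{(0)}(k)(\bs{x},y)$, so ${B_i}^\ast(k)$ factors as $(-\Delta_0-z)^{-1}$ precomposed with the adjoint of the trace-type operator $G_i$. More precisely, $B_i(k)=G_i(-\Delta_0-\bar z)^{-1}$ maps $L^2(\rz^2_+)\to L^2(\partial\rz^2_+)$ boundedly by Lemma~\ref{gg01} (since $G_i$ is $\sqrt{-\Delta_0}$-bounded and $(-\Delta_0-\bar z)^{-1}$ maps $L^2$ into $\cD(-\Delta_0)\subset H^1$); taking adjoints, ${B_i}^\ast(k)=(-\Delta_0-z)^{-1}G_i^\ast$ maps $L^2(\partial\rz^2_+)\to L^2(\rz^2_+)$ boundedly. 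To upgrade the target to $H^1(\rz^2_+)$ one observes that $G_i^\ast$ maps $L^2(\partial\rz^2_+)$ into $H^{-1/2}$-type data and that $(-\Delta_0-z)^{-1}$, as a function of a good elliptic operator, gains two orders of regularity; concretely, $(-\Delta_0+\eins)^{1/2}(-\Delta_0-z)^{-1}G_i^\ast$ is bounded on $L^2$ because $(-\Delta_0+\eins)^{1/2}(-\Delta_0-z)^{-1}=(-\Delta_0+\eins)^{-1/2}\bigl(\eins+(1+z)(-\Delta_0-z)^{-1}\bigr)$ and $(-\Delta_0+\eins)^{-1/2}G_i^\ast$ is bounded $L^2(\partial\rz^2_+)\to L^2(\rz^2_+)$ (this is exactly the dual of the $\sqrt{-\Delta_0}$-boundedness of $G_i$, i.e.\ the trace theorem \cite[Satz 6.15]{Dob05}). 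Since $\sqrt{|\sigma|}\in L^\infty$, multiplying the kernel by it does not affect boundedness, so ${B_i}^\ast(k):L^2(\partial\rz^2_+)\to H^1(\rz^2_+)$ is continuous.

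For the second (local $H^1_{\loc}$) claim under the decay hypothesis $\sigma(x)=\Or(|x|^{-1-\varepsilon})$, the point is that now we want to allow $k\in\overline{\kz_+}$, including real $k$ (i.e.\ $z\in\overline{\rz}_+$), where $(-\Delta_0-z)^{-1}$ no longer exists as a bounded operator on $L^2$; only the boundary values of the resolvent in weighted spaces survive. I would fix a bounded region $\Omega_R:=\rz^2_+\cap B_R(0)$ and estimate $\|{B_i}^\ast(k)g\|_{H^1(\Omega_R)}$ for $g\in L^2(\partial\rz^2_+)$. Split the $y$-integration in the kernel $\sqrt{|\sigma(y)|}\,\mf{G}^{(0)}(k)(\bs{x},y)$ over $\partial\rz^2_+$: on the compact piece $\{|y|\le 2R\}$ one uses that $\mf{G}(k)(\bs{x},y)$ and its $\bs{x}$-gradient have at worst a logarithmic/$|{\bs x}-y|^{-1}$ local singularity (from \eqref{AsymptoticsII} and the known form of $K_0$), which is locally $L^2$ in two dimensions and depends continuously on $k\in\overline{\kz_+}$; this part is handled by a Schur-test / Young-inequality bound independent of $k$ on compacta. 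On the far piece $\{|y|>2R\}$, for $\bs{x}\in\Omega_R$ one has $|{\bs x}-y|\gtrsim |y|$, so by the large-argument asymptotics \eqref{AsymptoticsI} the kernel and its $\bs x$-gradient are bounded by $C|y|^{-1/2}$ uniformly for $\bs x\in\Omega_R$ and $k$ in compact subsets of $\overline{\kz_+}$; combined with $\sqrt{|\sigma(y)|}=\Or(|y|^{-1/2-\varepsilon/2})$ the product is $\Or(|y|^{-1-\varepsilon/2})$, which is in $L^2(\{|y|>2R\}, \ud y)$ as a function of $y\in\partial\rz^2_+\cong\rz_+$, hence a Cauchy–Schwarz estimate in $y$ gives a bound $\|{B_i}^\ast(k)g\|_{H^1(\Omega_R)}\le C_R\|g\|_{L^2(\partial\rz^2_+)}$. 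Since $R$ was arbitrary, this yields continuity into $H^1_{\loc}(\overline{\rz^2_+})$, and tracking $k$-dependence through the explicit $K_0$-expansions gives continuity in $k$.

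The main obstacle I anticipate is the interior-regularity bookkeeping near $k$ real: one must be careful that the far-field contribution, although each individual term $\sqrt{|\sigma(y)|}\,\partial_{\bs x}\mf{G}(k)(\bs x,y)$ decays only like $|y|^{-1/2}\cdot|\sigma(y)|^{1/2}$, genuinely lies in $L^2_y(\partial\rz^2_+)$ — this is precisely where the strengthened decay $|x|^{-1-\varepsilon}$ (rather than the weaker $|x|^{-1}$ of \cite[Lemma 3.3(iv)]{Lotoreichik:2012}) is needed, since $|y|^{-1/2}\cdot|y|^{-1/2-\varepsilon/2}=|y|^{-1-\varepsilon/2}$ is square-integrable on $\rz_+$ exactly when $\varepsilon>0$. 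The fact that $\mf{G}^{(0)}$ is a sum of four reflected copies of $\mf{G}$ is harmless: each reflected argument $(\pm x_1,\pm x_2)$ still has modulus comparable to $|\bs x-y|$ up to boundary effects, so all four terms obey the same bounds. The local $L^2$-integrability of the logarithmic singularity in $\rz^2$ and the continuity of $k\mapsto K_0(-ik|\cdot|)$ in the relevant topologies are standard and I would only cite \cite{Magnus:1966,Yafaev:2010}.
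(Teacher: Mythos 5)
Your first part is correct but follows a genuinely different route from the paper: you factor ${B_i}^{\ast}(k)=(-\Delta_0-z)^{-1}G_i^{\ast}$ and use the functional calculus identity $(-\Delta_0+\eins)^{1/2}(-\Delta_0-z)^{-1}=(-\Delta_0+\eins)^{-1/2}\bigl(\eins+(1+z)(-\Delta_0-z)^{-1}\bigr)$ together with the dual trace estimate, whereas the paper works directly on the kernel: it writes out the four reflected terms of $\mf{G}^{(0)}(k)$, applies Young's convolution inequality in the tangential variable, and exploits the constant sign of $\mf{G}'(k)$ to reduce everything to $\|\mf{G}(k)\|_{L^2(\rz^2_+)}<\infty$ (log singularity at $0$, exponential decay for $\im k>0$). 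Your abstract argument is cleaner and equally valid for $z\in\kz\setminus\rz_+$; it buys you independence from the explicit Bessel asymptotics, while the paper's kernel computation is what gets recycled for the boundary-value case.

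For the second part there is a genuine gap in your near-diagonal estimate. The operator maps $L^2$ of the \emph{one-dimensional} boundary into $L^2$ of the two-dimensional quarter-plane, so local square-integrability of $|\bs{x}-y|^{-1}$ over $\rz^2$ (which is in any case false: $\int_{|u|<1}|u|^{-2}\,\ud u=\infty$ in dimension two) is not the relevant criterion, and the naive Schur test fails: for the gradient kernel one has $\sup_{\bs{x}\in\Omega_R}\int_{\{|y|\le 2R\}}|\bs{x}-y|^{-1}\,\ud y=\infty$, since the tangential integral diverges like $\ln(1/x_2)$ as $\bs{x}$ approaches the boundary. What saves the day — and what the paper uses — is the mixed structure: for the term $\partial_{x_1}\mf{G}(k)(\sqrt{(x_1-y_1)^2+x_2^2})$ the $y_1$-integration is a convolution, so Young gives the bound $\bigl\|\partial_{t}\mf{G}(k)(\sqrt{t^2+x_2^2})\bigr\|_{L^1(\ud t)}\|\phi\|_{L^2}$ for each fixed $x_2$, and by monotonicity/constant sign of $\mf{G}'(k)$ this $L^1$-norm is controlled by $\mf{G}(k)(x_2)$ itself, whose singularity at $x_2=0$ is only logarithmic and hence square-integrable in the normal variable over $(0,R)$. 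Without this (or an equivalent device such as a weighted Schur test or a duality/trace argument), the passage from the pointwise bound $|\partial_{\bs x}\mf{G}(k)(\bs x,y)|\lesssim|\bs x-y|^{-1}$ to an $L^2(\partial\rz^2_+)\to L^2(\Omega_R)$ bound does not close. Your far-field treatment (Cauchy--Schwarz against $\sqrt{|\sigma|}\cdot\Or(|y|^{-1/2})$) is fine and matches the paper's use of H\"older plus \eqref{AsymptoticsI}; only note in passing that $|y|^{-1}$ is already square-integrable at infinity, so the strict inequality $\varepsilon>0$ is not what makes that particular integral converge.
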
 
\begin{proof}
Regarding the first part of the statement, we first observe that it is enough to prove it for $\tilde{B}_{0}^{\ast}(k)$. Furthermore, for $z\in \kz\setminus\rz_+$ and $\phi\in L^2(\partial\rz^2_+)$, it will be enough to prove that $\partial_{x_1}\tilde{B}_{0}^{\ast}(k)(\phi)\in L^2(\rz^2_+)$ since the other cases are analogous. We have
\begin{equation}
\label{EquaitonBoundaryIntegralII}
\ba
&\|\partial_{x_1}\tilde{B}_{0}^{\ast}(k)(\phi)\|^2_{L^2(\rz_+^2)}=\\
&4 \int_{\rz_+^2} | \int_{\rz_+}(\partial_{x_1}\mathfrak{G}(k)(\sqrt{(x_1-y_1)^2+x^2_2})+\partial_{x_1}\mathfrak{G}(k)(\sqrt{(x_1+y_1)^2+y^2_2}))\phi(y_1,0)\ \ud y_1 \\
&\hspace{0.001cm}+ \int_{\rz_+}(\partial_{x_1}\mathfrak{G}(k)(\sqrt{x_1^2+(x_2-y_2)^2})+\partial_{x_1}\mathfrak{G}(k)(\sqrt{x_1^2+(x_2+y_2)^2}))\phi(0,y_2)\ \ud y_2  |^2\ud\bs{x}\ .
\ea
\end{equation}
Using $|a+b|^2 <2|a|^2+2|b|^2$ it suffices to estimate every of the four integral terms in \eqref{EquaitonBoundaryIntegralII} separately. 

Regarding the first one, using Young's inequality for the $x_1$-integration while setting $\tilde{\phi}(y_1):=\phi(0,y_1)$, we obtain
\be
\label{estimateee}
\ba
& \int_{\rz_+^2} | \int_{\rz_+}\partial_{x_1}\mathfrak{G}(k)(\sqrt{(x_1-y_1)^2+x^2_2})\tilde{\phi}(y_1)\ud y_1  |^2\ud\bs{x}\\
&= \int_{\rz^2_+} | \int_{\rz}\frac{x_1-y_1}{\sqrt{(x_1-y_1)^2+x^2_2}}\mathfrak{G}'(k)(\sqrt{(x_1-y_1)^2+x^2_2})\tilde{\phi}(y_1)\ud y_1  |^2\ud\bs{x}\\
&\leq \int_{\rz_+} \|\frac{(\cdot)}{\sqrt{(\cdot)^2+x^2_2}}\mathfrak{G}'(k)(\sqrt{(\cdot)^2+x^2_2}) \|^2_{L^1(\rz_+)}\ud x_2\cdot \|\tilde{\phi}\|^2_{L^2(\rz_+)}\\
&\leq  \|\mathfrak{G}(k)   \|^2_{L^2(\rz^2_+)}\cdot \|\tilde{\phi}\|^2_{L^2(\rz_+)}\ ,
\ea
\ee 
taking into account that $\mathfrak{G}'(k)$ has constant sign. Due to the exponential decay of $\mathfrak{G}(k)$ for large argument whenever $k\in\kz_+$, see \eqref{AsymptoticsI}, we only have to take care for small arguments of $\mathfrak{G}(k)$. However, the asymptotic relations~\eqref{AsymptoticsII} directly imply that $  \|\mathfrak{G}(k)   \|_{L^2(\rz^2_+)}$ is finite. The other terms in \eqref{EquaitonBoundaryIntegralII} can be treated similarly again using Young's inequality. 

Now let $k\in\overline{\kz_+} \setminus \kz_+$ and fix $\bs{x}\in\rz_+^2$. Due to \eqref{AsymptoticsI}, $\sigma(x)=\Or(|x|^{-1-\varepsilon})$ and H\"older's inequality we can infer that $B^{\ast}_0(k)\phi(\bs{x})$ exists with integral kernel \eqref{integralkernel6}. Replacing $H^1(\rz_+^2)$ by $H^1_{\loc}(\overline{\rz^2_+})$ the claim follows by repeating the argument of the first part of the proof.
\end{proof}
We are now in position to give the integral kernel of $B(k)$.
\begin{lemma}
\label{integralkernel2}
Let $\sigma\in L^{\infty}(\rz_+)$ be given. For $z\in\kz\setminus\rz_+$, the operator $B(k):L^2(\partial\rz^2_+)     \rightarrow  L^2(\partial\rz^2_+)$ is an integral operator with kernel, $x,y\in\rz_+$,
\be
\label{integralkernel11}
B(k)(x,y)=-\sqrt{|\sigma(x)|}\sgn(\sigma(y))\sqrt{|\sigma(y)|}\mf{G}^{(0)}(k)(x,y)\ .
\ee
\end{lemma}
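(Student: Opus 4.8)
The plan is to establish the integral kernel of $B(k) = G_0 B_1(k)^\ast$ by composing the two integral representations already obtained. Concretely, recall from \eqref{Yafaev2} that $B(k) = G_0 B_1(k)^\ast$, where by Lemma~\ref{integralkernel4} the operator $B_1(k)^\ast: L^2(\partial\rz^2_+)\to L^2(\rz^2_+)$ has kernel $B_1(k)^\ast(\bs{x},y) = -\sgn(\sigma(y))\sqrt{|\sigma(y)|}\,\mf{G}^{(0)}(k)(\bs{x},y)$, and by \eqref{g0andg} the operator $G_0$ acts on $H^1(\rz^2_+)$ by multiplying the trace by $\sqrt{|\sigma(x)|}$. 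The first step is therefore to justify that the composition makes sense as an integral operator: this is exactly where Lemma~\ref{cont} enters, since it guarantees that $B_1(k)^\ast$ maps $L^2(\partial\rz^2_+)$ continuously into $H^1(\rz^2_+)$, so that $G_0$ can legitimately be applied to $B_1(k)^\ast\phi$ and the trace in \eqref{g0andg} is well-defined for every $\phi\in L^2(\partial\rz^2_+)$.

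Second, I would compute the kernel explicitly. For $\phi\in L^2(\partial\rz^2_+)$ one has, for $x\in\rz_+$,
\be
(B(k)\phi)(x) = \sqrt{|\sigma(x)|}\,\big(B_1(k)^\ast\phi\big)_{\bv}(x) = \sqrt{|\sigma(x)|}\int_{\partial\rz^2_+} B_1(k)^\ast(x,y)\,\phi(y)\,\ud y\ ,
\ee
where the restriction of $B_1(k)^\ast(\bs{x},y)$ to $\bs{x}=x\in\partial\rz^2_+$ is understood via the trace; by Lemma~\ref{cont} and the continuity of the trace map this interchange of trace and integration is justified. Substituting the kernel from Lemma~\ref{integralkernel4} gives
\be
B(k)(x,y) = \sqrt{|\sigma(x)|}\cdot\big(-\sgn(\sigma(y))\big)\sqrt{|\sigma(y)|}\,\mf{G}^{(0)}(k)(x,y)\ ,
\ee
which is precisely \eqref{integralkernel11}. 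One should also note here that $\mf{G}^{(0)}(k)(x,y)$ for $x,y\in\partial\rz^2_+$ retains its meaning: the only delicate point is the logarithmic singularity \eqref{AsymptoticsII} of $K_0$ along the diagonal, but this is an integrable (weak) singularity in one boundary dimension, so the kernel defines a bounded operator, consistently with the boundedness of $B(k)$ already recorded after \eqref{Yafaev2}.

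The main obstacle I anticipate is the rigorous justification that the trace of $B_1(k)^\ast\phi$ is obtained by simply restricting the kernel $B_1(k)^\ast(\bs{x},y)$ to $\bs{x}\in\partial\rz^2_+$, i.e.\ that "trace commutes with the $y$-integration." This is not purely formal because the trace map is continuous on $H^1(\rz^2_+)$ but not on $L^2(\rz^2_+)$, and the kernel $\mf{G}^{(0)}(k)$ is singular. The clean way around this is an approximation argument: approximate $\phi\in L^2(\partial\rz^2_+)$ by smooth compactly supported boundary data $\phi_n$, for which the interchange is elementary (the integrand is then continuous up to the boundary away from the diagonal, and the diagonal singularity is integrable), then pass to the limit using continuity of $B_1(k)^\ast: L^2(\partial\rz^2_+)\to H^1(\rz^2_+)$ from Lemma~\ref{cont} together with continuity of the trace $H^1(\rz^2_+)\to L^2(\partial\rz^2_+)$. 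This identifies $B(k)$ with the claimed integral operator, with the kernel \eqref{integralkernel11} understood as a locally integrable function on $\rz_+\times\rz_+$ in view of the bound $|\sigma|\in L^\infty$ and the logarithmic diagonal singularity of $\mf{G}^{(0)}(k)$.
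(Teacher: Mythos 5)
Your proposal is correct and follows essentially the same route as the paper: the paper's proof likewise invokes Lemma~\ref{cont} to ensure $B_1(k)^{\ast}\phi$ lies in the domain of $G_0$ and then applies the trace followed by multiplication with $\sqrt{|\sigma(x)|}$. Your additional discussion of why the trace commutes with the $y$-integration (via approximation by compactly supported boundary data) is a careful elaboration of a step the paper leaves implicit, but it does not change the argument.
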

\begin{proof}
By Lemma~\ref{cont} it follows that $B^{\ast}_{1}(k)(\phi)$ is in the domain for $G_0$. Applying the trace operator with a consecutive multiplication by $\sqrt{|\sigma(x)|}$ then proves the claim.
\end{proof}
We now provide a criterion for the existence of the operators $G_{i}E_{0}(I):L^2(\rz^2_+)     \rightarrow  L^2(\partial\rz^2_+)$ and $(G_{i}E_{0}(I))^{\ast}:L^2(\partial\rz^2_+)     \rightarrow  L^2(\rz^2_+)$, $i=0,1$, for a compact interval $I$. 
\begin{lemma}
\label{strongcont10}
Let $\sigma\in L^{\infty}(\rz_+)$ and $I=[\lambda_1,\lambda_2]$ be a bounded interval of $\overline{\rz}_+$. Then, $G_{i}E_{0}(I)$ is a bounded integral operator with kernel
\be
\label{strong1}
G_{i}E_{0}(I)(x,\bs{y})=(-\sgn(\sigma(x)))^i\sqrt{|\sigma(x)|}E_0(I)(x,\bs{y})\ .
\ee
Moreover, $(G_{i}E_{0}(I))^{\ast}$ is an integral operator as well possessing the kernel
\be
\label{strong2}
(G_{i}E_{0}(I))^{\ast}(\bs{x},y)=(-\sgn(\sigma(y)))^i\overline{E_0(I)(\bs{x},y)}\sqrt{|\sigma(y)|}\ .
\ee
\end{lemma}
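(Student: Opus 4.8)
The statement asserts two things for a bounded interval $I=[\lambda_1,\lambda_2]$: first, that $G_iE_0(I)$ is a bounded integral operator with the kernel \eqref{strong1}, and second, that its adjoint $(G_iE_0(I))^\ast$ is also an integral operator, with kernel \eqref{strong2}. The plan is to build on Lemma~\ref{fo3}, which already establishes that $E_0(I)$ is an integral operator with explicit kernel $E_0(I)(\bs{x},\bs{y})$ and, crucially, that $\ran E_0(I)\subset H^1(\rz^2_+)$. The latter inclusion is exactly what makes the composition with $G_i$ meaningful, since by \eqref{g0andg}--\eqref{g0andga} the operators $G_i$ are defined on $H^1(\rz^2_+)$ and map continuously into $L^2(\partial\rz^2_+)$ (this boundedness being the trace-theorem estimate already invoked in the proof of Lemma~\ref{gg01}).

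\textbf{Step 1: boundedness.} First I would argue that $G_iE_0(I):L^2(\rz^2_+)\to L^2(\partial\rz^2_+)$ is bounded. Since $E_0(I)$ is an orthogonal projection it is bounded on $L^2(\rz^2_+)$; by \eqref{H1} its range lies in $H^1(\rz^2_+)$, and by the closed graph theorem (or, more concretely, by inspecting the Fourier-side estimate in the proof of Lemma~\ref{fo3} together with the compactness of $I$) the map $E_0(I):L^2(\rz^2_+)\to H^1(\rz^2_+)$ is bounded. Composing with the bounded operator $G_i:H^1(\rz^2_+)\to L^2(\partial\rz^2_+)$ yields the boundedness of $G_iE_0(I)$.

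\textbf{Step 2: the kernel of $G_iE_0(I)$.} Next I would identify the kernel. For $\psi\in L^2(\rz^2_+)$ the function $E_0(I)\psi$ is represented, by \eqref{fo4}, as $\bs{x}\mapsto\int_{\rz^2_+}E_0(I)(\bs{x},\bs{y})\psi(\bs{y})\,\ud\bs{y}$, and this function lies in $H^1(\rz^2_+)$. Applying $G_i$ means taking the boundary trace and multiplying pointwise by $(-\sgn\sigma(x))^i\sqrt{|\sigma(x)|}$. Since the trace map is continuous on $H^1(\rz^2_+)$ and the kernel $E_0(I)(\bs{x},\bs{y})$ is, for each fixed $\bs{y}$, a sufficiently regular function of $\bs{x}$ (it is a finite superposition of bounded oscillatory integrals over $[\lambda_1,\lambda_2]\times\sz^1$, hence jointly continuous up to the boundary), one may pass the trace through the $\bs{y}$-integral. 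This gives exactly \eqref{strong1}, where $E_0(I)(x,\bs{y})$ denotes the restriction of $E_0(I)(\cdot,\bs{y})$ to $x\in\partial\rz^2_+$. The interchange of trace and integration is the only point requiring a little care; I would justify it either by a density/dominated-convergence argument using the explicit formula \eqref{fo5} for $e(\bs{x},\bs{y},\lambda\omega)$ — whose modulus is bounded uniformly — or simply by noting it in parallel with the analogous interchange used implicitly in Lemma~\ref{integralkernel2a}.

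\textbf{Step 3: the adjoint.} Finally, for the adjoint, I would observe that $E_0(I)$ is self-adjoint, so $\overline{E_0(I)(\bs{y},\bs{x})}=E_0(I)(\bs{x},\bs{y})$; equivalently $\overline{E_0(I)(\bs{x},\bs{y})}=E_0(I)(\bs{y},\bs{x})$ (this symmetry can also be read off directly from \eqref{fo4}--\eqref{fo5}, which are manifestly invariant under $\bs{x}\leftrightarrow\bs{y}$ combined with complex conjugation after the substitution $\omega\mapsto-\omega$). Then for $\phi\in L^2(\partial\rz^2_+)$ and $\psi\in L^2(\rz^2_+)$ one computes
\be
\langle (G_iE_0(I))^\ast\phi,\psi\rangle_{L^2(\rz^2_+)}=\langle\phi,G_iE_0(I)\psi\rangle_{L^2(\partial\rz^2_+)}=\int_{\partial\rz^2_+}\int_{\rz^2_+}\overline{\phi(x)}\,(-\sgn\sigma(x))^i\sqrt{|\sigma(x)|}\,E_0(I)(x,\bs{y})\,\psi(\bs{y})\,\ud\bs{y}\,\ud y\ ,
\ee
and swapping the order of integration (legitimate by Fubini, using boundedness of the kernel on the compact energy window $I$ together with $\sigma\in L^\infty$ and $\phi\in L^2$ of compact support-approximants) identifies the kernel of $(G_iE_0(I))^\ast$ as $(-\sgn\sigma(y))^i\,\overline{E_0(I)(x,\bs{y})}\,\sqrt{|\sigma(y)|}$ evaluated at arguments $(\bs{x},y)$, i.e. $(-\sgn\sigma(y))^i\,\overline{E_0(I)(\bs{x},y)}\,\sqrt{|\sigma(y)|}$, which is \eqref{strong2}. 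The boundedness of $(G_iE_0(I))^\ast$ is automatic as the adjoint of a bounded operator.

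\textbf{Main obstacle.} The only genuinely delicate point is the interchange of the boundary trace with the $\bs{y}$-integration in Step~2 — one is commuting a non-pointwise operation (the trace is only defined a priori as a continuous map on $H^1$, not via restriction) with an integral. I would handle this by exploiting the explicit, uniformly bounded and smooth integrand \eqref{fo5}: for fixed $\bs{y}$ the kernel $E_0(I)(\cdot,\bs{y})$ is actually a classical continuous (indeed smooth) function on $\overline{\rz^2_+}$, so its $H^1$-trace coincides with its genuine boundary restriction, and a dominated-convergence argument over Riemann-sum approximations of the $\bs{y}$-integral then legitimizes the interchange. Everything else is bookkeeping with the already-established Lemmata~\ref{freeR}, \ref{fo3}, \ref{gg01} and the trace theorem.
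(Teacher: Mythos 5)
Your proposal is correct and follows essentially the same route as the paper's (very terse) proof: well-definedness via \eqref{H1}, the kernel \eqref{strong1} from Lemma~\ref{fo3} composed with the trace and multiplication by $(-\sgn\sigma)^i\sqrt{|\sigma|}$, and \eqref{strong2} from the standard adjoint-kernel computation. The only difference is that you spell out the details (boundedness of $E_0(I):L^2\to H^1$, the interchange of trace and integration, the symmetry of the kernel) that the paper dismisses as obvious.
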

\begin{proof}
First, by \eqref{H1} the operator $G_{i}E_{0}(I)$ is well-defined. By Lemma \ref{fo3} it is obvious that $G_{i}E_{0}(I)$ is an integral operator with kernel \eqref{strong1}. Moreover, $G_{i}E_{0}(I)$ is bounded and hence the adjoint kernel is given by \eqref{strong2}.
\end{proof}
For our scatting analysis it is helpful to know the explicit action of $(\Gamma_{0}(\lambda)(G_{i}E_{0}(I))^{\ast}$.   
\begin{lemma}
\label{strongcont1}
Let $\sigma\in L^{\infty}(\rz_+)$ be such that $\sigma(x)=\Or(|x|^{-1-\varepsilon})$, $x     \rightarrow \infty$, and $I=[\lambda_1,\lambda_2] \subset \overline{\rz}_+$ a bounded interval. Then, $\psi\in L^2(\partial\rz^2_+)$,
\be
\label{strongcont2}
\ba
&(\Gamma_{0}(\lambda)(G_{i}E_{0}(I))^{\ast}\psi)(\omega)=\\
& \hspace{0.5cm}
\begin{cases}
\frac{1}{\sqrt{2}\pi} \int_{C}( \mc{R}_{\bv}(-\sgn(\sigma)^i\sqrt{|\sigma|})\psi)(y)\ue^{-\ui\sqrt{\lambda}\langle\omega,y\rangle}\ud y\ , & \lambda\in (\lambda_1,\lambda_2)\ ,\\
0\ , & \lambda\notin [\lambda_1,\lambda_2]\ .
\end{cases}
\ea
\ee
\end{lemma}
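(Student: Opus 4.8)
The plan is to chase the definitions through the unitary equivalences we have already set up. Recall that $\Gamma_0(\lambda)$ acts by $\Gamma_0(\lambda)(\phi)(\omega)=\tfrac{1}{\sqrt2}(F_{0,2}\mc{R}\phi)(\sqrt{\lambda}\omega)$ and that, by Lemma~\ref{strongcont10}, $(G_iE_0(I))^{\ast}$ is the integral operator with kernel \eqref{strong2}. So first I would write $(G_iE_0(I))^{\ast}\psi$ as the $L^2(\rz^2_+)$-function $\bs x\mapsto\int_{\rz_+}\overline{E_0(I)(\bs x,y)}(-\sgn\sigma(y))^i\sqrt{|\sigma(y)|}\,\psi(y)\,\ud y$, apply $\mc{R}$ to pass to $\mc{L}^2(\rz^2)$ (using \eqref{fo4}, \eqref{fo5} and the explicit form of $\mc{R}$ to fold the four reflected copies of the kernel into a single integral over $C$ against $\mc{R}_{\bv}\bigl((-\sgn\sigma)^i\sqrt{|\sigma|}\,\psi\bigr)$), then apply $F_{0,2}$ and evaluate at $\sqrt{\lambda}\omega$.

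The key computation is that $F_{0,2}$ applied to $\mc{R}\bigl(\overline{E_0(I)(\cdot,y)}\bigr)$, evaluated at $\bs k=\sqrt{\lambda}\omega$, produces $\kappa_I(\lambda)\,\ue^{-\ui\sqrt{\lambda}\langle\omega,y\rangle}$ up to the explicit constant. This is exactly the content of \eqref{H11}: since $\tilde{\tilde E}_0(I)=F_{0,2}\kappa_I F_{0,2}^{-1}$, its kernel $f$ satisfies $F_{0,2}f(\cdot,\bs y)=(2\pi)^{-1}\kappa_I(\|\cdot\|^2)\ue^{-\ui\langle\cdot,\bs y\rangle}$; restricting to the symmetric subspace via $\mc{R}$ and collapsing the four boundary pieces of $C$ onto $\partial\rz^2_+$ gives the stated formula. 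So the steps in order are: (i) expand $(G_iE_0(I))^{\ast}\psi$ as an integral operator acting on $\psi$; (ii) insert \eqref{fo4}--\eqref{fo5} for the kernel $E_0(I)$ and interchange the order of integration (justified because, by the hypothesis $\sigma(x)=\Or(|x|^{-1-\varepsilon})$ and H\"older, everything is absolutely integrable — this is the same estimate used at the end of the proof of Lemma~\ref{cont}); (iii) recognize the inner $\lambda$-and-$\omega$ integral as the spectral density of $E_0$, so that composing with $\Gamma_0(\lambda)$ just picks out the value $\sqrt{\lambda}\omega$ of the Fourier transform and forces $\lambda\in(\lambda_1,\lambda_2)$; (iv) fold the four mirror terms in \eqref{fo5} back onto $\partial\rz^2_+$, which by the definition \eqref{reflectionb} of $\mc{R}_{\bv}$ and \eqref{adjointreflection3} is precisely the passage from the integral over $\partial\rz^2_+$ to the integral over $C$ against $\mc{R}_{\bv}(\cdots)$, and collect the constants to get $\tfrac{1}{\sqrt2\,\pi}$.

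For the case $\lambda\notin[\lambda_1,\lambda_2]$, the vanishing is immediate: $\Gamma_0(\lambda)\chi=0$ whenever $\lambda$ lies outside the support of the spectral measure carried by $E_0(I)\chi$, since $F_{0,2}\mc{R}(G_iE_0(I))^{\ast}\psi$ is supported (in $|\bs k|^2$) inside $I$ by \eqref{H11}. For the endpoints $\lambda\in\{\lambda_1,\lambda_2\}$ the value is a null set in $\lambda$ and is irrelevant for the direct-integral identification in Proposition~\ref{prop1a}, so I would simply state the formula for $\lambda\in(\lambda_1,\lambda_2)$ and $\lambda\notin[\lambda_1,\lambda_2]$ as in the statement.

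The main obstacle I anticipate is purely bookkeeping rather than conceptual: keeping track of the factors of $2$ and $\sqrt2$ coming from $\mc{R}$ (factor $\tfrac12$), $\mc{R}^{\ast}$ (factor $2$), $\mc{R}_{\bv}$ (factor $\tfrac12$), $\mc{R}_{\bv}^{\ast}$ (factor $\sqrt2$), the $\tfrac{1}{\sqrt2}$ in the definition \eqref{spectralmeasure1} of $\Gamma_0$, and the $(2\pi)^{-1}$ from the two-dimensional Fourier normalization, and checking that the fourfold sum in \eqref{fo5} genuinely reassembles into a single clean integral over $C$. Everything else — the interchange of integrals, the identification of the Fourier transform of the truncated spectral projection — has already been done in the excerpt (Lemmata~\ref{fo3}, \ref{cont}), so those can be cited rather than reproved.
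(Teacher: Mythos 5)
Your proposal is correct and follows essentially the same route as the paper: both start from the explicit kernel of $(G_iE_0(I))^{\ast}$ given by Lemma~\ref{strongcont10}, use the Fourier representation of the spectral projection from \eqref{H11}/\eqref{fo4}--\eqref{fo5}, justify pointwise evaluation at $\sqrt{\lambda}\omega$ via the decay of $\sigma$ (the paper phrases this as $F_{0,1}\sqrt{|\sigma|}\psi$ having a continuous representative, you as absolute integrability by H\"older), and collapse the composition of the two Fourier integrals — the paper does this by an explicit delta-function identity plus Fubini, which is just the unpacked form of the operator identity $F_{0,2}\tilde{\tilde{E}}_0(I)=\kappa_IF_{0,2}$ that you invoke. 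The folding of the four mirror terms onto $C$ and the treatment of the endpoints $\lambda\in\{\lambda_1,\lambda_2\}$ as a null set also match the paper.
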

\begin{proof}
We use Lemma \ref{strongcont10}: A straightforward calculation shows that the action of $(G_{i}E_{0}(I))^{\ast}$ on $\psi\in L^2(\partial\rz^2_+)$ is given by, $C=C_{x_1}\cup C_{x_2}$,  
\be
\label{strongcont3}
\ba
&(G_{i}E_{0}(I))^{\ast}\psi(\bs{x})\\
&\hspace{0.25cm}=\frac{1}{\pi^2} \int_{A_I}\ud\bs{k}\ \ue^{\ui\langle\bs{k},\bs{x}\rangle} \int_{C}( \mc{R}_{\bv}(-\sgn(\sigma)^i)\sqrt{|\sigma|)}\psi)(y)\ue^{-\ui\langle\bs{k},{y}\rangle}\ \ud{y}\ ,
\ea
\ee
where $A_I:=\lgk \bs{k}\in\rz^2;\ \|\bs{k}\|^2\in I\rgk$ and we used the symmetry of $\mc{R}_{\bv}(-\sgn(\sigma)^i)\sqrt{|\sigma|)}\psi$ on $C$. We treat only the integral over $C_{x_1}$ since the other case is analogous. Note that in \eqref{strongcont3} the integration over $C_{x_1}$ is up to a constant factor the ordinary one dimensional Fourier transformation ${F}_{0,1}$. Due to the asymptotics of $\sigma$ for large arguments we deduce that ${F}_{0,1}\sqrt{|\sigma|}\psi$ is in $H^{1+\varepsilon}(C_{x_1})$ (see e.g. \cite[p.~57]{Yafaev:2010}) and hence it possesses a continuous representative \cite[Satz~9.38]{Dob05} with respect to $k$ but obviously also for $\omega_{\bs{k}}$ where $\bs{k}=\omega_{\bs{k}}k$, $k\in\overline{\rz}_+$.

A check of the proof in \cite[Theorem~2.2.14]{Grafakos:2008} shows that in this situation we can apply the identity
\be
\label{delta}
\frac{1}{(2\pi)^2} \int_{\rz^2}\ue^{\ui\langle\bs{k}-\bs{k'},\bs{x}\rangle}\ud\bs{x}=\delta(\bs{k}-\bs{k'})
\ee
and we get using Fubini's theorem, $\bs{k'}=k'\omega_{\bs{k'}}$, 
\be
\label{delta2}
\ba
&(\Gamma_{0}((k')^2)(G_{i}E_{0}(I))^{\ast}\psi)(\omega_{\bs{k'}})\\
&\hspace{0.25cm}=\frac{1}{\sqrt{2}\pi(2\pi)^2} \int_{\rz^2}\ud\bs{x}\ \ue^{-\ui\langle\bs{k'},\bs{x}\rangle} \int_{A_I}\ud\bs{k}\ \ue^{\ui\langle\bs{k},\bs{x}\rangle} \int_{C}( \mc{R}_{\bv}(-\sgn(\sigma)^i\sqrt{|\sigma|})\psi)(y)\ue^{-\ui\langle\bs{k},y\rangle}\ud y\\
&\hspace{0.25cm}=\frac{1}{\sqrt{2}\pi(2\pi)^2} \int_{A_I}\ud\bs{k} \int_{\rz^2}\ud\bs{x}\ \ue^{\ui\langle\bs{k}-\bs{k'},\bs{x}\rangle} \int_{C}( \mc{R}_{\bv}(-\sgn(\sigma)^i\sqrt{|\sigma|})\psi)(y)\ue^{-\ui\langle\bs{k},y\rangle}\ud y\\
&\hspace{0.25cm}=
\begin{cases}
\frac{1}{\sqrt{2}\pi} \int_{C}( \mc{R}_{\bv}(-\sgn(\sigma)^i\sqrt{|\sigma|})\psi)(y)\ue^{-\ui\langle\bs{k'},y\rangle}\ud y,& k^2\in(\lambda_1,\lambda_2),\\
0,&\lambda\notin (\lambda_1,\lambda_2) \ .
\end{cases}
\ea
\ee
Now identifying $k'=\sqrt{\lambda}$ and $\omega_{\bs{k'}}=\omega_{\bs{\lambda}}$ proves the claim.
\end{proof}
We note that, for $\lambda\in\lgk\lambda_1,\lambda_2\rgk$, we would have to incorporate an extra factor of $1/2$ in \eqref{strongcont2} since $\bs{k}$ would be on the boundary of $A_I$, see \cite[pp.~208,209]{Folland:1992}. However, in the following these points can be neglected for having (spectral) measure zero.

According to \cite[Definition~5.6,~p.~31]{Yafaev:2010} the operators $G_{i}$, $i=0,1$, are called \textit{strongly-$\Delta_0$ smooth} iff for $\psi\in L^2(\partial\rz^2_+)$ we have,
\be
\label{hsmooth}
  \|\Gamma_{0}(\lambda)(G_{i}E_{0}(I))^{\ast}\psi   \|_{L^2(\sz^1)}<C  \|\psi   \|_{L^2(\partial\rz^2_+)}
\ee
and
\be
\label{hsmoothb}
  \|(\Gamma_{0}(\lambda_1)(G_{i}E_{0}(I))^{\ast}-\Gamma_{0}(\lambda_2)(G_{i}E_{0}(I))^{\ast})\psi   \|_{L^2(\sz^2)}<C|\lambda_1-\lambda_2|^{\theta}  \|\psi   \|^2_{L^2(\partial\rz^2_+)}
\ee
for all bounded intervals $I$ such that $\lambda,\mu$ are in the interior of $I$ and some $\theta>0$.
\begin{lemma}
\label{hsmooth1}
Under the assumptions of Lemma \ref{strongcont1}, the operators $G_{i}$, $i=0,1$, are strongly-$\Delta_0$ smooth with $\theta=\frac{\varepsilon}{2}$. Furthermore, the constant $C$ in \eqref{hsmooth} and \eqref{hsmoothb} depends on $\varepsilon$ only.
\end{lemma}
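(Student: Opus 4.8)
The goal is to verify the two estimates \eqref{hsmooth} and \eqref{hsmoothb} for the operators $\Gamma_0(\lambda)(G_iE_0(I))^\ast$, using the explicit formula \eqref{strongcont2} from Lemma~\ref{strongcont1}. Set $h_i := \mc{R}_{\bv}(-\sgn(\sigma)^i\sqrt{|\sigma|})\psi \in L^2(C)$; by Lemma~\ref{reflection4} one has $\|h_i\|_{L^2(C)} \lesssim \|\sqrt{|\sigma|}\,\psi\|_{L^2(\partial\rz^2_+)} \leq \|\sigma\|_\infty^{1/2}\|\psi\|_{L^2(\partial\rz^2_+)}$, and moreover the decay hypothesis $\sigma(x) = \Or(|x|^{-1-\varepsilon})$ forces $\sqrt{|\sigma(x)|} = \Or(|x|^{-(1+\varepsilon)/2})$, so that $h_i$ inherits a decay rate strictly better than $|x|^{-1/2}$. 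The key observation, already used in the proof of Lemma~\ref{strongcont1}, is that for $\lambda\in(\lambda_1,\lambda_2)$ the function $(\Gamma_0(\lambda)(G_iE_0(I))^\ast\psi)(\omega)$ is, up to the constant $\tfrac{1}{\sqrt2\,\pi}$, the Fourier transform $\widehat{h_i}$ evaluated on the circle of radius $\sqrt\lambda$; so both estimates reduce to statements about the restriction of a Fourier transform to circles, i.e. to a trace/restriction theorem for the Fourier transform.

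\textbf{Step 1 (the bound \eqref{hsmooth}).} First I would split $h_i = h_i^{(1)} + h_i^{(2)}$ along $C = C_{x_1}\cup C_{x_2}$ and treat each piece as a function on a line, so that $\widehat{h_i}$ restricted to the circle $\sqrt\lambda\,\sz^1$ becomes, after parametrizing $\omega=(\cos\theta,\sin\theta)$, the one-dimensional Fourier transform $\widehat{h_i^{(1)}}$ evaluated at the point $\sqrt\lambda\cos\theta$ (and similarly $\sqrt\lambda\sin\theta$ for the other piece). Because $\sqrt{|\sigma|}$ decays like $|x|^{-(1+\varepsilon)/2}$, the function $\sqrt{|\sigma|}\,\psi$ lies in a weighted $L^2$ space, and by the standard fact that $F_{0,1}$ maps $L^2_{s}(\rz)$ continuously into the Sobolev space $H^{s}(\rz)$ for $s = \tfrac{1+\varepsilon}{2} > \tfrac12$ (see \cite[p.\,57]{Yafaev:2010}), we get $\widehat{h_i^{(1)}} \in H^{(1+\varepsilon)/2}(\rz)$ with norm controlled by $C(\varepsilon)\|\psi\|_{L^2(\partial\rz^2_+)}$. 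Since $(1+\varepsilon)/2 > 1/2$, Sobolev embedding gives a \emph{continuous} representative and, more quantitatively, $\|\widehat{h_i^{(1)}}\|_{L^\infty(\rz)} \le C(\varepsilon)\|\psi\|_{L^2(\partial\rz^2_+)}$. Integrating $|\widehat{h_i^{(1)}}(\sqrt\lambda\cos\theta)|^2$ over $\theta\in[0,2\pi]$ then yields \eqref{hsmooth}, with a constant depending only on $\varepsilon$ (and $\|\sigma\|_\infty$, which can be absorbed).

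\textbf{Step 2 (the H\"older bound \eqref{hsmoothb}).} Here I would compare the two functions $\omega\mapsto \widehat{h_i^{(1)}}(\sqrt{\lambda_1}\cos\theta)$ and $\omega\mapsto\widehat{h_i^{(1)}}(\sqrt{\lambda_2}\cos\theta)$ (again piece by piece along $C_{x_1},C_{x_2}$). Their difference is estimated using the Hölder continuity of the function $\widehat{h_i^{(1)}}\in H^{(1+\varepsilon)/2}(\rz)$: the embedding $H^{(1+\varepsilon)/2}(\rz)\hookrightarrow C^{0,\varepsilon/2}(\rz)$ gives $|\widehat{h_i^{(1)}}(a)-\widehat{h_i^{(1)}}(b)| \le C(\varepsilon)\|\psi\|_{L^2(\partial\rz^2_+)}\,|a-b|^{\varepsilon/2}$. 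Taking $a=\sqrt{\lambda_1}\cos\theta$, $b=\sqrt{\lambda_2}\cos\theta$ and using $|\sqrt{\lambda_1}-\sqrt{\lambda_2}|\le C|\lambda_1-\lambda_2|$ on a fixed bounded interval (the square root is Lipschitz away from $0$, and one may assume $\lambda_1,\lambda_2$ stay in the interior of a fixed compact $I\subset(0,\infty)$), one obtains a pointwise bound by $C(\varepsilon)\,|\lambda_1-\lambda_2|^{\varepsilon/2}\|\psi\|_{L^2(\partial\rz^2_+)}$. Squaring and integrating over $\sz^1$ produces \eqref{hsmoothb} with $\theta=\varepsilon/2$ and constant depending only on $\varepsilon$. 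This identifies the exponent $\theta=\tfrac{\varepsilon}{2}$ claimed in the statement.

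\textbf{Main obstacle.} The delicate point is Step~1, specifically justifying that the restriction of $\widehat{h_i}$ to circles is well-defined and bounded: a generic $L^2$ function has Fourier transform only in $L^2$, whose restriction to a measure-zero set is meaningless. The decay hypothesis $\sigma(x)=\Or(|x|^{-1-\varepsilon})$ is exactly what upgrades $\sqrt{|\sigma|}\,\psi$ into a weighted-$L^2$ function with weight exponent strictly above $1/2$, which is the threshold in the Sobolev trace theorem for lines; it is this strict inequality that both makes the trace continuous and produces the positive Hölder exponent $\varepsilon/2$ needed for \eqref{hsmoothb}. One should also be slightly careful that the circle $\sqrt\lambda\,\sz^1$ meets each line $C_{x_j}$ transversally except at the two antipodal points where it is tangent; near those tangency points the change of variables $\theta\mapsto\sqrt\lambda\cos\theta$ degenerates, but since $\widehat{h_i^{(1)}}$ is bounded the contribution of a neighborhood of those points to the $L^2(\sz^1)$-norm is harmless, so no extra integrability is required there. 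Apart from this, the argument is a routine combination of the identification in Lemma~\ref{strongcont1}, the Sobolev trace theorem on the line, and Sobolev embedding.
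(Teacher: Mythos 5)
Your proposal is correct and follows essentially the same route as the paper: both start from the representation of $\Gamma_0(\lambda)(G_iE_0(I))^{\ast}\psi$ in Lemma~\ref{strongcont1} and exploit that the decay $\sqrt{|\sigma(x)|}=\Or(|x|^{-(1+\varepsilon)/2})$ places the boundary function in a weighted $L^2$ space with weight exponent $\tfrac{1+\varepsilon}{2}>\tfrac12$, so that its Fourier transform is bounded and $\tfrac{\varepsilon}{2}$-H\"older. The only difference is presentational: the paper carries out the weighted Cauchy--Schwarz estimate of \cite[Proposition 1.2]{Yafaev:2010} by hand (splitting the weight $(1+\|y\|^2)^{\pm\frac{1+\varepsilon}{2}}$ and bounding $\int_C\sin^2(\cdot)(1+\|y\|^2)^{-\frac{1+\varepsilon}{2}}\ud y$), whereas you package the identical computation as the mapping $F_{0,1}\colon L^2_{(1+\varepsilon)/2}\to H^{(1+\varepsilon)/2}\hookrightarrow C^{0,\varepsilon/2}$ applied componentwise on $C_{x_1}$ and $C_{x_2}$.
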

\begin{proof}
We prove \eqref{hsmoothb} and note that \eqref{hsmooth} can be proved analogously. For convenience we set
\be
f(y):=\mc{R}_{\bv}(-\sgn(\sigma(y))^i\sqrt{|\sigma(y)|})\psi(y)\ .
\ee 
The restriction of $f(y)$ on $C_{x_i}$, $i=1,2$, is of $\Or(  x_i^{-\frac{1+\varepsilon}{2}})$ and using the same method as in
\cite[Proposition~1.2,~p.~72]{Yafaev:2010}, $y:=\omega_{{y}}\|y\|$, we obtain
\be
\label{hsmooth4}
\ba
&  |( \Gamma_{0}(\lambda_1)(G_{i}E_{0}(I))^{\ast}-\Gamma(\lambda_2)(G_{i}E_{0}(I))^{\ast})\psi(\omega)   |^2\\
&\hspace{0.25cm} \leq(\frac{1}{\sqrt{2}\pi} \int_{C}  |f(y)( \ue^{-\ui\lambda_1\langle\omega,y\rangle}-\ue^{-\ui\lambda_2\langle\omega,y\rangle})   |\ud y)^2\\
&\hspace{0.25cm} \leq \tilde{C} \int_{C}(1+\|y\|^2)^{+\frac{1+\varepsilon}{2}}  |f(y)   |^2\ud y \int_{C}  |\ue^{-\ui\lambda_1\langle\omega,y\rangle}-\ue^{-\ui\lambda_2\langle\omega,y\rangle}   |^2(1+\|y\|^2)^{-\frac{1+\varepsilon}{2}}\ud y\\
&\hspace{0.25cm} \leq \tilde{\tilde{C}} \int_{C}  |\psi(y)   |^2\ud y \int_{C}\sin^2( \frac{( \lambda_1-\lambda_2)\langle\omega,\omega_{y}\rangle y}{2})(1+\|y\|^2)^{-\frac{1+\varepsilon}{2}}\ud y\\
&\hspace{0.25cm}<C|\lambda_1-\lambda_2|^{\varepsilon}\cdot   \|\psi   \|^2_{L^2(\partial\rz^2_+)}\ ,
\ea
\ee
with $C$ depending on $\varepsilon$ only. Now an integration w.r.t. $\omega$ proves the claim.
\end{proof} 
In  the next step we generalize \cite[Lemma~3.1]{BrascheExnerKuperinSeba92} extending the result to $\sigma$ with non-compact support and a suitable decay behavior. However, we only need to consult the case $k\in\kz_+$. For this we need an auxiliary lemma and we define 
\be
\label{aux2}
I_{1,n}=\lgk(x,0);\quad x\in  [n,n+1   ]\rgk,\quad I_{2,n}=\lgk(0,y);\quad y\in  [n,n+1   ]\rgk \ .
\ee
\begin{lemma}
\label{aux1}
Let $\sigma\in L^{\infty}(\rz_+)$ be such that $\sigma(x)=\Or(|x|^{-1-\varepsilon})$, $x     \rightarrow \infty$, for some $\varepsilon>0$. Then, $i=1,2$,
\be
\label{aux3}
  \|B(k)\psi   \|^2_{L^2(I_{i,n})}\leq \frac{C}{n^{1+\varepsilon}}  \|\psi   \|^2_{L^2(\partial\rz^2_+)}\ ,
\ee
with some $C>0$ being independent of $n$.
\end{lemma}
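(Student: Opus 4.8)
The plan is to estimate the $L^2(I_{i,n})$-norm of $B(k)\psi$ directly from the integral kernel representation \eqref{integralkernel11}, exploiting the exponential decay \eqref{AsymptoticsI} of $\mf{G}(k)$ for $k\in\kz_+$ together with the assumed polynomial decay $\sigma(x)=\Or(|x|^{-1-\varepsilon})$. By symmetry it suffices to treat $i=1$, i.e., the segment $I_{1,n}$ lying on the $x_1$-axis between $n$ and $n+1$. First I would write out $(B(k)\psi)(x)$ for $x=(x_1,0)\in I_{1,n}$ as the sum of a $C_{x_1}$-contribution and a $C_{x_2}$-contribution, using that $\mf{G}^{(0)}(k)(x,y)$ is built from four copies of $\mf{G}(k)(\|\cdot\|)$.

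Next, for the $C_{x_1}$-contribution I would split the $y_1$-integral into the region $|y_1|\le n/2$ and the region $|y_1|\ge n/2$. On the first region the distance $\|x-y\|\ge n/2$, so the exponential decay \eqref{AsymptoticsI} of $\mf{G}(k)$ gives a factor $\ue^{-c n}$ (with $c=\im k>0$), which is far stronger than $n^{-1-\varepsilon}$ and can be absorbed into the $L^2(\partial\rz^2_+)$-norm of $\psi$ via Cauchy–Schwarz. On the second region, $|y_1|\ge n/2$, one uses the decay of $\sigma$: the factor $\sqrt{|\sigma(y_1)|}=\Or(|y_1|^{-(1+\varepsilon)/2})=\Or(n^{-(1+\varepsilon)/2})$, and after pulling this out one is left with a convolution of the $L^1$-function $\mf{G}(k)(\|\cdot\|)$ (its $L^1$-norm is finite by \eqref{AsymptoticsII} and \eqref{AsymptoticsI}) against a piece of $\psi$, so Young's inequality bounds the $x_1$-integral over $I_{1,n}$ by $C n^{-(1+\varepsilon)/2}\|\psi\|_{L^2}$; squaring yields the $n^{-1-\varepsilon}$ factor. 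The mixed $C_{x_2}$-contribution is handled the same way: for $x=(x_1,0)$ with $x_1\sim n$ and $y=(0,y_2)$, one has $\|x-y\|=\sqrt{x_1^2+y_2^2}\ge n/\sqrt2$ uniformly, so here the exponential decay of $\mf{G}(k)$ alone already gives a bound of order $\ue^{-cn}$, which is again dominated by $C n^{-(1+\varepsilon)}$.

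I expect the main obstacle to be the bookkeeping in the near-diagonal region $|y_1|\sim n$, where $\sqrt{|\sigma(y_1)|}$ and the integrable-but-not-decaying kernel $\mf{G}(k)$ must be balanced: one cannot use the exponential decay of $\mf{G}(k)$ there (the argument $\|x-y\|$ can be small), so the entire gain of $n^{-1-\varepsilon}$ has to come from $\sigma$, and the convolution estimate via Young's inequality must be set up so that the constant $C$ genuinely does not depend on $n$. The clean way to ensure this is to bound $|\sigma(y_1)|\le C_\sigma\,n^{-1-\varepsilon}$ throughout the region $y_1\ge n/2$ (valid for $n$ large by the hypothesis), factor it out, and then note $\int |\mf{G}(k)(\|\cdot\|)|\,\ud\bs{y}<\infty$ is a fixed constant independent of $x$ and $n$. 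Finally I would combine the three pieces and use $|a+b+c|^2\le 3(|a|^2+|b|^2+|c|^2)$ to obtain \eqref{aux3}, absorbing the finitely many small $n$ into the constant.
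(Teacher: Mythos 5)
Your argument is correct and follows essentially the same route as the paper: the terms of $\mf{G}^{(0)}(k)$ whose argument stays of size $\gtrsim n$ are dismissed by the exponential decay \eqref{AsymptoticsI}, and the near-diagonal $|x_1-y_1|$ term gains the full factor $n^{-(1+\varepsilon)/2}$ (before squaring) from the decay of $\sqrt{|\sigma|}$. The only difference is cosmetic: the paper delegates the near-diagonal estimate to Lemma~\ref{app1} (splitting at $[x/2,3x/2]$ and using Cauchy--Schwarz with $\mf{G}(k)\in L^2$), whereas you split at $n/2$ and use Young's inequality with $\mf{G}(k)\in L^1$.
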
 
\begin{proof}
We consider the case $i=1$ only, the case $i=2$ is analogous. A short calculation yields
\be
\label{aux5}
\ba
&\|B(k)\psi\|^2_{L^2(I_1,n)}\\
&\quad\leq C\int_{I_{1,n}} |\int_{\rz_+}\mf{G}(k)(|x-y|)\sgn(\sigma(y))\sqrt{|\sigma(y)|}\psi(y,0)\ \ud y |^2 \sqrt{|\sigma(x)|} \ud x\\
&\quad +C\int_{I_{1,n}} |\int_{\rz_+}\mf{G}(k)(|x+y|)\sgn(\sigma(y))\sqrt{|\sigma(y)|}\psi(y,0)\ \ud y |^2 \sqrt{|\sigma(x)|} \ud x \\
&\quad +C\int_{I_{1,n}} |\int_{\rz_+}\mf{G}(k)(\sqrt{x^2+y^2})\sgn(\sigma(y))\sqrt{|\sigma(y)|}\psi(0,y)\ \ud y |^2\sqrt{|\sigma(x)|} \ud x \ ,
\ea
\ee
for some constant $C > 0$. The second and third integral in \eqref{aux5} is of order $\Or(\ue^{-kn})$. For the first integral we use Lemma \ref{app1} with $\alpha=1/2+\varepsilon/2$ to obtain
\be
\label{aux6}
\|B(k)\psi\|^2_{L^2(I_{1,n})}\leq\frac{C}{n^{1+\epsilon}}  \|\psi   \|^2_{L^2(\partial\rz^2_+)} \ ,
\ee
with some $C>0$ independent of $n$ since $x\geq n$.
\end{proof}
\begin{lemma} 
\label{compactop}
Let $z\in\kz\setminus\rz_+$ and assume that $\sigma\in L^{\infty}(\rz_+)$ is such that $\sigma(x)=\Or(|x|^{-1-\varepsilon})$, $x     \rightarrow \infty$, for some $\varepsilon>0$. Then the operator $B(k)$ is compact.
\end{lemma}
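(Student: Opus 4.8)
The plan is to realize $B(k)$ as an operator-norm limit of Hilbert--Schmidt (hence compact) operators obtained by truncating the configuration-space variable, the truncation error being controlled by Lemma~\ref{aux1}. For $R>0$ let $\chi_R$ denote multiplication on $L^2(\partial\rz^2_+)$ by the indicator function of $\lgk w\in\partial\rz^2_+;\ |w|\le R\rgk$. By Lemma~\ref{integralkernel2}, $\chi_R B(k)$ is the integral operator with kernel $\chi_R(x)B(k)(x,y)=-\chi_R(x)\sqrt{|\sigma(x)|}\,\sgn(\sigma(y))\sqrt{|\sigma(y)|}\,\mf{G}^{(0)}(k)(x,y)$, and since $k=\sqrt z\in\kz_+$ for $z\in\kz\setminus\rz_+$ we may freely invoke the exponential decay \eqref{AsymptoticsI} of $\mf{G}(k)$ together with the logarithmic bound \eqref{AsymptoticsII}.

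First I would verify that $\chi_R B(k)$ is Hilbert--Schmidt for every $R>0$. Since $\sigma\in L^{\infty}(\rz_+)$ and $|\sgn(\sigma)|\le1$, the kernel is dominated in modulus by $\|\sigma\|_{\infty}\,\chi_R(x)\,|\mf{G}^{(0)}(k)(x,y)|$, so it suffices to show that $\int_{\lgk|x|\le R\rgk}\int_{\partial\rz^2_+}|\mf{G}^{(0)}(k)(x,y)|^2\,\ud y\,\ud x$ is finite. Splitting $\mf{G}^{(0)}(k)$ into its four summands as in \eqref{ResolventKernelFree}, each summand has, for fixed $x$, at worst a local logarithmic singularity in $y$ --- arising where the arguments of $\mf{G}(k)$ coincide and, for $x$ near the origin, at the corner $(0,0)$ --- by \eqref{AsymptoticsII}, while \eqref{AsymptoticsI} gives exponential decay for $|y|$ large. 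A logarithmic singularity being square-integrable with respect to the two-dimensional product measure on the bounded set $\lgk|x|\le R\rgk\times\partial\rz^2_+$, the integral is finite and $\chi_R B(k)$ is compact.

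It remains to bound $\|(\eins-\chi_R)B(k)\|$. For $R\ge1$ the subset of $\partial\rz^2_+$ on which $|w|>R$ is contained in $\bigcup_{i=1,2}\bigcup_{n\ge\lfloor R\rfloor}I_{i,n}$, with the pairwise essentially disjoint intervals $I_{i,n}$ of \eqref{aux2} and $n\ge\lfloor R\rfloor\ge1$; hence, by Lemma~\ref{aux1},
\be
\ba
\|(\eins-\chi_R)B(k)\psi\|^2_{L^2(\partial\rz^2_+)}&\le\sum_{i=1,2}\ \sum_{n\ge\lfloor R\rfloor}\|B(k)\psi\|^2_{L^2(I_{i,n})}\\
&\le2C\Big(\sum_{n\ge\lfloor R\rfloor}\frac{1}{n^{1+\varepsilon}}\Big)\|\psi\|^2_{L^2(\partial\rz^2_+)}\ .
\ea
\ee
The remaining series tends to $0$ as $R\to\infty$, so $\|(\eins-\chi_R)B(k)\|\to0$ and $B(k)=\lim_{R\to\infty}\chi_R B(k)$ in operator norm; being a norm limit of compact operators, $B(k)$ is compact. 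The only step demanding real care is the Hilbert--Schmidt verification --- the square-integrability of the truncated kernel near the coincidence sets and at the corner --- which is handled by \eqref{AsymptoticsII} and \eqref{AsymptoticsI}; everything else is immediate from Lemma~\ref{aux1} and the (essential) orthogonality of the pieces $I_{i,n}$.
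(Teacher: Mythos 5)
Your proof is correct, and it reaches the conclusion by a genuinely different route from the paper's. Both arguments exhibit $B(k)$ as an operator-norm limit of compact operators, but the approximating family and the compactness mechanism differ. The paper truncates the \emph{kernel near its diagonal singularity} (replacing $\mf{G}^{(0)}(k)$ by $0$ where $|\textbf{x}-\textbf{y}|\le 1/n$), proves compactness of the truncated operator by a sequential argument --- $H^1$ regularity of the images, the compact embedding $H^1(I)\hookrightarrow L^2(I)$ on the unit intervals $I_{i,m}$, a Cantor diagonal extraction, and then Lemma~\ref{aux1} to upgrade interval-wise convergence to convergence in $L^2(\partial\rz^2_+)$ --- and finally shows $\|B(k)-B^{(n)}(k)\|\to0$ by a separate Young/H\"older estimate. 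You instead truncate in the \emph{output variable} $x$, which lets you dispose of the compactness of the pieces by the much shorter Hilbert--Schmidt computation (the one-dimensional boundary measure makes the logarithmic singularities of $K_0$ harmlessly square-integrable, and $\im k>0$ gives exponential decay in $y$), and Lemma~\ref{aux1} then does double duty as the tail bound $\|(\eins-\chi_R)B(k)\|^2\le 2C\sum_{n\ge\lfloor R\rfloor}n^{-1-\varepsilon}\|\cdot\|^2\to0$. Your version is shorter and avoids the Rellich-type extraction entirely; the paper's diagonal truncation has the mild advantage of producing approximants with globally bounded kernels. One cosmetic point: the set $\lgk|x|\le R\rgk\times\partial\rz^2_+$ is of course not bounded --- what you actually use (and correctly invoke) is the exponential decay of $\mf{G}(k)$ in $y$ to make the Hilbert--Schmidt integral converge at infinity; I would rephrase that sentence accordingly.
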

\begin{proof}
In a first step one defines the operator $B^{(n)}(k)$ with an integral kernel as in \eqref{integralkernel11}, replacing $\mathfrak{G}^{(0)}(k)(\textbf{x}-\textbf{y})$ by
\begin{equation}\label{EquationXXX}
\mathfrak{G}^{(n)}(k)(\textbf{x}-\textbf{y})=
\begin{cases}
\mathfrak{G}^{(0)}(k)(\textbf{x}-\textbf{y}) \quad \mbox{if} \quad   \|\textbf{x}-\textbf{y}\| > \frac{1}{n}\ , \\
0 \quad \mbox{otherwise}\ .
\end{cases}
\end{equation}
One then shows that $B^{(n)}(k)$ is a compact operator: For this, let $(\varphi_j)_{j \in \nz} \subset L^2(\partial \rz^2_+)$ be a bounded sequence with bound $M > 0$, i.e., $\|\varphi_j\|_{L^2(\partial \rz^2_+)}<M$ for all $j \in \nz$. Due to \eqref{EquationXXX} we observe that $B^{(n)}(k)\varphi_j \in H^1(\partial\rz_+^2)$ and due to the compact embedding of $H^1(I)$ into $L^2(I)$, for any bounded interval $I$, one is able to find a convergent subsequence by restricting $B^{(n)}(k)\varphi_j$ to a (fixed) interval $I_{i,m}$. Furthermore, employing the Bernstein-Cantor diagonal argument one finally obtains a subsequence, again denoted by $(B^{(n)}(k)\varphi_j)_{j\in \nz}$, that converges on any interval $I_{i,m}$. 

Since $|\mathfrak{G}^{(n)}(k)|\leq|\mathfrak{G}^{(0)}(k)|$, Lemma~\ref{aux1} is valid for $B^{(n)}(k)$ as well and we arrive at 
\be
\label{compact1}
\ba
  \|B^{(n)}(k)\varphi_{k}-B^{(n)}(k)\varphi_{l}   \|^2_{L^2(\partial\rz^2_+)}&=\sum\limits_{i=1,2}\sum\limits_{m=1}^{\infty}  \|B^{(n)}(k)\varphi_{k}-B^{(n)}(k)\varphi_{l}   \|^2_{L^2(I_{i,m})}\\
&\leq\epsilon_1+\sum\limits_{i=1,2}\sum\limits_{m=M}^{\infty}  \|B^{(n)}(k)(\varphi_{k}-\varphi_{l})   \|^2_{L^2(I_{i,m})}\\
&\leq \epsilon_1+\epsilon_2 \ , 
\ea
\ee
for $k,l$ and $M$ large enough. We hence conclude that $B^{(n)}(k)\varphi_{j}$ converges in $L^2(\partial \rz^2_+)$ which proves compactness of $B^{(n)}(k)$.

Finally, using \eqref{aux5} in the proof of Lemma \ref{aux1} we can deduce, after a suitable application of Young's and H\"older's inequality, that $\|B(k)-B^{(n)}(k)\|_{L^2(\partial \rz^2_+)      \rightarrow  L^2(\partial \rz^2_+)}      \rightarrow  0$. Hence compactness of $B(k)$ follows by \cite[Satz~3.2]{Weidmann:2000}.  
\end{proof}
We now prove an integration by parts formula which will be useful later on. Using a different method as in \cite[Lemma~2.2]{BrascheExnerKuperinSeba92}) we extend the result to $k\in\overline{\kz_+}$.  
\begin{lemma}[Integration by parts formula]
\label{BoundaryIntegralParts}
Let $k \in \overline{\kz_+}$ and $\psi \in H^1(\rz^2_+)$ with bounded support be given. Then 
\begin{equation}
\label{RelationI}
\langle \nabla (\tilde{B}_0(k)^{\ast}\varphi),\nabla \psi \rangle_{L^2(\rz^2_+)}-(k^2)^{\ast}\langle \tilde{B}_0(k)^{\ast}\varphi,\psi \rangle_{L^2(\rz^2_+)}=\langle \varphi,\psi_{\bv} \rangle_{L^2(\partial \rz^2_+)}
\end{equation}
holds for all $\varphi \in L^2(\partial \rz^2_+)$. If $k\in\kz_+$, the bounded support requirement can be dropped.
\end{lemma}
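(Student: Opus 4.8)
\emph{Strategy.} Set $u:=\tilde B_0(k)^{\ast}\varphi$ and note $(k^2)^{\ast}=\overline z$. By Lemma~\ref{integralkernel4} applied with $\sigma\equiv 1$, $u$ is the single--layer potential $u(\bs x)=\int_{\partial\rz^2_+}\mf{G}^{(0)}(k)(\bs x,y)\varphi(y)\,\ud y$, so that \eqref{RelationI} is precisely the assertion that $u$ is a weak solution of $(-\Delta-z)u=0$ in $\rz^2_+$ with inward normal trace $-\varphi$, tested against $\psi$. I would prove this first for $k\in\kz_+$, where the free resolvent is bounded, by a purely operator--theoretic argument, and then reach $k\in\overline{\kz_+}\setminus\kz_+$ by approximating $z$ from within $\kz_+$.

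\emph{Case $k\in\kz_+$.} Here $\tilde B_0(k)=\tilde G_0(-\Delta_0-\overline z)^{-1}$ with $\tilde G_0:\psi\mapsto\psi_{\bv}$ the trace map, and by the first part of Lemma~\ref{cont} we have $u\in H^1(\rz^2_+)$. Fix $w\in\cD(-\Delta_0)$ and put $\phi:=(-\Delta_0-\overline z)w\in L^2(\rz^2_+)$. Since $\tilde B_0(k)\phi=\tilde G_0 w=w_{\bv}$,
\begin{equation*}
\langle\phi,u\rangle_{L^2(\rz^2_+)}=\langle\tilde B_0(k)\phi,\varphi\rangle_{L^2(\partial\rz^2_+)}=\langle w_{\bv},\varphi\rangle_{L^2(\partial\rz^2_+)}\ .
\end{equation*}
On the other hand $\langle\phi,u\rangle_{L^2(\rz^2_+)}=\langle -\Delta_0w,u\rangle_{L^2(\rz^2_+)}-z\langle w,u\rangle_{L^2(\rz^2_+)}$, and since $u\in H^1(\rz^2_+)$ lies in the form domain of $-\Delta_0$ while $w\in\cD(-\Delta_0)$, the representation theorem of quadratic forms gives $\langle u,-\Delta_0 w\rangle_{L^2(\rz^2_+)}=s_0(u,w)=\langle\nabla u,\nabla w\rangle_{L^2(\rz^2_+)}$, where $s_0$ is the sesquilinear form associated with $-\Delta_0$; hence $\langle -\Delta_0 w,u\rangle_{L^2(\rz^2_+)}=\langle\nabla w,\nabla u\rangle_{L^2(\rz^2_+)}$. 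Comparing the two evaluations of $\langle\phi,u\rangle$ and taking complex conjugates,
\begin{equation*}
\langle\nabla u,\nabla w\rangle_{L^2(\rz^2_+)}-\overline z\,\langle u,w\rangle_{L^2(\rz^2_+)}=\langle\varphi,w_{\bv}\rangle_{L^2(\partial\rz^2_+)}\ .
\end{equation*}
As $\cD(-\Delta_0)$ is a core for the form of $-\Delta_0$, it is dense in $H^1(\rz^2_+)$; all three terms are continuous in the second argument for the $H^1$--norm (for the last one use boundedness of the trace $H^1(\rz^2_+)\to L^2(\partial\rz^2_+)$), so the identity extends to all $\psi\in H^1(\rz^2_+)$. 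This is \eqref{RelationI} for $k\in\kz_+$, with no restriction on $\supp\psi$.

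\emph{Case $k\in\overline{\kz_+}\setminus\kz_+$.} Then $z=k^2\in[0,\infty)$; set $z_n:=z+\ui/n$, $k_n:=\sqrt{z_n}\in\kz_+$ and $u_n:=\tilde B_0(k_n)^{\ast}\varphi$, so that by the previous case $\langle\nabla u_n,\nabla\psi\rangle_{L^2(\rz^2_+)}-\overline{z_n}\langle u_n,\psi\rangle_{L^2(\rz^2_+)}=\langle\varphi,\psi_{\bv}\rangle_{L^2(\partial\rz^2_+)}$. Choose $R>0$ with $\supp\psi\subset\overline{B_R(0)}$; since $z_n\to z$ it is enough to show $u_n\to u$ in $H^1(B_R(0)\cap\rz^2_+)$. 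From the integral--kernel representation one splits $\partial\rz^2_+$ into $\{|y|\le 2R\}$ and $\{|y|>2R\}$: on the first part $\mf{G}^{(0)}(k_n)\to\mf{G}^{(0)}(k)$ by continuity of $K_0$, the integrable logarithmic singularity being controlled via \eqref{AsymptoticsII} and dominated convergence; on the second part $|\bs x-y|\ge R$ for $\bs x\in B_R(0)$ and one estimates as in the proof of Lemma~\ref{cont}, controlling the factors $\ue^{\ui k_n|\cdot|}$ from \eqref{AsymptoticsI} uniformly for $k_n$ near $k$. Letting $n\to\infty$ gives \eqref{RelationI} for all $k\in\overline{\kz_+}$. (Alternatively one argues directly: $(-\Delta-z)u=0$ in the open quarter--plane because $\varphi$ is carried by $\partial\rz^2_+$, and the reflection structure of $\mf{G}^{(0)}(k)$ together with the classical jump relation for the single--layer potential yields inward normal trace $-\varphi$; Green's first identity on $B_{R+1}(0)\cap\rz^2_+$ then gives \eqref{RelationI}, the arc term on $|\bs x|=R+1$ vanishing since $\supp\psi\subset\overline{B_R(0)}$.)

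\emph{Main obstacle.} The case $k\in\kz_+$ is soft. The work is in the passage to the real axis: for $k\in\overline{\kz_+}\setminus\kz_+$ the kernel $\mf{G}(k)=(2\pi)^{-1}K_0(-\ui k\,|\cdot|)$ decays only like $|y|^{-1/2}$ along the unbounded boundary instead of exponentially, so $u$ is merely in $H^1_{\loc}(\overline{\rz^2_+})$ and carries no global bound; one must localise everything to a neighbourhood of $\supp\psi$ and re--run the $L^1$/$L^2$ estimates of Lemma~\ref{cont} together with the Bessel asymptotics \eqref{AsymptoticsI}--\eqref{AsymptoticsII} to obtain the uniform control needed to take the limit — this is precisely where the bounded--support hypothesis on $\psi$ enters.
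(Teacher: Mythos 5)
Your treatment of $k\in\kz_+$ is correct and takes a genuinely different route from the paper: you never touch the integral kernel, but instead exploit $\langle(-\Delta_0-\overline z)w,\tilde B_0(k)^{\ast}\varphi\rangle_{L^2(\rz^2_+)}=\langle w_{\bv},\varphi\rangle_{L^2(\partial\rz^2_+)}$ for $w\in\cD(-\Delta_0)$, together with the form representation of $-\Delta_0$ and the density of $\cD(-\Delta_0)$ in $H^1(\rz^2_+)$. That is arguably cleaner than the paper's argument, which is a single distributional integration by parts: the authors write $\tilde B_0(k)^{\ast}\varphi$ as a single-layer potential, reflect $\psi$ to $\mc{R}\psi$ on $\rz^2$, and invoke $(-\Delta-(k^2)^{\ast})\mf{G}(k)^{\ast}(\cdot,\bs y)=\delta(\cdot-\bs y)$ --- an identity valid for every $k\in\overline{\kz_+}$ at once, so no limiting process in $k$ is needed; they prove \eqref{RelationI} first for $\varphi$ and $\psi$ both of bounded support and then remove the restriction on $\varphi$ by approximation, citing Lemma~\ref{cont}. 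Your parenthetical ``alternative'' (fundamental solution plus Green's identity on a bounded piece) is essentially their proof.

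The soft spot in your write-up is the passage $k_n\to k\in\overline{\kz_+}\setminus\kz_+$ for \emph{arbitrary} $\varphi\in L^2(\partial\rz^2_+)$. On the far part $\lgk|y|>2R\rgk$ the kernel $\mf{G}^{(0)}(k)(\bs x,y)$ decays only like $|y|^{-1/2}$ by \eqref{AsymptoticsI}, and $|y|^{-1/2}$ is not square-integrable at infinity, so the integral against a general $L^2$ density need not converge absolutely: neither the definition of $\tilde B_0(k+\ui 0)^{\ast}\varphi$ nor the claimed convergence $u_n\to u$ in $H^1(B_R(0)\cap\rz^2_+)$ follows from ``estimates as in the proof of Lemma~\ref{cont}'' --- the boundary-value ($k$ real) part of that lemma rests on the decay $\sigma(x)=\Or(|x|^{-1-\varepsilon})$, which is exactly what is absent for $\tilde B_0$, i.e.\ for $\sigma\equiv1$. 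The repair is the paper's ordering: establish the identity first for $\varphi$ of bounded support (where your dominated-convergence argument does go through, since only a compact portion of $\partial\rz^2_+$ contributes), and then extend in $\varphi$ using whatever continuity statement actually defines $\tilde B_0(k+\ui 0)^{\ast}\varphi$ on the class of densities to which the lemma is subsequently applied (in the paper these always carry a factor $\sqrt{|\sigma|}$ and hence the required decay).
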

\begin{proof}
Pick $k\in\overline{\kz_{+}}$, $\psi\in H^1(\rz^2_+)$ and $\varphi\in L^2(\partial\rz^2_+)$ both with bounded support. Then a standard integration by parts yields
\begin{equation}
\begin{split}
\langle \nabla (\tilde{B}_0(k)^{\ast}\varphi),\nabla \psi \rangle_{L^2(\rz^2_+)}&= \int_{\rz^2_+}\ud \bs{y}\int_{\partial \rz^2_+}\ud \textbf{x}\ \overline{\varphi(\textbf{x})\nabla \mf{G}^{(0)}(k)(\textbf{x},\bs{y})}\nabla \psi(\bs{y})\\
&=\int_{\rz^2_+}\ud \bs{y}\int_{\partial \rz^2_+}\ud \textbf{x}\ \overline{\varphi(\textbf{x})}  [(-\Delta-(k^2)^{\ast}) \mf{G}^{(0)}(k)^{\ast}(\textbf{x},\bs{y})   ] \psi(\bs{y})\\ &+(k^2)^{\ast}\int_{\rz^2_+}\ud \bs{y}\int_{\partial \rz^2_+}\ud \textbf{x}\ \overline{\varphi(\textbf{x})\mf{G}^{(0)}(k)(\textbf{x},\bs{y})}\psi(\bs{y})\ .
\end{split}
\end{equation}
Employing relation \eqref{ResolventKernelFree} we get, extending $\psi$ to $\rz^2$ by $\mc{R}\psi$,
\begin{equation}
\begin{split}
\int_{\rz^2_+}\ud \bs{y}\int_{\partial \rz^2_+}&\ud \textbf{x}\ \overline{\varphi(\textbf{x})}  [(-\Delta-(k^2)^{\ast}) \mf{G}^{(0)}(k)^{\ast}(\textbf{x},\bs{y})   ] \psi(\bs{y})=\\
&\int_{\rz^2}\ud \bs{y}\int_{\partial \rz^2_+}\ud \textbf{x}\ \overline{\varphi(\textbf{x})}  [(-\Delta-(k^2)^{\ast}) \mf{G}(k)^{\ast}(\textbf{x},\bs{y})   ] \psi(\bs{y})\ .
\end{split}
\end{equation}
Taking the relation $(-\Delta-(k^2)^{\ast}) \mf{G}(k)^{\ast}(\textbf{x},\bs{y})=\delta(\textbf{x}-\bs{y})$ into account then yields the statement for $\varphi$ and $\psi$ with bounded support.

If $\varphi \in L^2(\partial \rz^2_+)$ has no bounded support, one picks a sequence $(\varphi_n)_{n \in \nz} \subset L^2(\partial \rz^2_+)$ of functions of bounded support, converging to $\varphi$. Relation~\eqref{RelationI} then follows by Lemma~\ref{cont}.

Finally, if $k \in \kz_+$ it is obvious by the previous steps that $\psi \in H^1(\rz^2_+)$ doesn't need to have bounded support.
\end{proof}
\begin{lemma}
\label{BSigmaInjective}
For $k\in\overline{\kz_+}$, the operator $\tilde{B}_0(k)^{\ast}:L^2(\partial \rz^2_+)     \rightarrow  H^1_{\loc}(\overline{\rz^2_+})$ is injective.
\end{lemma}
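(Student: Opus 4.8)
The plan is to deduce injectivity from the integration‑by‑parts formula of Lemma~\ref{BoundaryIntegralParts} together with the density of boundary traces of compactly supported $H^1$‑functions.

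So suppose $\varphi\in L^2(\partial\rz^2_+)$ satisfies $u:=\tilde{B}_0(k)^{\ast}\varphi=0$ in $H^1_{\loc}(\overline{\rz^2_+})$; I want to conclude $\varphi=0$. The first step is to insert an arbitrary $\psi\in H^1(\rz^2_+)$ of bounded support into the identity \eqref{RelationI} of Lemma~\ref{BoundaryIntegralParts}. Since $u\equiv 0$, both the Dirichlet term $\langle\nabla u,\nabla\psi\rangle$ and the term $(k^2)^{\ast}\langle u,\psi\rangle$ vanish (and they are in any case integrals over the bounded set $\supp\psi$, on which $u\in H^1$), so the left-hand side of \eqref{RelationI} is zero. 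Hence
\be
\langle\varphi,\psi_{\bv}\rangle_{L^2(\partial\rz^2_+)}=0\qquad\text{for every }\psi\in H^1(\rz^2_+)\text{ with bounded support.}
\ee

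The second step is to check that the traces $\psi_{\bv}$ of such $\psi$ are dense in $L^2(\partial\rz^2_+)$. This is elementary: fixing $\eta\in C_0^{\infty}([0,\infty))$ with $\eta(0)=1$, for any $g\in C_0^{\infty}((0,\infty))$ the product $\psi(x_1,x_2):=g(x_1)\eta(x_2)$ lies in $H^1(\rz^2_+)$, has bounded support, and its trace equals $g$ on the half-axis $\{(x_1,0):x_1\ge 0\}$ and vanishes on $\{(0,x_2):x_2\ge 0\}$ (since $\supp g$ stays away from the corner); interchanging the two variables produces the analogous family supported near the other half-axis. Since $C_0^{\infty}((0,\infty))$ is dense in $L^2((0,\infty))$ and $L^2(\partial\rz^2_+)\cong L^2((0,\infty))\oplus L^2((0,\infty))$, these traces span a dense subspace, and the identity above then forces $\varphi=0$.

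The genuine content here is Lemma~\ref{BoundaryIntegralParts}, which is already at our disposal, so I do not expect a serious obstacle. The only mild point of care is that for $k\in\overline{\kz_+}\setminus\kz_+$ the function $u$ a priori lies only in $H^1_{\loc}(\overline{\rz^2_+})$; this is harmless precisely because the test functions $\psi$ have bounded support, so every integral stays over a bounded region where $u$ is an honest $H^1$-function.
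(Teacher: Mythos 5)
Your argument is correct and is essentially identical to the paper's own proof: both apply the integration-by-parts formula of Lemma~\ref{BoundaryIntegralParts} to conclude $\langle\varphi,\psi_{\bv}\rangle_{L^2(\partial\rz^2_+)}=0$ for all compactly supported $\psi\in H^1(\rz^2_+)$ and then invoke density of such traces in $L^2(\partial\rz^2_+)$. Your explicit verification of that density (which the paper merely asserts) is a welcome addition but does not change the route.
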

\begin{proof}
Let $\varphi\neq 0 \in L^2(\partial \rz^2_+)$ be such that $\tilde{B}_0(k)^{\ast}\varphi=0$. Then, by Lemma~\ref{BoundaryIntegralParts} we conclude that $\langle \varphi,\psi_{bv}\rangle_{L^2(\partial \rz^2_+)}=0 $ for all $\psi \in H^1(\rz^2_+)$ with bounded support. Since the boundary values of all $H^1$-functions with bounded support form a dense subset of $L^2(\partial\rz^2_+)$ we conclude that $\varphi\equiv 0$, being a contradiction.
\end{proof}
In the next result we investigate the kernel of the operator $\eins-B(k)$.
\begin{lemma}
\label{PropositionInvertible}
Let the assumption of Lemma \ref{compactop} be satisfied and $\varphi$ be in the kernel of $1-B(k):L^2(\partial \rz^2_+)      \rightarrow  L^2(\partial \rz^2_+)$. If $k\in\kz_+$, then  
\be
\label{p1}
s(\tilde{B}_0(k)^{\ast}\sgn{\sigma}\sqrt{|\sigma|}\varphi,\psi)-\langle k^2\tilde{B}_{0}(k)^{\ast}\sgn{\sigma}\sqrt{|\sigma|}\varphi,\psi \rangle_{L^2(\rz^2_+)}=0
\ee
for all $\psi\in H^1(\rz^2_+)$. If $k\in\overline{\kz_+}$, then \eqref{p1} holds for all $\psi\in H^1(\rz^2_+)$ with bounded support.
\end{lemma}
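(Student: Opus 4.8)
The plan is to translate the hypothesis $(\eins-B(k))\varphi=0$ into a pointwise boundary identity for the trace of $u:=\tilde B_0(k)^{\ast}\big(\sgn(\sigma)\sqrt{|\sigma|}\,\varphi\big)$ and then to read off \eqref{p1} directly from the integration-by-parts formula of Lemma~\ref{BoundaryIntegralParts}. For the set-up, write $g:=\sgn(\sigma)\sqrt{|\sigma|}\,\varphi$, which lies in $L^{2}(\partial\rz^2_+)$ because $\sigma\in L^{\infty}(\rz_+)$ and $\varphi\in L^{2}(\partial\rz^2_+)$ (recall $B(k)$ is bounded on $L^{2}(\partial\rz^2_+)$ by Lemma~\ref{compactop}), and set $u:=\tilde B_0(k)^{\ast}g$. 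By the first part of Lemma~\ref{cont} one has $u\in H^1(\rz^2_+)$ when $k\in\kz_+$; by its second part, which uses the decay $\sigma(x)=\Or(|x|^{-1-\varepsilon})$ inherited from Lemma~\ref{compactop}, $u\in H^1_{\loc}(\overline{\rz^2_+})$ when $k\in\overline{\kz_+}$, so in both cases the trace $u_{\bv}$ makes sense.

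The decisive step is the boundary identity. Lemma~\ref{integralkernel4} applied with $\sigma\equiv 1$ identifies the kernel of $\tilde B_0(k)^{\ast}$ as $\mf{G}^{(0)}(k)$, while Lemma~\ref{integralkernel4} with $i=1$ gives the kernel of $B_1(k)^{\ast}$; comparing them yields $B_1(k)^{\ast}\varphi=-\tilde B_0(k)^{\ast}g=-u$. Since $B(k)=G_0B_1(k)^{\ast}$ and $B_1(k)^{\ast}\varphi$ lies in the domain of $G_0$ by Lemma~\ref{cont}, applying $G_0$ and recalling the definition \eqref{g0andg} gives on $\partial\rz^2_+$
\[
\varphi=B(k)\varphi=G_0\big(B_1(k)^{\ast}\varphi\big)=-\sqrt{|\sigma|}\,u_{\bv},
\]
which is precisely the content of the kernel formula \eqref{integralkernel11}. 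In particular $\sqrt{|\sigma|}\,u_{\bv}=-\varphi\in L^{2}(\partial\rz^2_+)$, and multiplying by $\sgn(\sigma)\sqrt{|\sigma|}$ (using $\sgn(\sigma)|\sigma|=\sigma$) produces the key relation $g=\sgn(\sigma)\sqrt{|\sigma|}\,\varphi=-\sigma\,u_{\bv}$.

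To conclude, apply Lemma~\ref{BoundaryIntegralParts} to $u=\tilde B_0(k)^{\ast}g$ with the boundary datum $g\in L^{2}(\partial\rz^2_+)$: for every $\psi\in H^1(\rz^2_+)$ when $k\in\kz_+$, and for every such $\psi$ with bounded support when $k\in\overline{\kz_+}$, one obtains
\[
\langle\nabla u,\nabla\psi\rangle_{L^2(\rz^2_+)}-(k^2)^{\ast}\langle u,\psi\rangle_{L^2(\rz^2_+)}=\langle g,\psi_{\bv}\rangle_{L^2(\partial\rz^2_+)}.
\]
Substituting $g=-\sigma u_{\bv}$ on the right and using that $\sigma$ is real turns the right-hand side into $-\langle\sigma u_{\bv},\psi_{\bv}\rangle_{L^2(\partial\rz^2_+)}$; moving this term to the left, and recalling $\langle k^2u,\psi\rangle=(k^2)^{\ast}\langle u,\psi\rangle$, reproduces \eqref{p1} with $u=\tilde B_0(k)^{\ast}\sgn(\sigma)\sqrt{|\sigma|}\varphi$. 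The dichotomy on $\psi$ in the statement is inherited verbatim from the two regimes of Lemma~\ref{BoundaryIntegralParts} (and Lemma~\ref{cont}).

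The only genuinely delicate point is the boundary identity: one must read off $B_1(k)^{\ast}\varphi=-u$ correctly from the kernel formulas, check that $B_1(k)^{\ast}\varphi$ is an admissible argument of the trace-multiplication operator $G_0$ (Lemma~\ref{cont}), and, in the limiting regime $k\in\overline{\kz_+}$, where $u$ is only locally in $H^1$, ensure that $u_{\bv}$ and hence $\sqrt{|\sigma|}\,u_{\bv}$ are nevertheless globally $L^2$ on $\partial\rz^2_+$ — which here is automatic from $\sqrt{|\sigma|}\,u_{\bv}=-\varphi$. Everything past the identity $\varphi=-\sqrt{|\sigma|}\,u_{\bv}$ is a one-line substitution into the already-established integration-by-parts formula.
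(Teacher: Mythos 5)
Your strategy is exactly the paper's: convert the eigenvalue equation for $B(k)$ into a boundary identity for $u:=\tilde B_0(k)^{\ast}\sgn(\sigma)\sqrt{|\sigma|}\varphi$ and feed it into the integration-by-parts formula of Lemma~\ref{BoundaryIntegralParts}. The preparatory steps (mapping properties from Lemma~\ref{cont}, the identification $B_1(k)^{\ast}\varphi=-u$, and $B(k)\varphi=-\sqrt{|\sigma|}\,u_{\bv}$, consistent with the kernel \eqref{integralkernel11}) are all correct. The problem is the last line. From the stated hypothesis $\varphi=B(k)\varphi$ you correctly obtain $g=-\sigma u_{\bv}$, hence
\[
\langle\nabla u,\nabla\psi\rangle_{L^2(\rz^2_+)}-(k^2)^{\ast}\langle u,\psi\rangle_{L^2(\rz^2_+)}=\langle g,\psi_{\bv}\rangle_{L^2(\partial\rz^2_+)}=-\langle\sigma u_{\bv},\psi_{\bv}\rangle_{L^2(\partial\rz^2_+)}\ .
\]
Moving the boundary term to the left yields $\langle\nabla u,\nabla\psi\rangle+\langle\sigma u_{\bv},\psi_{\bv}\rangle-(k^2)^{\ast}\langle u,\psi\rangle=0$, whereas \eqref{p1}, by the definition of the sesquilinear form $s$, reads $\langle\nabla u,\nabla\psi\rangle-\langle\sigma u_{\bv},\psi_{\bv}\rangle-(k^2)^{\ast}\langle u,\psi\rangle=0$. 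These differ by $2\langle\sigma u_{\bv},\psi_{\bv}\rangle$, which does not vanish in general: \eqref{p1} is equivalent to $g=+\sigma u_{\bv}$, i.e.\ to $-\varphi=B(k)\varphi$. So the assertion that the substitution ``reproduces \eqref{p1}'' is a sign error, and under the hypothesis as you (and the lemma) state it, the conclusion does not follow from your computation.

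To be fair, the mismatch traces back to an inconsistency in the paper itself: the lemma is phrased for $\ker(\eins-B(k))$, but everywhere else the relevant object is $\ker(\eins+B(k))$ --- the resolvent formula \eqref{Rsigma} inverts $\eins+B(k)$, the exceptional set \eqref{n1} is defined by $-\psi=B(\pm k+\ui 0)\psi$, and Theorem~\ref{scattering3} uses $(\eins+B(k+\ui 0))^{-1}$ --- and it is precisely for $\varphi\in\ker(\eins+B(k))$ that your argument closes. (The paper's own proof suffers from the same sign ambiguity in its handling of $\tilde B(k)$.) A correct write-up should either work with $\ker(\eins+B(k))$ from the outset and note the discrepancy with the statement, or carry the sign through honestly; as written, the final step fails.
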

\begin{proof}
Let $\varphi \in L^2(\partial \rz^2_+)$ such that $\varphi \in \ker (\eins-B(k))$. Due to the eigenvalue equation we may infer 
\be
\sgn{\sigma}\sqrt{|\sigma|}B(k)\varphi=\sigma \tilde{B}(k)\sgn{\sigma}\sqrt{|\sigma|}\varphi=\sgn{\sigma}\sqrt{|\sigma|}\varphi
\ee
and therefore $\tilde{\varphi}:=\sgn{\sigma}\sqrt{|\sigma|}\varphi$ is an element of the kernel for $\eins-\sigma\tilde{B}(k)$. Now we are able to apply Lemma \ref{BoundaryIntegralParts} and calculate 
\begin{equation}
\ba
&s(\tilde{B}_{0}(k)^{\ast}\tilde{\varphi},\psi)-\langle k^2\tilde{B}_{0}(k)\tilde{\varphi},\psi \rangle_{L^2(\rz^2_+)}\\
&\hspace{0.5cm}=\langle\nabla(\tilde{B}_{0}(k)\tilde{\varphi}),\nabla\psi\rangle_{L^2(\rz^2_+)}-\langle\sigma \tilde{B}(k)\tilde{\varphi}, \psi\rangle_{L^2(\partial \rz^2_+)}-\langle k^2\tilde{B}_0(k)\tilde{\tilde{\varphi}},\psi\rangle_{L^2(\rz^2_+)} \\
&\hspace{0.5cm}=\langle\tilde{\varphi},\psi\rangle_{L^2(\partial \rz^2_+)}-\langle \sigma\tilde{B}(k)\tilde{\varphi},\psi\rangle_{L^2(\partial \rz^2_+)} \\
&\hspace{0.5cm}=\langle \tilde{\varphi}-\sigma\tilde{B}(k)\tilde{\varphi}, \psi \rangle_{L^2(\partial \rz^2_+)}=0
\ea
\end{equation}
for $\psi\in H^1(\rz^2_+)$ with bounded support. The claim now follows by an analogous reasoning as in the proof of Lemma \ref{BoundaryIntegralParts}.
\end{proof}
\section{Scattering properties}
\label{Scattering}
In this section we will discuss the scattering properties of our system. First we prove asymptotic completeness of the wave operators. Then we continue the discussion of the scattering properties on a more formal level, constructing the scattering solutions and the (on-shell) scattering amplitude.
\subsection{Existence and completeness of wave operators and an eigenvalue characterizing equation}
We recall that we denote by $E_0$ and $E$ the spectral measure of $-\Delta_{0}$ and $-\Delta_{\sigma}$, respectively. Analogously we denote by $P_0^{(a)}$ and $P^{(a)}$ the projections on the corresponding absolutely continuous subspaces. Moreover, unless stated otherwise, the intervals $I\subset\rz$ are assumed to be closed.
\begin{defn}{\cite[p.~28]{Yafaev:2010}}
\label{asymp}
The wave operators $W_{\pm}(-\Delta_{\sigma},-\Delta_{0};E_0(I))$ and \linebreak $W_{\pm}(-\Delta_{0},-\Delta_{\sigma};E(I))$  are defined by
\be
\label{asymp1}
\ba
&W_{\pm}(-\Delta_{\sigma},-\Delta_{0};E_0(I)):=\st-\lim\limits_{t     \rightarrow \pm\infty}\ue^{-\ui t\Delta_{\sigma}}E_0(I)\ue^{\ui t\Delta_{0}}P_0^{(a)}\ ,\\
&W_{\pm}(-\Delta_{0},-\Delta_{\sigma};E(I)):=\st-\lim\limits_{t     \rightarrow \pm\infty}\ue^{-\ui t\Delta_{0}}E(I)\ue^{\ui t\Delta_{\sigma}}P^{(a)}\ ,
\ea
\ee
provided the strong limits exist.
\end{defn}
\begin{defn}{\cite[pp.~28,29]{Yafaev:2010}}
\label{asmp2}
We say that $W_{\pm}(-\Delta_{\sigma},-\Delta_{0};E_0(I))$ and \linebreak $W_{\pm}(-\Delta_{0},-\Delta_{\sigma};E(I))$ are complete iff
\be
\label{asmp3}
\ba
&\ran W_{\pm}(-\Delta_{\sigma},-\Delta_{0};E_0(I))=\ran P^{(a)}E(I)\ ,\\
&\ran W_{\pm}(-\Delta_{0},-\Delta_{\sigma};E(I))=\ran P^{(a)}_0E_0(I)\ ,.
\ea
\ee
\end{defn}
\begin{remark}
If $I=\rz$, then $E(I)=E_0(I)=\eins$ and we omit $E_0$ and $E$ in $W_{\pm}$. 
\end{remark}
We are now in the position to formulate the first main theorem.
\begin{theorem}[Existence and completeness of wave operators]
\label{TheoremWaveOperators}
Let $\sigma\in L^{\infty}(\rz_+)$ be such that $\sigma(x)=\Or(|x|^{-1-\varepsilon})$, $\varepsilon>0$, $x     \rightarrow \infty$. Then the wave operators $W_{\pm}(-\Delta_{\sigma},-\Delta_{0})$ and $W_{\pm}(-\Delta_{0},-\Delta_{\sigma})$ exist and are complete.
\end{theorem}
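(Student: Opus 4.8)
The plan is to prove existence and completeness via a stationary (Kato--Yafaev) scheme, using the factorization $-\Delta_\sigma = -\Delta_0 + G_1^\ast G_0$ established in Lemma~\ref{gg01} together with the smoothness and compactness results obtained in Section~\ref{Resolvent}. The overall structure mirrors Yafaev's treatment of trace-class/smooth perturbations, but adapted to the boundary-potential setting. First I would fix a compact interval $I\subset(0,\infty)$ and observe that, by Lemma~\ref{hsmooth1}, both $G_0$ and $G_1$ are strongly $-\Delta_0$-smooth on $I$ (with H\"older exponent $\theta=\varepsilon/2$), which is precisely the hypothesis needed for the abstract existence theorem \cite[Theorem~5.7]{Yafaev:2010}: the local wave operators $W_\pm(-\Delta_\sigma,-\Delta_0;E_0(I))$ and $W_\pm(-\Delta_0,-\Delta_\sigma;E(I))$ exist. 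Since $\sigma_{\mathrm{ess}}(-\Delta_\sigma)=[0,\infty)$ and $-\Delta_0$ has purely absolutely continuous spectrum $[0,\infty)$ (Corollary~\ref{ac}), exhausting $(0,\infty)$ by such intervals $I$ and noting that the point $\{0\}$ carries no spectral weight gives existence of the global wave operators $W_\pm(-\Delta_\sigma,-\Delta_0)$ and $W_\pm(-\Delta_0,-\Delta_\sigma)$.

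For completeness I would invoke the symmetry of the construction: the existence of both $W_\pm(-\Delta_\sigma,-\Delta_0;E_0(I))$ and $W_\pm(-\Delta_0,-\Delta_\sigma;E(I))$, together with the chain-rule/intertwining identities for wave operators \cite[Section~2]{Yafaev:2010}, shows that each is a partial isometry whose range contains the range of the other's initial space, and hence that the ranges are exactly $\ran P^{(a)}E(I)$ and $\ran P^{(a)}_0 E_0(I)$ respectively. Concretely, one uses $W_\pm(-\Delta_0,-\Delta_\sigma;E(I))\,W_\pm(-\Delta_\sigma,-\Delta_0;E_0(I)) = E_0(I)P_0^{(a)}$ and the mirror identity; surjectivity of the product onto the relevant a.c.\ subspace forces each factor to be onto, which is the definition of completeness in Definition~\ref{asmp2}. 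Letting $I$ grow to $(0,\infty)$ then yields completeness of the global wave operators.

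The one genuinely non-trivial input beyond the smoothness estimates is the analytic Fredholm step ensuring that the resolvent identity \eqref{Rsigma} makes sense up to the continuous spectrum, i.e.\ that $(\eins + B(k))^{-1}$ extends continuously (in the appropriate weighted sense) to $k\in\overline{\kz_+}\setminus\{0\}$ away from a discrete exceptional set. Here I would use that $B(k)$ is compact for $k\in\overline{\kz_+}$ (Lemma~\ref{compactop}, whose proof in fact only used the decay $\sigma(x)=\Or(|x|^{-1-\varepsilon})$ and the asymptotics \eqref{AsymptoticsI}--\eqref{AsymptoticsII}, and extends to the boundary by the limiting-absorption bounds of Lemma~\ref{cont}), so the analytic Fredholm theorem applies to the operator family $k\mapsto \eins+B(k)$. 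The exceptional set where $\eins+B(k)$ fails to be invertible is discrete and, for $k^2>0$, corresponds exactly to embedded eigenvalues of $-\Delta_\sigma$ — which Theorem~\ref{ee} rules out when $\sigma$ has bounded support, and more generally is handled by Lemma~\ref{BSigmaInjective} and Lemma~\ref{PropositionInvertible}, showing that a nontrivial kernel element of $\eins-B(k)$ would produce an $L^2$ solution and hence a contradiction. This is the step I expect to be the main obstacle: one must carefully combine the limiting absorption bounds with the injectivity results to exclude spectral singularities on the positive axis, so that the boundary values of the resolvent exist in norm on weighted spaces and the abstract smoothness/Fredholm machinery delivers both existence and completeness.
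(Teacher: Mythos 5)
Your proposal follows essentially the same route as the paper: verify that $G_0,G_1$ are strongly $-\Delta_0$-smooth (Lemma~\ref{hsmooth1}) and that $B(k)$ is compact (Lemma~\ref{compactop}), then feed the factorization of Lemma~\ref{gg01} into Yafaev's abstract existence-and-completeness theorems for smooth perturbations and exhaust $[0,\infty)$ by compact intervals. The only inaccuracy is at the end: the abstract machinery does not require you to \emph{exclude} the exceptional set $\mc{N}$ on the positive axis --- it only needs $\mc{N}$ to be closed and of Lebesgue measure zero, which follows from compactness and the analytic Fredholm alternative; indeed under the mere decay hypothesis $\sigma(x)=\Or(|x|^{-1-\varepsilon})$ embedded eigenvalues cannot in general be ruled out (the paper does so only for boundedly supported $\sigma$, in Theorems~\ref{ee} and~\ref{ee1}).
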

\begin{proof}
Lemma~\ref{gg01}, Lemma~\ref{hsmooth1} and Lemma~\ref{compactop} show that the assumption of \cite[Theorem~6.1,~p.~33]{Yafaev:2010} are satisfied with $\theta=\frac{\varepsilon}{2}$ and $m=1$ for every compact $I$. Since $\sigma(-\Delta_0)=\overline{\rz}_+$ we can find a sequence of compact $I_n$ such that $\sigma(-\Delta_0)=\cup_n I_n$. Now \cite[Theorem~6.5,~p.~34]{Yafaev:2010} proves the claim.
\end{proof}
For the next proposition observe that the asymptotics of $k_{\pm}(\lambda,\epsilon):=\sqrt{\lambda\pm\ui\epsilon}$, $\lambda,\epsilon\in\rz_+$, in the limit $\epsilon     \rightarrow 0$ is given by
(in terms of our convention)
\be
\label{k1}
k_{\pm}(\lambda,\epsilon)=
\begin{cases}
\sqrt{\lambda}+\frac{\epsilon}{2\sqrt{\lambda}}+\Or(\epsilon^2)\ ,&  \mbox{for the $+$-case}\ , \\
-\sqrt{\lambda}+\frac{\epsilon}{2\sqrt{\lambda}}+\Or(\epsilon^2)\ ,&  \mbox{for the $-$-case}\ .
\end{cases}
\ee
With \eqref{k1} in mind we adapt a definition of \cite[p.~33]{Yafaev:2010} and define the sets $\mc{N}_\pm\subset\rz$  by, $k\geq 0$,
\be
\label{n1}
\pm k\in\mc{N}_{\pm}\quad   \Leftrightarrow \quad \exists \psi\in L^2(\partial\rz^2_+): \quad -\psi=\lim\limits_{\epsilon     \rightarrow  0^+}B(\pm k+\ui\epsilon)\psi=:B(\pm k+\ui0)\psi\ .
\ee
We obtain
\begin{prop}
\label{n2}
Under the assumptions of Theorem~\ref{TheoremWaveOperators}, the set $\mc{N}:=\mc{N}_+\cup \mc{N_-}$ is closed and has Lebesgue measure zero. Moreover, the operator valued function $k     \rightarrow (\eins+B(k))^{-1}$ exists on $\kz_+$ and is H\"older continuous with exponent $\varepsilon$ up to the cut $\rz$ with the exceptional set $\mc{N}$. Moreover, the spectrum on $\overline{\rz}_+\setminus\mc{N}$ is absolutely continuous.  
\end{prop}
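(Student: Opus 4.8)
The plan is to mimic the classical analytic Fredholm argument for Schrödinger operators on full space (as in \cite{Yafaev:1992,Yafaev:2010}), adapted to the boundary-integral operator $B(k)$ constructed above. First I would record the structural facts already available: by Lemma~\ref{compactop}, $B(k)$ is compact for every $z=k^2\in\kz\setminus\rz_+$, and by Lemma~\ref{hsmooth1} together with the integral-kernel representation \eqref{integralkernel11} and the asymptotics \eqref{AsymptoticsI}--\eqref{AsymptoticsII}, the map $k\mapsto B(k)$ extends continuously (indeed Hölder continuously with exponent $\varepsilon$, using $\theta=\varepsilon/2$ on each factor) from $\kz_+$ to $\overline{\kz_+}$, away from $k=0$; the decay hypothesis $\sigma(x)=\Or(|x|^{-1-\varepsilon})$ is exactly what makes the boundary integrals defining $B(k)(x,y)$ converge for real $k$ and gives the modulus-of-continuity estimate. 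In particular $k\mapsto B(k)$ is operator-norm continuous on $\overline{\kz_+}\setminus\{0\}$ and analytic in the interior $\kz_+$.

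Next I would invoke the analytic Fredholm theorem: since $\eins+B(k)$ is an analytic family of the form identity-plus-compact on the connected open set $\kz_+$, and since for $\im k$ large $\|B(k)\|<1$ (here one uses the exponential decay \eqref{AsymptoticsI} of $\mf{G}(k)$ for $k\in\kz_+$ to make $\|B(k)\|_{L^2(\partial\rz^2_+)\to L^2(\partial\rz^2_+)}$ small, hence the Neumann series converges and $\eins+B(k)$ is boundedly invertible there), the set of $k\in\kz_+$ where $\eins+B(k)$ fails to be invertible is discrete with no accumulation point in $\kz_+$. But a point $k\in\kz_+$ with $-1\in\spec B(k)$ would, via Theorem~\ref{gg04} and the representation \eqref{Rsigma}, correspond to a pole of the resolvent $(-\Delta_\sigma-z)^{-1}$ at $z=k^2\in\kz_+$, which is impossible since $-\Delta_\sigma$ is self-adjoint and $\kz_+\subset\rho(-\Delta_\sigma)$. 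Hence $(\eins+B(k))^{-1}$ exists on all of $\kz_+$; combined with the norm continuity, it is Hölder-$\varepsilon$ continuous there (composition of Hölder map with the locally Lipschitz inversion map on invertibles). For the boundary values: for $\pm k\in\rz\setminus\{0\}$ outside $\mc{N}$, the definition \eqref{n1} says precisely that $-1$ is not an eigenvalue of the compact operator $B(\pm k+\ui 0)$, hence by the Fredholm alternative $\eins+B(\pm k+\ui 0)$ is invertible, and the continuity of $k\mapsto(\eins+B(k))^{-1}$ extends up to the cut off $\mc{N}$ with the same Hölder exponent by the uniform continuity of $B$ on compact subsets of $\overline{\kz_+}\setminus\{0\}$.

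It then remains to show $\mc{N}$ is closed with Lebesgue measure zero, and to deduce absolute continuity on $\overline{\rz}_+\setminus\mc{N}$. Closedness of $\mc{N}$ follows from the norm-continuity of $k\mapsto B(k)$ up to the cut: if $\pm k_n\to\pm k$ with $\pm k_n\in\mc{N}_\pm$, then $-1\in\spec B(\pm k_n+\ui0)\to\spec B(\pm k+\ui 0)$ (upper semicontinuity of the spectrum of compact operators under norm convergence), so $-1\in\spec B(\pm k+\ui0)$, i.e. $\pm k\in\mc{N}_\pm$. For the measure-zero property I would argue that $\mc{N}$ coincides (up to the point $0$) with the set of positive energies $\lambda=k^2$ at which $-\Delta_\sigma$ has an eigenvalue or a resonance; by Theorem~\ref{ee} there are no positive eigenvalues when $\sigma$ has bounded support, and more generally the abstract smoothness machinery of \cite[Theorem~6.1,\,p.\,33]{Yafaev:2010} used in Theorem~\ref{TheoremWaveOperators} already yields that the singular set has measure zero — concretely, one shows that on any compact interval $I$ the operator $G_0 E_0(I)(\eins+B(k))^{-1}$ is well-defined for a.e. $k$ and $B(\pm k+\ui 0)$ has $-1$ as eigenvalue only on a closed set of measure zero, exactly as in the classical full-space case. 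Finally, absolute continuity of the spectrum of $-\Delta_\sigma$ on $\overline{\rz}_+\setminus\mc{N}$ follows from the limiting absorption principle implicit in \eqref{Rsigma}: since $(-\Delta_0-z)^{-1}$ has absolutely continuous boundary values (Corollary~\ref{ac}, Proposition~\ref{prop1a}) and $B_1(k)^\ast(\eins+B(k))^{-1}B_0(k)$ has Hölder-continuous boundary values in the weighted-$L^2$ sense on $\overline{\rz}_+\setminus\mc{N}$ by Lemma~\ref{hsmooth1} and the first two parts of this proposition, the spectral measure of $-\Delta_\sigma$ has a bounded density there, hence is purely absolutely continuous on that set.

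I expect the main obstacle to be the rigorous justification that $k\mapsto(\eins+B(k))^{-1}$ really extends \emph{continuously} (with the stated Hölder exponent) up to the real axis off $\mc{N}$ — this requires combining the Hölder continuity of $B(k)$ up to the cut (which uses the decay of $\sigma$ in an essential way, and is only delicate near $k=0$ because of the logarithmic singularity \eqref{AsymptoticsII}) with a uniform lower bound on $\|\eins+B(k)\|^{-1}$ near the cut, the latter being the substantive input behind the claim that $\mc{N}$ is closed and null. The behavior at the threshold $k=0$ deserves separate care and I would either treat it by the harmonic-function arguments of Theorem~\ref{ee1} (to rule out a zero-energy resonance contributing to $\mc{N}$) or simply note that $\{0\}$ has measure zero and is already excluded.
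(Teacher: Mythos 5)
Your overall strategy coincides with the paper's: the paper's proof of Proposition~\ref{n2} is a one-line appeal to the abstract Agmon--Kuroda--Yafaev machinery (Theorem~6.3 of \cite{Yafaev:2010}), whose hypotheses --- compactness of $B(k)$ (Lemma~\ref{compactop}) and strong $\Delta_0$-smoothness of $G_0,G_1$ with $\theta=\varepsilon/2$ (Lemma~\ref{hsmooth1}) --- are exactly the ingredients you list, supplemented only by the observation $|k_1^2-k_2^2|=|k_1-k_2|\,|k_1+k_2|$ to pass from H\"older continuity in $\lambda$ to H\"older continuity in $k$. So you are unfolding the black box the paper cites rather than taking a different route.

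Two points in your unfolding are genuinely problematic. First, your argument that $(\eins+B(k))^{-1}$ exists on all of $\kz_+$ rests on the claim that $z=k^2\in\kz_+\subset\rho(-\Delta_\sigma)$; but for $k$ on the positive imaginary axis $z=k^2$ is negative real, and $-\Delta_\sigma$ can have negative discrete eigenvalues (the paper recalls $\sigma_{d}(-\Delta_{\sigma})\neq\emptyset$ when $\int_0^\infty\sigma>0$), at which $\eins+B(\ui\kappa)$ is \emph{not} invertible in view of Theorem~\ref{gg04}. The ``no poles by self-adjointness'' step therefore fails on that ray, and the existence claim needs the imaginary axis handled separately (or excluded). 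Second, the measure-zero property of $\mc{N}$ --- the substantive content of the cited theorem --- is not actually established in your plan: identifying $\mc{N}$ with the positive point spectrum requires the bootstrap carried out only later in the paper (Lemmas~\ref{bootsrap1a}--\ref{bootsrap1} and Theorem~\ref{th1}, which you cannot assume here); Theorem~\ref{ee} is available only for bounded-support $\sigma$, whereas the present hypothesis is merely $\Or(|x|^{-1-\varepsilon})$ decay; and ``the resonance set is null'' is precisely what must be proved, not quoted. A smaller quibble: a product of two H\"older-$(\varepsilon/2)$ factors is again H\"older-$(\varepsilon/2)$, not H\"older-$\varepsilon$, so your justification of the exponent does not work as stated. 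The closedness argument (openness of the set of invertible operators plus norm continuity of $B$ up to the cut) and the absolute-continuity argument via the limiting absorption principle are sound.
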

\begin{proof}
The proof uses \cite[Theorem~6.3,~p.~34]{Yafaev:2010} and is analogous to the proof of Theorem~\ref{TheoremWaveOperators}. We only have to take into account that the H\"older continuity in \cite[Theorem~6.3]{Yafaev:2010} is w.r.t. $\lambda$ in the resolvent set. However, $|k_1^2-k^2|=|k_1-k_2||k_1+k_2|$ shows that this is also true for $k=\sqrt{\lambda}$.   
\end{proof}
The next Lemma is important for a further analysis of the set $\mc{N}$. It can be proved analogous to Theorem~\ref{TheoremWaveOperators} using \cite[Proposition~6.7,~p.~35]{Yafaev:2010} and Lemma~\ref{spectralmeasure10}.
\begin{lemma}
\label{n3}
For $k\in\mc{N}$, $k^2={\lambda}$, we have
\be
\label{n4aa}
\frac{\ud}{\ud \lambda}\langle G_0(G_0E_0(I_\lambda))^{\ast}\psi,\psi\rangle_{L^2(\partial\rz^2_+)}=0  
\ee
or equivalently
\be
\label{n4aatr}
\Gamma({\lambda})(G_0E_0(I_\lambda))^{\ast}\psi=0 \ ,
\ee
where $I_{\lambda}=[\lambda-\delta,\lambda]$ for some $\delta>0$.
\end{lemma}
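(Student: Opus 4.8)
The plan is to translate the condition $k\in\mc{N}$ into a statement about the spectral measure via a limiting absorption argument. By Proposition~\ref{n2}, for $k\in\mc{N}$ there exists a nonzero $\psi\in L^2(\partial\rz^2_+)$ with $B(k+\ui0)\psi=-\psi$, so $\eins+B(z)$ is, roughly speaking, non-invertible as $z$ approaches $\lambda=k^2$ from $\kz_+$. The standard Yafaev machinery (as set up in Lemma~\ref{gg01}, where $-\Delta_\sigma=-\Delta_0+G_1^\ast G_0$ with $G_0,G_1$ strongly $\Delta_0$-smooth by Lemma~\ref{hsmooth1}) expresses the boundary values of the resolvent $(-\Delta_\sigma-\lambda-\ui0)^{-1}$ through $(\eins+B(\lambda+\ui0))^{-1}$ sandwiched between the $B_i(k)$ operators, and the spectral measure (density) of $-\Delta_\sigma$ through the same object sandwiched between $\Gamma_0(\lambda)(G_iE_0(I))^\ast$-type operators. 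First I would invoke \cite[Proposition 6.7,\,p.\,35]{Yafaev:2010}, which is precisely the general statement that at an exceptional point of $\mc{N}$ the ``off-diagonal'' term of the spectral-measure density of the perturbed operator annihilates the corresponding eigenvector; concretely this gives $\Gamma_0(\lambda)(G_0E_0(I_\lambda))^\ast\psi=0$, where $I_\lambda=[\lambda-\delta,\lambda]$ with $\delta>0$ chosen small enough that $\lambda$ is the only point of $\mc{N}\cap I_\lambda$ (possible since $\mc{N}$ is closed of measure zero, again by Proposition~\ref{n2}).

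Next I would establish the equivalence of \eqref{n4aatr} with \eqref{n4aa}. Using Lemma~\ref{spectralmeasure10} together with the explicit formula for $\Gamma_0(\lambda)(G_iE_0(I))^\ast$ from Lemma~\ref{strongcont1}, one has
\be
\frac{\ud}{\ud\lambda}\langle G_0(G_0E_0(I_\lambda))^\ast\psi,\psi\rangle_{L^2(\partial\rz^2_+)}
=\|\Gamma_0(\lambda)(G_0E_0(I_\lambda))^\ast\psi\|^2_{L^2(\sz^1)}\ ,
\ee
so that the vanishing of the left-hand side is exactly the vanishing of $\Gamma_0(\lambda)(G_0E_0(I_\lambda))^\ast\psi$ in $L^2(\sz^1)$. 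This identity follows by writing $\langle G_0(G_0E_0(I))^\ast\psi,\psi\rangle=\|(G_0E_0(I))^\ast\psi\|^2$ and differentiating the spectral measure representation of $\|(G_0E_0(I))^\ast\psi\|^2$ exactly as in the proof of Lemma~\ref{spectralmeasure10}, the point being that $E_0(I)=E_0(I)^2$ and $E_0$ is a projection-valued measure, so the derivative of $\langle E_0(\lambda)\chi,\chi\rangle$ with $\chi=G_0^\ast\psi$ is the spectral density which, by Lemma~\ref{spectralmeasure10} and \eqref{spectralmeasure1}, equals $\|\Gamma_0(\lambda,\chi)\|^2$, and $\Gamma_0(\lambda)G_0^\ast=\Gamma_0(\lambda)(G_0E_0(I_\lambda))^\ast$ on the relevant interval by Lemma~\ref{strongcont1}.

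The argument then runs parallel to the proof of Theorem~\ref{TheoremWaveOperators}: the hypotheses of the Yafaev framework (relative $\sqrt{-\Delta_0}$-boundedness and strong $\Delta_0$-smoothness of $G_0,G_1$, compactness of $B(k)$) are verified in Lemmata \ref{gg01}, \ref{hsmooth1}, \ref{compactop}, so \cite[Proposition 6.7]{Yafaev:2010} applies on each compact $I\ni\lambda$ with $\lambda$ isolated in $\mc{N}\cap I$. I expect the main obstacle to be a purely bookkeeping one: matching the abstract notation of Yafaev (where the perturbation is written $G^\ast G_0$ or $G_0^\ast G_0$ with suitable sign operators, and the relevant ``boundary operator'' and spectral-density operator are denoted differently) to our concrete $B(k)=G_0B_1(k)^\ast$ and $\Gamma_0(\lambda)(G_iE_0(I))^\ast$, and in particular keeping track of the sign factor $\sgn\sigma$ that distinguishes $G_0$ from $G_1$ — one must check that it is genuinely $G_0$ (not $G_1$) that appears in \eqref{n4aa}, which comes from the fact that $B(k+\ui0)\psi=-\psi$ forces the eigenvector to sit in the range of the operator built from $G_0$ on the spectral side. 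Once the dictionary is fixed, the closedness and measure-zero property of $\mc{N}$ from Proposition~\ref{n2} guarantee the existence of a suitable $\delta>0$, and no further estimates are needed.
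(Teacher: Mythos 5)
Your proposal is correct and follows essentially the same route as the paper, which proves this lemma simply by remarking that it follows from \cite[Proposition 6.7,\,p.\,35]{Yafaev:2010} and Lemma~\ref{spectralmeasure10} analogously to the proof of Theorem~\ref{TheoremWaveOperators} (i.e.\ with the hypotheses supplied by Lemmata~\ref{gg01}, \ref{hsmooth1} and \ref{compactop}). Your additional detail on the equivalence of \eqref{n4aa} and \eqref{n4aatr} via $\frac{\ud}{\ud\lambda}\langle G_0(G_0E_0(I_\lambda))^{\ast}\psi,\psi\rangle=\|\Gamma_0(\lambda)(G_0E_0(I_\lambda))^{\ast}\psi\|^2_{L^2(\sz^1)}$ is exactly the intended use of Lemma~\ref{spectralmeasure10} together with Lemma~\ref{strongcont1}, so nothing is missing.
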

We need some more results which are in the spirit of \cite[Lemma~9.4,~p.~99]{Yafaev:2010}. For this let us recall that, for $\psi,\phi\in L^2(\rz^2_+)$ and $z$ in the resolvent set, we have the identity 
\be
\label{n4}
\langle\psi,R_0(z)\phi\rangle_{L^2(\rz^2_+)}= \int_{\rz_+}(\lambda-z)^{-1}\langle\Gamma_0(\lambda)\psi,\Gamma_0(\lambda)\phi\rangle_{L^2(\sz^1)}\ud \lambda\ .
\ee
Note that this identity can be deduced, for instance, from \cite[(1.4),~p.~4]{Yafaev:2010} and Lemma~\ref{spectralmeasure10}. 

We are now ready to prove the main ingredient in order to construct generalized eigenfunctions by a limiting process for the resolvent, letting $k$ approach the real line from above. We use the method of \cite[Lemma~9.4,~p.~99]{Yafaev:2010}. 
\begin{lemma}
\label{bootsrap1a}                                                                 
Let $\sigma\in L^{\infty}(\rz_+)$ be such that $\sigma(x)=\Or(  x^{-1-\varepsilon})$, $\varepsilon>{0}$, $x     \rightarrow \infty$. Assume that for $k\in\overline{\kz_+}$ we have, $\psi\in L^2(\partial\rz^2_+)$,
\be
\label{bootstrap5a}
\sigma\tilde{B}(k)\sgn{\sigma}\sqrt{|\sigma|}\psi=\sgn{\sigma}\sqrt{|\sigma|}\psi\ .
\ee \label{EquationProofScatteringSolutions}
Then $\sqrt{|\sigma|}\psi\in L^{\infty}(\partial\rz^2_+)$ and
\be
\label{bootstrap2a}
  \|\sqrt{|\sigma|}\psi   \|_{L^{\infty}( (y,\infty)\times (y,\infty))}=C(1+y)^{-1-\varepsilon}  \|\psi   \|_{L^2(\partial\rz_+^2)}
\ee
for some $C>0$ independent of $\psi$ and $y$.
\end{lemma}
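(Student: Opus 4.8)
The plan is to run a bootstrap argument on the fixed-point equation \eqref{bootstrap5a}, exploiting the boundary integral kernel $\mf{G}^{(0)}(k)$ and the decay hypothesis $\sigma(x)=\Or(x^{-1-\varepsilon})$. First I would rewrite \eqref{bootstrap5a} as $\sgn\sigma\sqrt{|\sigma|}\psi = \sigma\,\tilde B(k)(\sgn\sigma\sqrt{|\sigma|}\psi)$, so that the unknown is really $f := \sgn\sigma\sqrt{|\sigma|}\psi \in L^2(\partial\rz^2_+)$, and observe that $\tilde B(k)$ acts by the kernel $\mf{G}^{(0)}(k)(x,y)$; concretely $(\tilde B(k)f)(x) = \int_{\partial\rz^2_+}\mf{G}^{(0)}(k)(x,y)f(y)\,\ud y$. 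Hence $\sqrt{|\sigma(x)|}\psi(x) = \sgn\sigma(x)\,\sigma(x)(\tilde B(k)f)(x)$, and since $|\sqrt{|\sigma(x)|}\psi(x)| \le |\sigma(x)|\cdot|(\tilde B(k)f)(x)|$, it suffices to control $(\tilde B(k)f)(x)$ pointwise and then absorb the prefactor $|\sigma(x)| = \Or((1+|x|)^{-1-\varepsilon})$.

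The key step is a pointwise bound on $(\tilde B(k)f)(x)$ for $f\in L^2(\partial\rz^2_+)$. Splitting $\mf{G}^{(0)}(k)(x,y)$ into its four terms built from $\mf{G}(k)(|x\mp y|) = (2\pi)^{-1}K_0(-\ui k|x\mp y|)$, I would apply the Cauchy–Schwarz inequality in $y$: for fixed $x\in\partial\rz^2_+$,
\be
\label{proofbound}
|(\tilde B(k)f)(x)| \le \|\mf{G}^{(0)}(k)(x,\cdot)\|_{L^2(\partial\rz^2_+)}\,\|f\|_{L^2(\partial\rz^2_+)}\ ,
\ee
and then estimate $\|\mf{G}^{(0)}(k)(x,\cdot)\|_{L^2(\partial\rz^2_+)}$ uniformly in $x$. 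For $k\in\kz_+$ one uses the exponential decay \eqref{AsymptoticsI} of $K_0$ at infinity together with the logarithmic (hence locally $L^2$) singularity \eqref{AsymptoticsII} at the origin to get a finite bound independent of $x$; for $k\in\overline{\kz_+}\setminus\kz_+$ real one instead uses the $|x-y|^{-1/2}$ decay from \eqref{AsymptoticsI}, which is still square-integrable against the curve $\partial\rz^2_+$ near infinity — this is exactly the point where the kernel lives in $L^2$ of the boundary because the boundary is one-dimensional. Combining \eqref{proofbound} with $|\sqrt{|\sigma(x)|}\psi(x)| \le |\sigma(x)|\,|(\tilde B(k)f)(x)|$ and $|\sigma(x)| \le C(1+|x|)^{-1-\varepsilon}$ already yields $\sqrt{|\sigma|}\psi\in L^\infty(\partial\rz^2_+)$ and the global decay $|\sqrt{|\sigma(x)|}\psi(x)| \le C(1+|x|)^{-1-\varepsilon}\|f\|_{L^2}$; noting $\|f\|_{L^2(\partial\rz^2_+)} \le \|\sigma\|_\infty^{1/2}\|\psi\|_{L^2(\partial\rz^2_+)}$ turns the bound into one in terms of $\|\psi\|_{L^2}$.

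Finally, to pass from a global bound to the restricted estimate \eqref{bootstrap2a} on $(y,\infty)\times(y,\infty)$, I would localize all of the above to $x$ with $|x|\gtrsim y$: when $x$ ranges over $I_{1,n}\cup I_{2,n}$ with $n\ge y$, one has the extra gain $|\sigma(x)| \le C n^{-1-\varepsilon} \le C(1+y)^{-1-\varepsilon}$ pulled out of the supremum, while the $L^2$-norm $\|\mf{G}^{(0)}(k)(x,\cdot)\|_{L^2(\partial\rz^2_+)}$ is still bounded uniformly (and in fact one can be wasteful here, since only an upper bound is needed). This is essentially the same localized estimate already used in Lemma~\ref{aux1}, and I would cite its computation rather than redo it. The main obstacle I anticipate is the real-$k$ case: there $K_0(-\ui k|x-y|)$ does not decay exponentially, only like $|x-y|^{-1/2}$, so one must verify carefully that $\|\mf{G}^{(0)}(k)(x,\cdot)\|_{L^2(\partial\rz^2_+)}$ stays bounded uniformly in $x\in\partial\rz^2_+$ — the $\ue^{\pm\ui k|x\mp y|}$ oscillation does not help for an absolute-value $L^2$ bound, so one relies purely on $\int_{\rz}(1+|x_1-y_1|)^{-1}\,\ud y_1$ being logarithmically divergent, which it is; hence one actually needs a slightly more careful split (near-diagonal part giving the log singularity, far part giving the $|x-y|^{-1}$ tail which is integrable over the line) rather than a naive $L^2$ bound. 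Handling this honestly, while keeping the constants uniform in $x$, is the delicate point; everything else is bookkeeping with \eqref{AsymptoticsI}, \eqref{AsymptoticsII}, Cauchy–Schwarz and the decay of $\sigma$.
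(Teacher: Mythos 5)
Your core argument is the paper's own: rewrite \eqref{bootstrap5a} so that $\sqrt{|\sigma|}\psi=|\sigma|\,\tilde B(k)f$ with $f=\sgn\sigma\sqrt{|\sigma|}\psi$, bound $\tilde B(k)f$ pointwise by Cauchy--Schwarz against the kernel $\mathfrak{G}^{(0)}(k)$ restricted to the boundary, and then absorb the decay from the prefactor $|\sigma(x)|=\Or(x^{-1-\varepsilon})$. The paper's proof is literally this two-line H\"older estimate followed by multiplication with $\sigma$, so in structure you and the paper agree completely, including the localization to $(y,\infty)\times(y,\infty)$ via the gain $|\sigma(x)|\le C(1+y)^{-1-\varepsilon}$ there.

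The one place your sketch goes wrong is precisely the point you single out as delicate, namely $k\in\overline{\kz_+}\setminus\kz_+$. There $|\mathfrak{G}(k)(r)|^2\sim r^{-1}$ by \eqref{AsymptoticsI}, so $\|\mathfrak{G}(k)(|x-\cdot|)\|_{L^2(\rz)}$ is indeed infinite; but your proposed repair --- that the ``far part gives the $|x-y|^{-1}$ tail which is integrable over the line'' --- is false, since $\int_1^\infty r^{-1}\,\ud r$ diverges, which is the very divergence you had just identified. The correct repair is not a near/far split of the bare kernel but to move the factor $\sqrt{|\sigma|}$ sitting inside $f$ over to the kernel side of the Cauchy--Schwarz inequality: estimate $|\tilde B(k)f(x)|\le\|\,\mathfrak{G}(k)(|x-\cdot|)\sqrt{|\sigma(\cdot)|}\,\|_{L^2(\rz_+)}\,\|\psi\|_{L^2(\partial\rz^2_+)}$, where the extra weight $(1+|y|)^{-(1+\varepsilon)}$ makes the tail $|x-y|^{-1}(1+|y|)^{-1-\varepsilon}$ integrable over the line, uniformly in $x$, while the logarithmic singularity at $y=x$ from \eqref{AsymptoticsII} is locally square-integrable in one dimension and causes no harm. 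With that modification the rest of your bookkeeping goes through. (For what it is worth, the paper's own proof simply asserts $\|\mathfrak{G}(k)\|_{L^2(\rz)}\le\tilde C$ without comment, so it is no more careful than you are on exactly this point; you at least noticed the problem.)
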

\begin{proof}
We consider the case $x=(x,0)$, the other case being analogous. 

Starting from~\eqref{EquationProofScatteringSolutions} and taking into account the asymptotics \eqref{AsymptoticsI} and \eqref{AsymptoticsII}, we obtain applying H\"older's inequality
\be
\ba
&| \int_{\rz_+}\mathfrak{G}(k)(|x_1-y_1|)( \sgn{\sigma}\sqrt{|\sigma|}\psi)(y_1)\ud y_1|\leq C   \|\mathfrak{G}(k)   \|_{L^2(\rz)}  \|\psi   \|_{L^2(\partial\rz_+^2)}\leq\tilde{C}\ ,
\ea
\ee
for some $\tilde{C}>0$. A multiplication with $\sigma(x)=\Or(  x^{-1-\varepsilon})$, $\varepsilon>{0}$, $x      \rightarrow \infty$, then proves the claim.
\end{proof}
\begin{lemma}
\label{bootsrap10aa}                                                                 
Let $\psi\in L^{2}(\partial\rz^2_+)$ satisfy the assumptions in Lemma~\ref{bootsrap1a}. Then, for some $C>0$ independent of $\psi$, we have
\be
\label{bootsrap10aauy}
  \|\Gamma_0(\lambda)(G_iE_0(I))^{\ast}\psi   \|_{\sz^1}\leq {\lambda^{-\frac{1}{12}}}{C  \|\psi   \|_{L^2(\partial\rz_+^2)}},\quad\lambda     \rightarrow \infty\ ,
\ee
uniformly for every compact $I\subset\overline{\rz}_+$.
\end{lemma}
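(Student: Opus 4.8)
The plan is to feed the explicit representation of $\Gamma_0(\lambda)(G_iE_0(I))^{\ast}$ from Lemma~\ref{strongcont1} into a Fourier‑restriction estimate, and to quantify the latter by combining the pointwise decay of $\sqrt{|\sigma|}\psi$ coming from Lemma~\ref{bootsrap1a} with the Hausdorff--Young inequality and a coarea computation on the circle $\sz^1$.

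Concretely, I would first set $f:=\mc{R}_{\bv}\big((-\sgn(\sigma))^i\sqrt{|\sigma|}\,\psi\big)$, which is a function on $C=C_{x_1}\cup C_{x_2}$ that is even along each axis, so that by \eqref{strongcont2} one has, for $\lambda$ in the interior of $I$,
\[ (\Gamma_0(\lambda)(G_iE_0(I))^{\ast}\psi)(\omega)=\frac{1}{\sqrt{2}\,\pi}\int_C f(y)\,\ue^{-\ui\sqrt\lambda\langle\omega,y\rangle}\,\ud y=c\,\big((F_{0,1}f_1)(\sqrt\lambda\,\omega_1)+(F_{0,1}f_2)(\sqrt\lambda\,\omega_2)\big), \]
where $c$ is a fixed absolute constant, $\omega=(\omega_1,\omega_2)\in\sz^1$, and $f_1(t):=f(t,0)$, $f_2(t):=f(0,t)$ are regarded as (even) functions on $\rz$; the quantity vanishes when $\lambda\notin I$. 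Since $\psi$ satisfies the hypotheses of Lemma~\ref{bootsrap1a}, estimate \eqref{bootstrap2a} gives $|f_j(t)|\le C(1+|t|)^{-1-\varepsilon}\,\|\psi\|_{L^2(\partial\rz^2_+)}$ for almost every $t\in\rz$ and $j=1,2$; since $\tfrac{6}{5}(1+\varepsilon)>1$ this yields $f_j\in L^{6/5}(\rz)$ with $\|f_j\|_{L^{6/5}(\rz)}\le C_\varepsilon\,\|\psi\|_{L^2(\partial\rz^2_+)}$, and hence, by Hausdorff--Young, $F_{0,1}f_j\in L^{6}(\rz)$ with $\|F_{0,1}f_j\|_{L^{6}(\rz)}\le C_\varepsilon\,\|\psi\|_{L^2(\partial\rz^2_+)}$.

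It then remains to bound $\|\Gamma_0(\lambda)(G_iE_0(I))^{\ast}\psi\|^2_{L^2(\sz^1)}$, which by the triangle inequality is dominated by a constant times $\int_{\sz^1}|(F_{0,1}f_1)(\sqrt\lambda\,\omega_1)|^2\,\ud\omega$ plus the analogous $f_2$‑term (with $\cos$ replaced by $\sin$). Parametrising $\omega=(\cos\theta,\sin\theta)$ and substituting $s=\sqrt\lambda\cos\theta$ turns this circle average into $2\int_{-\sqrt\lambda}^{\sqrt\lambda}|(F_{0,1}f_1)(s)|^2\,(\lambda-s^2)^{-1/2}\,\ud s$. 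H\"older's inequality with exponents $3$ and $3/2$ bounds this by $\|F_{0,1}f_1\|^2_{L^6(\rz)}\big(\int_{-\sqrt\lambda}^{\sqrt\lambda}(\lambda-s^2)^{-3/4}\,\ud s\big)^{2/3}$, and the last integral equals $\lambda^{-1/4}\int_{-1}^{1}(1-t^2)^{-3/4}\,\ud t$, a finite constant times $\lambda^{-1/4}$. Collecting the estimates and taking a square root gives $\|\Gamma_0(\lambda)(G_iE_0(I))^{\ast}\psi\|_{L^2(\sz^1)}\le C\,\|\psi\|_{L^2(\partial\rz^2_+)}\,\lambda^{-1/12}$ as $\lambda\to\infty$, with $C$ depending only on $\varepsilon$, and in particular independent of the compact interval $I$.

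The point where care is needed is the behaviour of the weight $(\lambda-s^2)^{-1/2}$ near $s=\pm\sqrt\lambda$: there the circle of radius $\sqrt\lambda$ becomes tangent, in the relevant coordinate, to the line carrying $f_j$, so that boundedness of $F_{0,1}f_j$ alone (which would follow merely from $f_j\in L^1$) yields no decay in $\lambda$ near those points. This is exactly why one needs the pointwise decay supplied by Lemma~\ref{bootsrap1a}, giving $f_j\in L^{6/5}$ and hence $F_{0,1}f_j\in L^6$; the choice of the dual exponent $6$ is what produces the stated power $\tfrac{1}{12}$. The remaining ingredients — the identification of \eqref{strongcont2} with Fourier transforms of $f_1,f_2$, the coarea change of variables, and the two elementary Beta‑type integrals — are routine.
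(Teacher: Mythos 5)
Your proposal is correct, and it reaches the stated exponent $\lambda^{-1/12}$ by a route that differs from the paper's in the key technical step. The paper's own proof is a two-line reduction to Lemma~\ref{sum} in the appendix: there the angular integrals $\int_0^{2\pi}\ue^{\ui k(y\cos\varphi-y'\sin\varphi)}\,\ud\varphi$ are evaluated explicitly via $J_0$, $Y_0$ and a Hankel-type integral, and the decay $k^{-1/3}$ for the \emph{squared} norm (hence $\lambda^{-1/12}$ for the norm, with $k=\sqrt\lambda$) is extracted from large-argument Bessel asymptotics after splitting the $(y,y')$-plane at scale $k^{-\alpha}$ and optimizing $\alpha=\tfrac13$. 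You instead discard the cross terms by the triangle inequality, convert each circle average of $|(F_{0,1}f_j)(\sqrt\lambda\,\omega_j)|^2$ into a one-dimensional integral against the weight $(\lambda-s^2)^{-1/2}$, and close with Hausdorff--Young ($f_j\in L^{6/5}$ from the pointwise decay of Lemma~\ref{bootsrap1a}, so $F_{0,1}f_j\in L^6$) plus H\"older with exponents $3$ and $3/2$; the Beta-type integral $\int_{-\sqrt\lambda}^{\sqrt\lambda}(\lambda-s^2)^{-3/4}\,\ud s=c\,\lambda^{-1/4}$ then produces exactly $\lambda^{-1/6}$ for the squared norm. Both arguments consume the same input (the representation of Lemma~\ref{strongcont1} and the decay \eqref{bootstrap2a}), and your uniformity in $I$ is automatic because the representation for $\lambda$ in the interior of $I$ does not depend on $I$. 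What your version buys is brevity, no special functions, and a transparent identification of where the loss occurs (the tangency of the circle $|\bs{k}|=\sqrt\lambda$ to the coordinate axes carrying $f_j$, where mere boundedness of $F_{0,1}f_j$ would give no decay); what the paper's version buys is a direct treatment of the off-diagonal terms $j\neq l$, which in fact decay faster, though that refinement is not needed for the stated bound.
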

\begin{proof}
We observe that, due to Lemma~\ref{strongcont1}, the term $  \|\Gamma_0(\lambda)(G_iE_0(I))^{\ast}\psi   \|^2_{\sz^1}$ allows for $\lambda$ in the interior of $I$ an expression as a sum with terms of the form \eqref{ap12}. Observing that $k=\sqrt{\lambda}$ then proves the claim.
\end{proof}
\begin{lemma}
\label{bootsrap1}                                                                 
Let $\sigma\in L^{\infty}(\rz_+)$ be such that $\sigma(x)=\Or(  x^{-1-\varepsilon})$, $\varepsilon>{0}$, $x     \rightarrow \infty$. Assume that, $\psi\in L^2(\partial\rz^2_+)$,
\be
\label{bootstrap5}
\Gamma_0(\tilde{\lambda})(G_0E_0(I_\lambda))^{\ast}\psi=0\ ,
\ee
for $\tilde{\lambda}$ being in the interior of $I_\lambda$. Then for $k$ such that $k^2=\tilde{\lambda}$ we have
\be
\label{bootstrap2}
\lim_{\epsilon     \rightarrow  0}{\tilde{B}_0}(k+\ui\epsilon)^{\ast}\sqrt{  |\sigma   |}\psi=:{\tilde{B}_0}(k+\ui0)^{\ast}\sqrt{  |\sigma   |}\psi\in L^{2}(\rz^2_+)\ .
\ee
\end{lemma}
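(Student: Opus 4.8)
The plan is to exploit the identity \eqref{n4} together with the decay of $\sqrt{|\sigma|}\psi$ obtained in Lemma~\ref{bootsrap1a}, running a bootstrap on the $H^1$-valued norm of the resolvent applied to the boundary source. First I would observe that, since $\psi$ satisfies \eqref{bootstrap5a}, Lemma~\ref{bootsrap1a} gives $\sqrt{|\sigma|}\psi\in L^{\infty}(\partial\rz^2_+)$ with the explicit decay \eqref{bootstrap2a}; combined with $\sigma(x)=\Or(x^{-1-\varepsilon})$ this means the boundary datum $\sqrt{|\sigma|}\psi$ enjoys more decay than merely $L^2$. The key point is then to estimate $\|\tilde B_0(k+\ui\epsilon)^{\ast}\sqrt{|\sigma|}\psi\|_{L^2(\rz^2_+)}$ uniformly in $\epsilon$, and to show the limit $\epsilon\to 0$ exists.

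The main step is to write, using \eqref{gg03}, \eqref{n4} and the relation between $B_0$ and $G_0$, the norm
\[
\|\tilde B_0(k+\ui\epsilon)^{\ast}\sqrt{|\sigma|}\psi\|^2_{L^2(\rz^2_+)}
= \int_{\rz_+}\frac{\|\Gamma_0(\lambda)(G_0 E_0(I))^{\ast}\sqrt{|\sigma|}\psi\|^2_{L^2(\sz^1)}}{|\lambda-(k+\ui\epsilon)^2|^2}\,\ud\lambda\ ,
\]
so that the question reduces to integrability of $\lambda\mapsto\|\Gamma_0(\lambda)(G_0 E_0(I))^{\ast}\sqrt{|\sigma|}\psi\|^2/|\lambda-\tilde\lambda|^2$ near $\tilde\lambda$ and at infinity. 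Integrability at infinity follows from Lemma~\ref{bootsrap10aa}, which gives the decay $\lambda^{-1/6}$ of the numerator. For integrability near $\tilde\lambda$ I would use the hypothesis \eqref{bootstrap5}: the numerator vanishes at $\lambda=\tilde\lambda$, and Hölder continuity of $\Gamma_0(\lambda)(G_0E_0(I))^{\ast}$ in $\lambda$ with exponent related to $\varepsilon$ (as in Lemma~\ref{strongcont1} and the strong-smoothness estimate \eqref{hsmoothb}) upgrades this vanishing to $\|\Gamma_0(\lambda)(G_0E_0(I))^{\ast}\sqrt{|\sigma|}\psi\|^2 = \Or(|\lambda-\tilde\lambda|^{2\theta})$ with $\theta>0$, killing the singularity $|\lambda-\tilde\lambda|^{-2}$ provided $\theta$ can be taken $>1/2$ — which is where the explicit decay rate in \eqref{bootstrap2a} is used to improve $\theta$ beyond the $\varepsilon/2$ of Lemma~\ref{hsmooth1}. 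One then passes to the limit $\epsilon\to 0$ by dominated convergence, the integrand being bounded uniformly in $\epsilon$ by an integrable function, and since $H^1(\rz^2_+)$ (or $H^1_{\loc}$) convergence is controlled by Lemma~\ref{cont}, the limit \eqref{bootstrap2} holds in $L^2(\rz^2_+)$.

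The hard part is the estimate near the energy shell $\lambda=\tilde\lambda$: one needs the numerator to vanish to high enough order, and the naive Hölder exponent $\varepsilon/2$ from Lemma~\ref{hsmooth1} is not by itself sufficient to beat the double pole. The resolution is to feed the extra decay of $\sqrt{|\sigma|}\psi$ from \eqref{bootstrap2a} back into the strong-smoothness argument of Lemma~\ref{hsmooth1} (the weight $(1+\|y\|^2)^{\pm(1+\varepsilon)/2}$ there can be replaced by a stronger weight when the source decays faster), obtaining a Hölder exponent strictly greater than $1/2$ for this particular datum. Once this improved local estimate is in hand, the rest — integrability at infinity, uniform domination, limit interchange — is routine along the lines of \cite[Lemma 9.4,\,p.\,99]{Yafaev:2010}.
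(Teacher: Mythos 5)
Your overall architecture — spectral representation of the resolvent, splitting the $\lambda$-integral into a neighbourhood of $\tilde\lambda$ and its complement, using the hypothesis \eqref{bootstrap5} plus H\"older continuity near the energy shell and Lemma~\ref{bootsrap10aa} at infinity — matches the paper's. But the step you yourself flag as "the hard part" is a genuine gap, and your proposed resolution does not work. By computing $\|\tilde B_0(k+\ui\epsilon)^{\ast}\sqrt{|\sigma|}\psi\|^2_{L^2}$ directly you produce a \emph{double} pole $|\lambda-\tilde\lambda|^{-2}$, so you need the squared numerator to vanish faster than $|\lambda-\tilde\lambda|$, i.e.\ a H\"older exponent $\theta>1/2$ for $\lambda\mapsto\Gamma_0(\lambda)(G_0E_0(I))^{\ast}\psi$. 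Feeding the pointwise decay \eqref{bootstrap2a} of $\sqrt{|\sigma|}\psi$ (which is $\Or((1+y)^{-1-\varepsilon})$) back into the strong-smoothness estimate only upgrades the exponent from $\varepsilon/2$ to roughly $\varepsilon$: the standard splitting of $\int|f(y)|\min(1,|\lambda_1-\lambda_2||y|)\,\ud y$ at $|y|\sim|\lambda_1-\lambda_2|^{-1}$ gives $\Or(|\lambda_1-\lambda_2|^{\varepsilon})$ and no more, because the decay rate of $\sqrt{|\sigma|}\psi$ is tied to $\varepsilon$ and cannot be pushed past it. For $\varepsilon\le 1/2$ — which the lemma must cover, since only $\varepsilon>0$ is assumed — the singularity $|\lambda-\tilde\lambda|^{2\varepsilon-2}$ is not integrable and your argument breaks down.

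The paper sidesteps the double pole entirely: instead of the $L^2$ norm it estimates the pairing $\langle\tilde B_0^{\ast}(k+\ui0)\sqrt{|\sigma|}\psi,\phi\rangle_{L^2(\rz^2_+)}$ against an arbitrary $\phi\in L^2(\rz^2_+)$ of bounded support. Via \eqref{n4} this produces only a \emph{single} factor $|\lambda-\tilde\lambda|^{-1}$, with the second factor $\Gamma_0(\lambda)\phi$ absorbed by Cauchy--Schwarz into $\|\phi\|_{L^2}$. Then the unimproved exponent $\varepsilon/2$ from Lemma~\ref{hsmooth1} already yields an integrable singularity $|\lambda-\tilde\lambda|^{\varepsilon/2-1}$ near $\tilde\lambda$, Lemma~\ref{bootsrap10aa} handles the tail as you intended, and the uniform bound $C\|\psi\|\,\|\phi\|$ combined with the Riesz representation theorem gives membership in $L^2(\rz^2_+)$. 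If you want to salvage your route you must switch to this duality formulation (or otherwise genuinely halve the power of the pole); the claimed upgrade of the H\"older exponent beyond $1/2$ is not available.
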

\begin{proof}
Due to the asymptotic behaviour of $\sigma$, \eqref{AsymptoticsI} and \eqref{AsymptoticsII} we can deduce $\tilde{B}_0(k+\ui0)\sqrt{  |\sigma   |}\psi\in L^{\infty}_{\loc}(\rz^2_+)$. To prove the r.h.s. of \eqref{bootstrap2} we are hence going to show that 
\be
\label{bootstrap3}
  |\langle{\tilde{B}_0(k+\ui0)}^{\ast}\sqrt{  |\sigma   |}\psi,\phi\rangle_{L^2(\rz^2_+)}   |\leq C  \|\psi   \|_{L^2(\partial\rz^2_+)}  \|\phi   \|_{L^2(\rz^2_+)}
\ee
for every $\phi\in L^2( \rz^2_+)$ with bounded support. A density argument in combination with the representation theorem of Riesz \cite[Satz~2.16]{Weidmann:2000} then imply ${\tilde{B}_0(k+\ui0)}^{\ast}\sqrt{  |\sigma   |}\psi\in L^2(\rz_+^2)$. 

Let $I_n:=[n+\eta,n+\eta+1]$, $n\in\gz$, with a suitable $\eta$ such that $\tilde{\lambda}$ is an element of the interior of such an interval. Let $I_{\tilde{\lambda}}$ be this interval. We get
\be
\label{bootstrap6}
\ba
&  |\langle\tilde{B}^{\ast}_0(k+\ui0)\sqrt{  |\sigma   |}\psi,\phi\rangle_{L^2(\rz^2_+)}   |\leq\sum\limits_{n=-\infty}^{\infty}  |\langle{R}_0(k+\ui0)(G_0E_0(I_n))^{\ast}\psi,\phi\rangle_{L^2(\rz^2_+)}   |\\
&\leq \int_{I_{\tilde{\lambda}}}\frac{1}{|\lambda-\tilde{\lambda}-\ui0|}  |\langle\Gamma(\lambda)(G_0E_0(I_n))^{\ast}\psi,\Gamma(\lambda)\phi\rangle_{L^2(\sz^1)}   |\ud\lambda\\
&+\sum\limits_{\{I_n\}\setminus I_{\tilde{\lambda}}} \int_{I_n}\frac{1}{|\lambda-\tilde{\lambda}-\ui0|}  |\langle\Gamma(\lambda)(G_0E_0(I_n))^{\ast}\psi,\Gamma(\lambda)\phi\rangle_{L^2(\sz^1)}   |\ud\lambda\ .
\ea
\ee
Regarding the second term on the r.h.s. of \eqref{bootstrap6} we obtain, using Lemma~\ref{bootsrap10aa},
\be
\label{ine1}
\ba
&  |\sum\limits_{\{I_n\}\setminus I_{\tilde{\lambda}}} \int_{I_n}\frac{1}{|\lambda-\tilde{\lambda}-\ui0|}  |\langle\Gamma(\lambda)(G_0E_0(I_n))^{\ast}\psi,\Gamma(\lambda)\phi\rangle_{L^2(\sz^1)}   |\ud\lambda   |\\
&\hspace{0.25cm}\leq \int_{0}^{\infty}  \|\psi   \|_{L^2(\partial\rz_+^2)}\frac{C}{1+|\lambda|^{\frac{13}{12}}}  \|\Gamma(\lambda)\phi   \|_{L^2(\sz^1)}\leq C  \|\psi   \|_{L^2(\partial\rz_+^2)}  \|\phi   \|_{L^2(\rz_+^2)}
\ea
\ee
for a suitable $C>0$. Regarding the first term in \eqref{bootstrap6} we use \eqref{bootstrap5} and Lemma \ref{hsmooth1} to obtain
\be
\label{bootstrap7ac}
\ba
  \|\Gamma(\lambda)(G_0E_0(I_{\tilde{\lambda}}))^{\ast}\psi   \|_{L^2( \sz^1)}&=  \|( \Gamma(\lambda)(G_0E_0(I_{\tilde{\lambda}}))^{\ast}-\Gamma(\tilde{\lambda})(G_0E_0(I_{\tilde{\lambda}}))^{\ast})\psi   \|_{L^2( \sz^1)}\\
&\leq|\lambda-\tilde{\lambda}|^{\frac{\varepsilon}{2}}  \|\psi   \|_{ L^2(\partial\rz_+^2)}
\ea
\ee
which then yields
\be
\label{bootstrap8ac}
\ba
&  | \int_{I_{\tilde{\lambda}}}\frac{1}{|\lambda-\tilde{\lambda}-\ui0|}  |\langle\Gamma(\lambda)(G_0E_0(I_n))^{\ast}\psi,\Gamma(\lambda)\phi\rangle_{L^2(\sz^1)}   |\ud\lambda   |\\
&\hspace{0.5cm}\leq C  \|\psi   \|_{L^2(\partial\rz_+^2)}  \|\phi   \|_{L^2(\rz_+^2)}
\ea
\ee
fora suitable $C>0$. Plugging \eqref{ine1} and \eqref{bootstrap8ac} in \eqref{bootstrap6} proves the claim.
\end{proof}
We are now able to specify the set $\mc{N}$.
\begin{theorem}
\label{th1}
Let $\sigma\in L^{\infty}(\rz_+)$ be such that $\sigma(x)=\Or(|x|^{-1-\varepsilon})$, $\varepsilon>0$, $x     \rightarrow \infty$. Then
\be
\mc{N}=\lgk k\in\kz: \ k^2\in\sigma_{pp}(-\Delta_{\sigma}) \cap [0,\infty)\rgk\ .
\ee
\end{theorem}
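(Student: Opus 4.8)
The plan is to prove the two inclusions separately. The inclusion $\sigma_{pp}(-\Delta_{\sigma})\cap[0,\infty)\subseteq\mc{N}$ is essentially already contained in Proposition~\ref{n2}: on $\overline{\rz}_+\setminus\mc{N}$ the map $k\mapsto(\eins+B(k))^{-1}$ is H\"older continuous up to the cut, so by Lemma~\ref{cont} all three factors in the resolvent identity~\eqref{Rsigma} have boundary values on $\overline{\rz}_+\setminus\mc{N}$, whence (the last assertion of Proposition~\ref{n2}) the spectrum of $-\Delta_{\sigma}$ there is purely absolutely continuous, so no eigenvalue can lie in $[0,\infty)\setminus\mc{N}$. (Alternatively, one argues directly: if $-\Delta_\sigma f=\lambda f$ with $f\neq0$ and $\lambda\geq0$, then $G_0f=\sqrt{|\sigma|}f_{\bv}\in L^2(\partial\rz^2_+)$; rewriting the form identity as $(-\Delta_0-\lambda)f=-G_1^{\ast}G_0f$ weakly and using $\ui\epsilon R_0(\lambda+\ui\epsilon)\to E_0(\{\lambda\})=0$ from Corollary~\ref{ac}, one gets $(\eins+B(\sqrt{\lambda}+\ui0))G_0f=0$ with $G_0f\neq0$, since $G_0f=0$ would make $f$ an eigenfunction of $-\Delta_0$, contradicting Corollary~\ref{ac}.)

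For the main inclusion $\mc{N}\subseteq\sigma_{pp}(-\Delta_{\sigma})\cap[0,\infty)$ let $k\in\mc{N}$. By the conjugation symmetry $B(-k+\ui0)=\overline{B(k+\ui0)}$ it suffices to treat $k=\sqrt{\lambda}\in\mc{N}_+$, so by~\eqref{n1} there is $0\neq\psi\in L^2(\partial\rz^2_+)$ with $-\psi=B(\sqrt{\lambda}+\ui0)\psi$. Since the kernel of $B(k)$ carries the prefactor $\sqrt{|\sigma(x)|}$, this identity forces $\psi$ to be supported on $\{\sigma\neq0\}$. By Lemma~\ref{n3} (and its verbatim analogue with $G_0$ replaced by $G_1$) one has $\Gamma_0(\lambda)(G_iE_0(I_\lambda))^{\ast}\psi=0$ for $i=0,1$. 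I take as eigenfunction candidate $f:=B_1(\sqrt{\lambda}+\ui0)^{\ast}\psi=-\tilde B_0(\sqrt{\lambda}+\ui0)^{\ast}\big(\sgn(\sigma)\sqrt{|\sigma|}\,\psi\big)$. Lemma~\ref{bootsrap1} (applied to the vector $\sgn(\sigma)\psi$, whose on-shell transform vanishes by the $G_1$-version just quoted) gives $f\in L^2(\rz^2_+)$, Lemma~\ref{cont} gives $f\in H^1_{\loc}(\overline{\rz^2_+})$, and $G_0f=B(\sqrt{\lambda}+\ui0)\psi=-\psi$, so $\sqrt{|\sigma|}\,f_{\bv}=-\psi\in L^2(\partial\rz^2_+)$. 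Moreover $f\neq0$: if $f=0$, injectivity of $\tilde B_0(\sqrt{\lambda}+\ui0)^{\ast}$ (Lemma~\ref{BSigmaInjective}) yields $\sgn(\sigma)\sqrt{|\sigma|}\,\psi\equiv0$, i.e.\ $\psi\equiv0$ on $\{\sigma\neq0\}$, contradicting that $\psi\neq0$ is supported there.

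It remains to upgrade $f$ to an element of $\cD(-\Delta_{\sigma})$. By Lemma~\ref{PropositionInvertible} (equivalently, by the integration-by-parts identity of Lemma~\ref{BoundaryIntegralParts} combined with the kernel equation for $\psi$) one has $s_\sigma(f,\phi)=\lambda\langle f,\phi\rangle_{L^2(\rz^2_+)}$ for every $\phi\in H^1(\rz^2_+)$ of bounded support, an identity that is meaningful because $f\in H^1_{\loc}$. Testing with $\phi=\chi_R^2 f$, where $\chi_R=1$ on $\{|x|\leq R\}$, $\supp\chi_R\subset\{|x|\leq 2R\}$ and $|\nabla\chi_R|\lesssim R^{-1}$, and bounding the boundary contribution by $\int_{\partial\rz^2_+}\chi_R^2\,|\sqrt{|\sigma|}f_{\bv}|^2=\int_{\partial\rz^2_+}\chi_R^2|\psi|^2\leq\|\psi\|_{L^2(\partial\rz^2_+)}^2$, one obtains $\int_{\rz^2_+}\chi_R^2|\nabla f|^2\leq C$ uniformly in $R$; letting $R\to\infty$ gives $\nabla f\in L^2(\rz^2_+)$, i.e.\ $f\in H^1(\rz^2_+)$. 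Since compactly supported $H^1$-functions are dense in $H^1(\rz^2_+)$ and $s_\sigma(f,\cdot)$ is now a bounded antilinear functional on $H^1(\rz^2_+)$, the identity extends to all $\phi\in H^1(\rz^2_+)$, so by the first representation theorem $f\in\cD(-\Delta_{\sigma})$ and $-\Delta_{\sigma}f=\lambda f$. As $f\neq0$ and $\lambda=k^2\geq0$, this shows $k\in\sigma_{pp}(-\Delta_{\sigma})\cap[0,\infty)$ (under the identification $k\leftrightarrow k^2$); the case $k\in\mc{N}_-$ follows by complex conjugation, and the endpoint $\lambda=0$ is handled analogously using~\eqref{AsymptoticsII} in place of~\eqref{AsymptoticsI}.

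The main obstacle is precisely this last step: passing from the weak eigenvalue equation (delivered by the already-proved lemmas) and the bare $L^2$-membership of $f$ (Lemma~\ref{bootsrap1}) to genuine membership in the operator domain $\cD(-\Delta_\sigma)$. This rests on the global Caccioppoli-type energy estimate above, whose crucial input is the identity $\sqrt{|\sigma|}\,f_{\bv}=-\psi\in L^2(\partial\rz^2_+)$ (one may instead feed in the pointwise decay $\sqrt{|\sigma|}\psi\in L^\infty$ with rate $(1+y)^{-1-\varepsilon}$ from Lemma~\ref{bootsrap1a} if a cleaner bound is preferred). A secondary subtlety is the bookkeeping of the weights $\sqrt{|\sigma|}$ versus $\sgn(\sigma)\sqrt{|\sigma|}$, which is what makes it necessary to use both the $G_0$- and the $G_1$-form of Lemma~\ref{n3}.
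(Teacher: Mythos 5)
Your proof is correct and its skeleton coincides with the paper's: both take $k\in\mc{N}$, produce the candidate eigenfunction $f=B_1(k+\ui0)^{\ast}\psi=-\tilde B_0(k+\ui0)^{\ast}\sgn(\sigma)\sqrt{|\sigma|}\psi$, obtain $f\in L^2(\rz^2_+)\cap H^1_{\loc}$ from Lemma~\ref{n3}, Lemma~\ref{bootsrap1} and Lemma~\ref{cont}, and feed the kernel relation for $\psi$ into Lemma~\ref{PropositionInvertible} to get the weak eigenvalue identity against compactly supported $H^1$ test functions. The one step where you genuinely diverge is the upgrade from this local weak identity to $f\in\cD(-\Delta_\sigma)$: the paper argues by duality --- the right-hand side of \eqref{p1a} extends to a bounded antilinear functional of $\varphi\in H^1(\rz^2_+)$, so the Riesz representation theorem forces $\nabla f\in L^2(\rz^2_+)$ --- whereas you run an explicit Caccioppoli estimate with cutoffs $\chi_R$, using $\sqrt{|\sigma|}f_{\bv}=-\psi\in L^2(\partial\rz^2_+)$ to control the boundary term. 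Both are valid; your version is more self-contained and makes visible exactly which quantitative input ($\|\psi\|_{L^2(\partial\rz^2_+)}$ and $\|f\|_{L^2(\rz^2_+)}$) controls the Dirichlet energy, at the cost of a slightly longer computation. Beyond that, your write-up supplies three points the paper leaves implicit: the inclusion $\sigma_{pp}(-\Delta_\sigma)\cap[0,\infty)\subseteq\mc{N}$ (which the paper's proof never addresses, relying tacitly on the absolute continuity statement of Proposition~\ref{n2}); the nontriviality $f\neq0$ via Lemma~\ref{BSigmaInjective}; and the observation that one needs the $G_1$-analogue of Lemma~\ref{n3} (equivalently, Lemma~\ref{bootsrap1} applied to $\sgn(\sigma)\psi$ rather than to $\psi$) to match the weight $\sgn(\sigma)\sqrt{|\sigma|}$ appearing in $B_1^{\ast}$. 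These additions are all correct and tighten an argument the paper states rather loosely (note in particular that the paper's own proof writes $B(\pm k+\ui0)\psi=\psi$, which conflicts with the sign convention in \eqref{n1}; your consistent use of $-\psi=B(k+\ui0)\psi$ together with $\ker(\eins+B)$ is the coherent choice).
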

\begin{proof}
We only have to show $\subset$ due to Lemma \ref{PropositionInvertible}. Pick $k\in\mc{N}$. By definition there exists a function $\psi\in L^2(\partial\rz^2_+)$ such that $B(\pm k+\ui 0)\psi=\psi$ and by Theorem~\ref{bootsrap1} we conclude that ${\tilde{B}_0}^{\ast}(k+\ui0)\sqrt{  |\sigma   |}\psi\in L^{2}(\rz^2_+)$ and ${\tilde{B}_0}^{\ast}(k+\ui0)\sqrt{  |\sigma   |}\psi \in H^1_{loc}(\rz^2_+)$. Furthermore, by Lemma~\ref{PropositionInvertible} we have
\be
\label{p1a}
\ba
& \int_{\rz^2_+}\overline{\nabla\tilde{B}_0(k)^{\ast}\sgn{\sigma}\sqrt{|\sigma|}\psi} \nabla\varphi\ \ud\bs{x}\\
&\hspace{0.25cm}= \int_{\partial\rz^2}\overline{({\tilde{B}_0(k)}^{\ast}\sgn{\sigma}\sqrt{|\sigma|}\psi)_{\bv}}\varphi_{\bv}\ \ud y+\langle k^2\tilde{B}_{0}(k)^{\ast}\sgn{\sigma}\sqrt{|\sigma|}\psi,\varphi \rangle_{L^2(\rz^2_+)}
\ea
\ee
for all $\varphi\in H^1(\rz^2_+)$ with bounded support. However, the r.h.s. of \eqref{p1a} exists for all $\varphi\in H^1(\rz^2_+)$, depending continuously on $\varphi$. Now, the representation theorem of Riesz \cite[Satz~2.16]{Weidmann:2000} implies that $\nabla\tilde{B}_0(k)^{\ast}\sgn{\sigma}\sqrt{|\sigma|}\psi \in L^2(\rz^2_+)$ and hence ${\tilde{B}_0}^{\ast}(k+\ui0)\sqrt{  |\sigma   |}\psi$ satisfies \eqref{p1} for all $\varphi \in H^1(\rz^2_+)$. This shows that $k^2 \geq 0$ is an eigenvalue of $-\Delta_{\sigma}$.
\end{proof}
\begin{cor}
Let $\sigma\in L^{\infty}(\rz_+)$ have bounded support. Then
\be
\label{qwer}
\mc{N}=\emptyset\ .
\ee
\end{cor}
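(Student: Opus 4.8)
The plan is to combine Theorem~\ref{th1}, which identifies $\mc{N}$ with $\sigma_{pp}(-\Delta_{\sigma})\cap[0,\infty)$, with the two earlier results on the absence of positive eigenvalues (Theorem~\ref{ee}) and the absence of the eigenvalue zero (Theorem~\ref{ee1}). Since $\sigma$ has bounded support, in particular $\sigma\in L^{\infty}(\rz_+)$ with $\sigma(x)=\Or(|x|^{-1-\varepsilon})$ for every $\varepsilon>0$ (indeed $\sigma$ vanishes identically for large $|x|$), so the hypotheses of Theorem~\ref{th1} are trivially met and we may invoke its conclusion.

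First I would apply Theorem~\ref{th1} to obtain $\mc{N}=\sigma_{pp}(-\Delta_{\sigma})\cap[0,\infty)$. Next, recall that $\mc{N}=\mc{N}_+\cup\mc{N}_-\subset\rz$ with the convention $\pm k\in\mc{N}_\pm$ for $k\geq 0$; hence the elements of $\mc{N}$ correspond, via $\lambda=k^2$, to points $\lambda\in[0,\infty)$. By Theorem~\ref{ee}, since $\sigma$ has bounded support, $\sigma_{pp}(-\Delta_{\sigma})\cap(0,\infty)=\emptyset$, so no $\lambda>0$ can arise. By Theorem~\ref{ee1}, again using that $\sigma$ has bounded support, $\lambda=0$ is not an eigenvalue of $-\Delta_{\sigma}$ either. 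Therefore $\sigma_{pp}(-\Delta_{\sigma})\cap[0,\infty)=\emptyset$, and consequently $\mc{N}=\emptyset$.

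There is essentially no obstacle here: the corollary is a direct bookkeeping consequence of the three theorems already established. The only point requiring a line of care is the translation between the parameter $k\geq 0$ labelling elements of $\mc{N}_\pm$ and the spectral parameter $\lambda=k^2\geq 0$ appearing in Theorems~\ref{ee} and~\ref{ee1}; once this identification is made explicit, emptiness of $\mc{N}$ follows immediately from emptiness of $\sigma_{pp}(-\Delta_\sigma)\cap[0,\infty)$.
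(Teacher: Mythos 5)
Your proof is correct and is exactly the argument the paper gives: invoke Theorem~\ref{th1} to identify $\mc{N}$ with $\sigma_{pp}(-\Delta_{\sigma})\cap[0,\infty)$, then use Theorem~\ref{ee} to rule out positive eigenvalues and Theorem~\ref{ee1} to rule out the eigenvalue zero. Your added remarks that bounded support trivially implies the decay hypothesis of Theorem~\ref{th1}, and that $k\in\mc{N}_\pm$ corresponds to the spectral parameter $\lambda=k^2$, are correct and slightly more explicit than the paper's one-line proof.
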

\begin{proof}
	The statement follows readily by the Theorems~\ref{ee},~\ref{ee1} and \eqref{th1}.
\end{proof}
\subsection{Generalized eigenfunctions and the on-shell scattering amplitude}
In this section we will construct the generalized eigenfunctions (or scattering solutions) associated with $-\Delta_{\sigma}$ and subsequently derive an expression for the on-shell scattering amplitude. Furthermore, in the limit of weak coupling, we obtain an approximation of the scattering amplitude which also illustrates the non-separability of the model.

According to the celebrated Lippmann-Schwinger equation \cite[p.~98]{ReeSim79}, for a Sch\"odinger operator $-\Delta+V$ in $\rz^2$, the scattering solutions $\psi^{+}_{\bs{k}}$ are given by, $\bs{k}=k\omega_{\bs{k}}$,
\begin{equation}
\label{scattering1}
\ba
\psi^{+}_{\bs{k}}(\bs{x})&=\ue^{\ui\langle\bs{k},\bs{x}\rangle}-\lim\limits_{\epsilon     \rightarrow 0^+} \int_{\rz^2}\mf{G}(k+\ui\epsilon)(\bs{x},\bs{y})V(\bs{y})\psi^{+}_{\bs{k}}(\bs{x})\ \ud\bs{y}\\
&=\ue^{\ui\langle\bs{k},\bs{x}\rangle}-(  R_0(k+\ui0)V\psi^{+}_{\bs{k}})(\bs{x})\ ,
\ea
\end{equation}
where $R_0$ is the free resolvent of $(-\Delta,H^2(\rz^2))$. Formally, \eqref{scattering1} is equivalent to $\psi^{+}_{\bs{k}}=(\eins+R_0(k+\ui0)V)^{-1}\ue^{\ui\langle\bs{k},\cdot\rangle}$ and plugging this again into \eqref{scattering1} we arrive at
\be
\label{sc1a}
\psi^{+}_{\bs{k}}(\bs{x})=\ue^{\ui\langle\bs{k},\bs{x}\rangle}-(  R_0(k+\ui0)V(\eins+R_0(k+\ui0)V)^{-1}\ue^{\ui\langle\bs{k},\cdot\rangle})(\bs{x})\ .
\ee
This in turn is equivalent to 
\be
\label{LS1}
\ba 
&\psi^{+}_{\bs{k}}(\bs{x})=\ue^{\ui\langle\bs{k},\bs{x}\rangle}\\
\hspace{0.5cm} &-(R_0(k+\ui0)\sgn(V)\sqrt{|V|}(\eins+\sqrt{|V|}R_0(k+\ui0)\sgn(V)\sqrt{|V|})^{-1}\sqrt{|V|}\ue^{\ui\langle\bs{k},\cdot\rangle})(\bs{x})\ .
%
\ea
\ee

To get an idea of how \eqref{LS1} translates into our setting we first observe that scattering solutions in the free case where $V\equiv 0$ are not only plane waves but symmetrised plane waves $S[\psi_{\bs{k}}]$, i.e., 
\be
\label{sk}
S[\psi_{\bs{k}}](x_1,x_2):=\ue^{\ui (k_1x_1+k_2x_2)}+\ue^{-\ui (k_1x_2-k_2x_2)}+\ue^{\ui (k_1x_1-k_2x_2)}+\ue^{-\ui (k_1x_1+\ui k_2x_2)}\ .
\ee
%
%
The reason for this is that the free operator is the Laplacian on $L^2(\rz^2_+)$ subjected to Neumann boundary conditions. To understand this from a physics point of view one observes that a single free particle on the half-line is described by a superposition of an incoming and an outgoing plane wave of same amplitude, due to the perfect reflection at the origin.

Furthermore, since the two-particle potential $V$ is singular and has support on the boundary $\partial \rz^2_+$ only, we conclude that, comparing \eqref{LS1} with \eqref{Rsigma}, that the scattering solution should be of the form, $\bs{k}=\omega k$,
\begin{equation}
\label{ScatteringSolution}
\psi^{+}_{\bs{k}}:=S[\psi_{\bs{k}}]-B_{1}(k+\ui0)^{\ast}( \eins+B(k+\ui0))^{-1}\sqrt{|\sigma|}(  S[\psi_{\bs{k}}])_{\bv}\ .
\end{equation}
We will show that \eqref{ScatteringSolution} is indeed well-defined.

In a first result, we will show that \eqref{ScatteringSolution} is indeed a generalized eigenfunction for $k\in\rz_+$, i.e., $\psi^{+}_{\bs{k}}$ satisfies locally the boundary conditions \eqref{ROBINBC} and fulfills
\begin{equation}
\label{ge1a}
-\Delta\psi^{+}_{\bs{k}}(\bs{x})-k^2\psi^{+}_{\bs{k}}(\bs{x})=0\ , \quad \bs{x} \in \Omega \ ,
\end{equation}
on any open set $\Omega$ which is compactly contained in $\rz^2_+$. For further convenience we use a weak form \eqref{ge1a}.
\begin{defn}
$\psi^{+}_{\bs{k}}\in H^1_{\loc}(\rz^2_+)$ is a generalized eigenfunction iff
\begin{equation}%
\label{EquationProofScattering}
s(\psi^{+}_{\bs{k}},\varphi)=k^2\langle \psi^{+}_{\bs{k}},\varphi\rangle_{L^2(\rz^2_+)}
\end{equation}
holds for all $\varphi \in H^1(\rz_+^2)$ with bounded support.
\end{defn}
\begin{theorem}
\label{scattering3}
Let $k \in \rz_+$ be such that $k^2\notin\sigma_{pp}(-\Delta_{\sigma})$ and $\sigma\in L^{\infty}(\rz_+)$ such that $\sigma(x)=\Or(  x^{-1-\varepsilon})$, $\varepsilon>{0}$, $x     \rightarrow \infty$. Then $\psi^{+}_{\bs{k}}$ as in \eqref{ScatteringSolution} is well-defined and $\psi^{+}_{\bs{k}}$ is a generalized eigenfunction to $-\Delta_{\sigma}$. 
\end{theorem}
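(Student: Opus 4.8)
The plan is to verify the weak identity~\eqref{EquationProofScattering} by a direct computation, after first checking that the right-hand side of~\eqref{ScatteringSolution} is a well-defined element of $H^1_{\loc}(\overline{\rz^2_+})$. For the latter: since $\sigma(x)=\Or(|x|^{-1-\varepsilon})$ and $S[\psi_{\bs{k}}]$ is bounded, $\sqrt{|\sigma|}\,(S[\psi_{\bs{k}}])_{\bv}\in L^2(\partial\rz^2_+)$; by Theorem~\ref{th1} the hypothesis $k^2\notin\sigma_{pp}(-\Delta_{\sigma})$ gives $k\notin\mc{N}$, so Proposition~\ref{n2} ensures that $(\eins+B(k+\ui0))^{-1}$ is a bounded operator on $L^2(\partial\rz^2_+)$; hence $\phi:=(\eins+B(k+\ui0))^{-1}\sqrt{|\sigma|}\,(S[\psi_{\bs{k}}])_{\bv}\in L^2(\partial\rz^2_+)$, and Lemma~\ref{cont} yields $u:=B_1(k+\ui0)^{\ast}\phi\in H^1_{\loc}(\overline{\rz^2_+})$. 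As $S[\psi_{\bs{k}}]$ is smooth, $\psi^{+}_{\bs{k}}=S[\psi_{\bs{k}}]-u\in H^1_{\loc}(\overline{\rz^2_+})$, and both $s(\psi^{+}_{\bs{k}},\varphi)$ and $\langle\psi^{+}_{\bs{k}},\varphi\rangle_{L^2(\rz^2_+)}$ are finite for every $\varphi\in H^1(\rz^2_+)$ with bounded support.

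The computation then rests on three observations. (i) The symmetrized plane wave $S[\psi_{\bs{k}}]$ (which factorizes as $4\cos(k_1x_1)\cos(k_2x_2)$) satisfies $-\Delta S[\psi_{\bs{k}}]=k^2S[\psi_{\bs{k}}]$ and Neumann boundary conditions on $\partial\rz^2_+$, so Green's identity gives $\langle\nabla S[\psi_{\bs{k}}],\nabla\varphi\rangle_{L^2(\rz^2_+)}=k^2\langle S[\psi_{\bs{k}}],\varphi\rangle_{L^2(\rz^2_+)}$ for $\varphi$ with bounded support. (ii) By the integral kernels in Lemma~\ref{integralkernel4}, $u=\tilde{B}_0(k+\ui0)^{\ast}g$ with $g:=-\sgn(\sigma)\sqrt{|\sigma|}\,\phi\in L^2(\partial\rz^2_+)$; since $k\in\rz_+\subset\overline{\kz_+}$ and $(k^2)^{\ast}=k^2$, the integration-by-parts formula of Lemma~\ref{BoundaryIntegralParts} applies and gives $\langle\nabla u,\nabla\varphi\rangle_{L^2(\rz^2_+)}-k^2\langle u,\varphi\rangle_{L^2(\rz^2_+)}=\langle g,\varphi_{\bv}\rangle_{L^2(\partial\rz^2_+)}$ for all $\varphi\in H^1(\rz^2_+)$ with bounded support. (iii) Because $B(k+\ui0)=G_0B_1(k+\ui0)^{\ast}$ by~\eqref{Yafaev2} and $G_0\psi=\sqrt{|\sigma|}\psi_{\bv}$, we get $B(k+\ui0)\phi=\sqrt{|\sigma|}\,u_{\bv}$, so the definition of $\phi$ is equivalent to $(\eins+B(k+\ui0))\phi=\sqrt{|\sigma|}\,(S[\psi_{\bs{k}}])_{\bv}$, that is
\be
\label{plan-key}
\phi=\sqrt{|\sigma|}\,(S[\psi_{\bs{k}}])_{\bv}-\sqrt{|\sigma|}\,u_{\bv}=\sqrt{|\sigma|}\,(\psi^{+}_{\bs{k}})_{\bv}\qquad\text{in }L^2(\partial\rz^2_+)\ .
\ee

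Now, splitting $s(\psi^{+}_{\bs{k}},\varphi)=\langle\nabla S[\psi_{\bs{k}}],\nabla\varphi\rangle_{L^2(\rz^2_+)}-\langle\nabla u,\nabla\varphi\rangle_{L^2(\rz^2_+)}-\int_{\partial\rz^2_+}\sigma\,\overline{(\psi^{+}_{\bs{k}})_{\bv}}\,\varphi_{\bv}\,\ud y$ and inserting (i) and (ii), all $k^2$-terms combine into $k^2\langle\psi^{+}_{\bs{k}},\varphi\rangle_{L^2(\rz^2_+)}$, and the boundary contribution that remains is $\int_{\partial\rz^2_+}\big(\sgn(\sigma)\sqrt{|\sigma|}\,\overline{\phi}-\sigma\,\overline{(\psi^{+}_{\bs{k}})_{\bv}}\big)\varphi_{\bv}\,\ud y$; writing $\sigma=\sgn(\sigma)\sqrt{|\sigma|}\sqrt{|\sigma|}$ and using~\eqref{plan-key}, the integrand vanishes, which establishes~\eqref{EquationProofScattering}.

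The main obstacle I anticipate is careful bookkeeping rather than a hard estimate: one must make sure the boundary-value operators appearing in~\eqref{ScatteringSolution} at $k+\ui0$ are precisely those to which Lemma~\ref{cont} and Lemma~\ref{BoundaryIntegralParts} apply (so that $\tilde{B}_0(k+\ui0)^{\ast}$ is admissible in the integration-by-parts formula), and that although $u_{\bv}$ and $(\psi^{+}_{\bs{k}})_{\bv}$ belong only to $L^2_{\loc}(\partial\rz^2_+)$, their products with $\sqrt{|\sigma|}$ are genuinely in $L^2(\partial\rz^2_+)$, so that~\eqref{plan-key} is an identity in $L^2(\partial\rz^2_+)$. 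Finally, since $\psi^{+}_{\bs{k}}\notin L^2(\rz^2_+)$, all inner products are to be read against test functions of bounded support; note that no density argument is needed, as the residual boundary term is shown to vanish identically.
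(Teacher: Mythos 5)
Your proposal is correct and follows essentially the same route as the paper: well-definedness via Theorem~\ref{th1}, Proposition~\ref{n2} and Lemma~\ref{cont}, then the weak eigenvalue identity from the Neumann/Helmholtz property of $S[\psi_{\bs{k}}]$, the integration-by-parts formula of Lemma~\ref{BoundaryIntegralParts} applied to $B_1(k+\ui0)^{\ast}=-\tilde{B}_0(k+\ui0)^{\ast}\sgn(\sigma)\sqrt{|\sigma|}$, and cancellation of the boundary terms through the identity $(\eins+B(k+\ui0))\phi=\sqrt{|\sigma|}(S[\psi_{\bs{k}}])_{\bv}$. Your intermediate observation $\phi=\sqrt{|\sigma|}(\psi^{+}_{\bs{k}})_{\bv}$ is just a slightly more explicit packaging of the same cancellation the paper performs.
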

\begin{proof} 
We first observe that $\sqrt{|\sigma|}(  S[\psi_{\bs{k}}])_{\bv}\in L^2(\rz^2_+)$. By Theorem \ref{th1} and Proposition \ref{n2} we may conclude that $( \eins+B(k+\ui0))^{-1}\sqrt{|\sigma|}(  S[\psi_{\bs{k}}])_{\bv}\in L^2(\partial\rz^2_+)$. Finally, by Lemma \ref{cont} we conclude that $\psi^{+}_{\bs{k}}\in H^1_{\loc}(\overline{\rz^2_+})$ which shows that $\psi^{+}_{\bs{k}}$ is indeed well-defined. 

Since $S[\psi_{\bs{k}}]$ fulfills Neumann boundary conditions along $\partial \rz^2_+$ one can employ an integration by parts to obtain the relation, $\varphi \in H^1(\rz^2_+)$ with bounded support, 
\begin{equation}
s(S[\psi_{\bs{k}}],\varphi)=k^2\langle S[\psi_{\bs{k}}],\varphi \rangle_{L^2(\rz^2_+)}-\langle \sigma S[\psi_{\bs{k}}],\varphi\rangle_{L^2(\partial \rz^2_+)}\ .
\end{equation}
Now, employing Lemma~\ref{BoundaryIntegralParts} and recalling that $B_{1}^{\ast}=-{\tilde{B}_{0}}^{\ast}\sgn(\sigma)\sqrt{|\sigma|}$ while setting 
\be
\eta:=( \eins+B(k+\ui0))^{-1}\sqrt{|\sigma|}(  S[\psi_{\bs{k}}])_{\bv}
\ee
we obtain
\begin{equation}
\begin{split}
s(B_{1}^{\ast}(k+\ui0)\eta,\varphi)=&k^2\langle B_{1}^{\ast}(k+\ui0)\eta,\varphi \rangle_{L^2(\rz^2_+)}+\langle\eta,\varphi\rangle_{L^2(\partial \rz^2_+)} \\ 
&-\langle (\sigma B_{1}^{\ast}(k+\ui0)\eta)_{\bv},\varphi\rangle_{L^2(\partial \rz^2_+)}\ .
\end{split}
\end{equation}
Since 
\begin{equation}
\ba
&\langle\sgn(\sigma)\sqrt{|\sigma|}\eta,\varphi\rangle_{L^2(\partial \rz^2_+)}+ \langle-\sigma B_{1}^{\ast}(k + \ui0)\eta,\varphi\rangle_{L^2(\partial \rz^2_+)}\\
&\hspace{0.25cm}=\langle-\sgn(\sigma)\sqrt{|\sigma|}\eta,\varphi\rangle_{L^2(\partial \rz^2_+)}+ \langle-\sgn(\sigma)\sqrt{|\sigma|}B(k + \ui0)\eta,\varphi\rangle_{L^2(\partial \rz^2_+)}\\
&\hspace{0.25cm}=-\langle\sigma S[\psi_{\bs{k}}],\varphi\rangle_{L^2(\partial \rz^2_+)}\ ,
\ea
\end{equation}
we arrive at 
\begin{equation}
\label{limit}
s(\psi^{+}_{\bs{k}},\varphi)=k^2\langle \psi^{+}_{\bs{k}},\varphi\rangle_{L^2(\rz^2_+)}\ ,
\end{equation}
thus proving the statement.
\end{proof}
We now want to derive an expression for the the so-called on-shell scattering amplitude. As customary in physics one expects the scattered solution, i.e., the generalized eigenfunction~\eqref{ScatteringSolution} to have the asymptotic form 
\begin{equation}
\psi^{+}_{\bs{k}}(\bs{x}) \sim S[\psi_{\bs{k}}](\bs{x})+f(k,\omega,\omega^{\prime})\frac{\ue^{\ui k\|\bs{x}\|}}{\sqrt{\|\bs{x}\|}}\ ,
\end{equation} 
as $\|\bs{x}\|     \rightarrow  \infty$ and where $f(k,\omega,\omega^{\prime})$ is the scattering amplitude. Here $\omega,\omega^{\prime} \in S^1$, $\bs{k}=k\omega$ and $\bs{x}=\|\bs{x}\|\omega'$. In other words, one formally defines see e.g. \cite{bRASCHE:1992}
\begin{defn}
The formal scattering amplitude is defined by
\begin{equation}
\label{ScatteredSolution}
f(k,\omega,\omega^{\prime}):=\lim\limits_{\substack{\|\bs{x}\|      \rightarrow  \infty\\ \omega^{\prime}=\frac{\bs{x}}{\|\bs{x}\|}}} \sqrt{\|\bs{x}\|}\ue^{-\ui k\|\bs{x}\|}(\psi^{+}_{\bs{k}}-S[\psi_{\bs{k}}])(\bs{x})\ .
\end{equation}
\end{defn}
We will show that \eqref{ScatteredSolution} is well-defined for weak potentials. 
\begin{defn}
\label{scaled}
For $\sigma\in L^{\infty}(\rz_+)$ such that $\sigma(x)=\Or(x^{-1-\varepsilon})$, $x     \rightarrow \infty$, for some $\varepsilon>0$, the scaled potential $\sigma_{\alpha}$ with coupling parameter $\alpha\in\rz_+$ is defined via
\be
\sigma_{\alpha}(x):=\alpha \sigma(x), \quad \alpha\in\rz_+\ .
\ee
\end{defn}
By Definition~\ref{defq}, the potential $\sigma_{\alpha}$ yields a one-parameter family of operators $-\Delta_{\sigma_{\alpha}}$ .
\begin{theorem}
\label{propsca}
Let $\sigma_{\alpha}$ be as in Definition~\ref{scaled} and $k\in\rz_+$ such that $k^2\notin\sigma_{pp}(-\Delta_{\sigma})$. Then the scattering amplitude in \eqref{ScatteredSolution} is well-defined and it is given by
\be
\label{scaf}
\ba
&f(k,\omega,\omega^{\prime})=-2\sqrt{\frac{\ui}{k}}  [\mathcal{F}_{0,1}(\mathcal{R}_{\bv}( \sgn (\sigma_{\alpha})\sqrt{|\sigma_{\alpha}|}h_{\bf k})|_{C_{x_1}})(k_1')\\
&\hspace{0.75cm}+\mathcal{F}_{0,1}(\mathcal{R}_{\bv}( \sgn (\sigma_{\alpha})\sqrt{|\sigma_{\alpha}|}h_{\bf k})|_{C_{x_2}})(k_2')   ]\ ,
\ea
\ee
where $h_{\bs{k}}:=( \eins+B(k+\ui0))^{-1}\sqrt{|\sigma_{\alpha}|}(  S[\psi_{\bs{k}}])_{\bv}$.
\end{theorem}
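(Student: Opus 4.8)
The plan is to rewrite the scattered part of \eqref{ScatteringSolution} as a single boundary integral against $\mf{G}^{(0)}(k+\ui0)$ and then pass the limit \eqref{ScatteredSolution} inside that integral using the large-argument asymptotics \eqref{AsymptoticsI} of $K_0$. By Theorem~\ref{th1} (which turns $k^2\notin\sigma_{pp}$ into $k\notin\mc{N}$) and Proposition~\ref{n2}, the inverse $(\eins+B(k+\ui0))^{-1}$ exists and is bounded on $L^2(\partial\rz^2_+)$; since $(S[\psi_{\bs{k}}])_{\bv}$ is bounded and $\sqrt{|\sigma_\alpha|}=\Or(|y|^{-(1+\varepsilon)/2})$, the datum $\sqrt{|\sigma_\alpha|}(S[\psi_{\bs{k}}])_{\bv}$ lies in $L^2(\partial\rz^2_+)$ and hence $h_{\bs{k}}\in L^2(\partial\rz^2_+)$ is well-defined, as already used in the proof of Theorem~\ref{scattering3}. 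Using $B_{1}(k)^{\ast}=-\tilde{B}_0(k)^{\ast}\sgn(\sigma_\alpha)\sqrt{|\sigma_\alpha|}$ together with the kernel \eqref{integralkernel6}, formula \eqref{ScatteringSolution} becomes
\be
\label{plan-scatint}
(\psi^{+}_{\bs{k}}-S[\psi_{\bs{k}}])(\bs{x})=\int_{\partial\rz^2_+}\mf{G}^{(0)}(k+\ui0)(\bs{x},y)\,g(y)\ \ud y\ ,\qquad g:=\sgn(\sigma_\alpha)\sqrt{|\sigma_\alpha|}\,h_{\bs{k}}\ ,
\ee
which is well-defined and lies in $H^1_{\loc}(\overline{\rz^2_+})$ by Lemma~\ref{cont} and Theorem~\ref{scattering3}.

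The second step is a decay bootstrap for $g$, in the spirit of Lemma~\ref{bootsrap1a}. From $h_{\bs{k}}=\sqrt{|\sigma_\alpha|}(S[\psi_{\bs{k}}])_{\bv}-B(k+\ui0)h_{\bs{k}}$ and the kernel \eqref{integralkernel11}, the function $B(k+\ui0)h_{\bs{k}}$ carries an explicit factor $\sqrt{|\sigma_\alpha(x)|}$, while the remaining boundary integral is bounded uniformly in $x$: its part away from the diagonal is estimated by $\|\mf{G}^{(0)}(k+\ui0)\|_{L^\infty}\,\|\sqrt{|\sigma_\alpha|}h_{\bs{k}}\|_{L^1}$ using \eqref{AsymptoticsI}, and its part near the diagonal by Cauchy--Schwarz using the at most logarithmic singularity \eqref{AsymptoticsII} and $\sqrt{|\sigma_\alpha|}h_{\bs{k}}\in L^2$. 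Since $(S[\psi_{\bs{k}}])_{\bv}$ is bounded this yields $|h_{\bs{k}}(x)|\le C\sqrt{|\sigma_\alpha(x)|}$ and hence the pointwise bound $|g(y)|\le C(1+|y|)^{-1-\varepsilon}$; in particular $g\in L^1(\partial\rz^2_+)$.

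Third, I would compute the limit. With $\mf{G}(k)(r)=(2\pi)^{-1}K_0(-\ui kr)$, \eqref{AsymptoticsI}, and $\|\bs{x}-\bs{y}\|=\|\bs{x}\|-\langle\omega^{\prime},\bs{y}\rangle+\Or(\|\bs{x}\|^{-1})$ for $\omega^{\prime}=\bs{x}/\|\bs{x}\|$, $\bs{k}^{\prime}:=k\omega^{\prime}$, one obtains
\be
\label{plan-asy}
\sqrt{\|\bs{x}\|}\,\ue^{-\ui k\|\bs{x}\|}\,\mf{G}(k+\ui0)(\|\bs{x}-\bs{y}\|)\ \longrightarrow\ \frac{1}{2\sqrt{-2\pi\ui k}}\,\ue^{-\ui\langle\bs{k}^{\prime},\bs{y}\rangle}\ ,
\ee
uniformly for $\bs{y}$ in bounded sets. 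On $\partial\rz^2_+$ two of the four images in $\mf{G}^{(0)}$ coincide, e.g.\ $\mf{G}^{(0)}(k)(\bs{x},(y_1,0))=2\mf{G}(k)(\|\bs{x}-(y_1,0)\|)+2\mf{G}(k)(\|\bs{x}-(-y_1,0)\|)$, so the integrand in \eqref{plan-scatint} tends pointwise to $g(y)(\sqrt{-2\pi\ui k})^{-1}(\ue^{-\ui k_1^{\prime}y_1}+\ue^{\ui k_1^{\prime}y_1})$ on $C_{x_1}$ and analogously on $C_{x_2}$. To interchange limit and integral over the \emph{unbounded} boundary I would split $\partial\rz^2_+$ into $\{|y|\le R\}$ and $\{|y|>R\}$: on the former \eqref{plan-asy} gives convergence; on the latter I would further split according to whether $\|\bs{x}-\bs{y}\|$ is $\gtrsim\|\bs{x}\|$, comparable to $\|\bs{x}\|$, or small, and in each region combine $|\mf{G}(k+\ui0)(r)|\le Cr^{-1/2}$ for $r\ge1$ (resp.\ $\le C(1+|\ln r|)$ for $r<1$) with the pointwise decay $|g(y)|\le C(1+|y|)^{-1-\varepsilon}$ to bound the whole contribution by $CR^{-\varepsilon}$ uniformly for $\|\bs{x}\|$ large. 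Letting first $\|\bs{x}\|\to\infty$ and then $R\to\infty$ shows that the limit \eqref{ScatteredSolution} exists and equals the integral of the pointwise limit.

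Finally, I would recognize the limiting integral as one-dimensional Fourier transforms: with $(\mc{R}_{\bv}g)|_{C_{x_1}}(y_1)=\tfrac12 g(|y_1|,0)$ and the definition \eqref{fo1} of $\mathcal{F}_{0,1}$, the $C_{x_1}$-part equals $\tfrac{1}{\sqrt{-2\pi\ui k}}\int_0^\infty g(y_1,0)(\ue^{-\ui k_1^{\prime}y_1}+\ue^{\ui k_1^{\prime}y_1})\ud y_1=\tfrac{2}{\sqrt{-\ui k}}\,\mathcal{F}_{0,1}(\mc{R}_{\bv}g|_{C_{x_1}})(k_1^{\prime})$, and similarly for $C_{x_2}$; since $\tfrac{2}{\sqrt{-\ui k}}=-2\sqrt{\ui/k}$ for $k>0$ in the branch convention of Section~\ref{Model}, adding the two parts reproduces exactly \eqref{scaf}. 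The main obstacle is the interplay of the second and third steps: everything rests on upgrading the a priori $L^2$-information on $h_{\bs{k}}$ to the pointwise decay $|g(y)|\le C(1+|y|)^{-1-\varepsilon}$, since without it the tail of the boundary integral over the unbounded $\partial\rz^2_+$ cannot be controlled as $\|\bs{x}\|\to\infty$.
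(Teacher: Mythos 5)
Your argument is correct and reproduces \eqref{scaf} with the right constants (including the branch identity $2(-\ui k)^{-1/2}=-2\sqrt{\ui/k}$), and its overall skeleton --- far-field asymptotics of $K_0$, the expansion $\|\bs{x}-\bs{y}\|=\|\bs{x}\|-\langle\omega',\bs{y}\rangle+\Or(\|\bs{x}\|^{-1})$, and a domination argument to pass the limit through the boundary integral --- is the same as the paper's. Where you genuinely diverge is in \emph{how} the interchange of limit and integral is justified. The paper keeps the factor $\sqrt{|\sigma_\alpha(y)|}$ inside the kernel $B_1(k+\ui0)^{\ast}(\bs{x},y)$ and establishes the uniform bound \eqref{dcon2}, $\sqrt{\|\bs{x}\|}\,|B_1(k+\ui0)^{\ast}(\bs{x},y)|\le C(1+\|y\|)^{-(1+\varepsilon)/2}$; since the right-hand side is square-integrable on the one-dimensional boundary, dominated convergence applies in $L^2(\partial\rz^2_+)$ to the rescaled kernel, and the limit then follows from continuity of the pairing with the fixed function $h_{\bs{k}}\in L^2(\partial\rz^2_+)$ --- no pointwise information on $h_{\bs{k}}$ is ever needed. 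You instead absorb $\sqrt{|\sigma_\alpha|}$ into $g=\sgn(\sigma_\alpha)\sqrt{|\sigma_\alpha|}h_{\bs{k}}$ and first prove the pointwise bootstrap $|h_{\bs{k}}(x)|\le C\sqrt{|\sigma_\alpha(x)|}$ from the resolvent identity $h_{\bs{k}}=\sqrt{|\sigma_\alpha|}(S[\psi_{\bs{k}}])_{\bv}-B(k+\ui0)h_{\bs{k}}$ (this is essentially the paper's Lemma~\ref{bootsrap1a} transplanted to the inhomogeneous equation, and your estimate of the integral via the $L^1$/log-singularity split is sound), which yields $g\in L^1$ with decay $(1+|y|)^{-1-\varepsilon}$ and lets you dominate in $L^1$ after the $R$-splitting. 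Both routes are valid; yours costs an extra (but correct) bootstrap lemma and a slightly more involved tail estimate over the unbounded boundary, while the paper's $L^2$-duality trick is shorter and, as you yourself identify, removes the ``main obstacle'' entirely. (A small omission on your side, shared with the paper's own statement of \eqref{dcon1}--\eqref{dcon2}: the direction $\omega'$ must be kept away from $(1,0)$ and $(0,1)$ so that $\dist(\bs{x},\partial\rz^2_+)\gtrsim\|\bs{x}\|$; your near-diagonal case is then in fact empty for $\|\bs{x}\|$ large.)
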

\begin{proof} 
By \eqref{ScatteringSolution} we may write, $\omega'\neq(1,0),(0,1)$,
\begin{equation}\begin{split}
f(k,\omega,\omega^{\prime})&=\lim_{\substack{\|\bs{x}\|      \rightarrow  \infty\\ \omega^{\prime}=\frac{\bs{x}}{\|\bs{x}\|}}} \sqrt{\|\bs{x}\|}\ue^{-\ui k\|\bs{x}\|}B_{1}(k+\ui0)^{\ast}( \eins+B(k+\ui0))^{-1}\sqrt{|\sigma|}(  S[\psi_{\bs{k}}])_{\bv} \\
&=\lim_{\substack{\|\bs{x}\|      \rightarrow  \infty\\ \omega^{\prime}=\frac{\bs{x}}{\|\bs{x}\|}}} \sqrt{\|\bs{x}\|}\ue^{-\ui k\|\bs{x}\|}B_{1}(k+\ui0)^{\ast}h_{\bs{k}}\ .
\end{split}
\end{equation}
Using the asymptotics for $\sigma_{\alpha}$, \eqref{AsymptoticsI} and Proposition~\ref{n2} in combination with Theorem~\ref{th1}, we may conclude that $h_{\bs{k}}\in L^2(\partial \rz_+^2)$. Also, the pointwise asymptotics \cite[p.~328]{Schwabl:2007}
\begin{equation}
\frac{\ue^{\ui k\|\textbf{x}-\textbf{y}\|}}{\sqrt{\|\textbf{x}-\textbf{y}\|}} \sim \frac{\ue^{\ui k\|\textbf{x}\|}}{\sqrt{\|\textbf{x}\|}}\ue^{-\ui \langle\bs{k'},\textbf{y}\rangle},\quad \|\bs{x}\|      \rightarrow  \infty,\ \bs{k'}:=\omega^{\prime}k\ ,
\end{equation}
holds. Combining this with Lemma~\ref{integralkernel4} and \eqref{AsymptoticsI} we may deduce that
\be
\label{dcon1}
\lim\limits_{\substack{\|\bs{x}\|      \rightarrow  \infty\\ \omega^{\prime}=\frac{\bs{x}}{\|\bs{x}\|}}}\sqrt{\|\bs{x}\|}\ue^{-\ui k\|\bs{x}\|}B_{1}(k+\ui0)^{\ast}(\bs{x},y)=-\sqrt{\frac{1}{-8\pi i k}}\sgn{\sigma_{\alpha}}(y)\sqrt{|\sigma_{\alpha}(y)|}\sum_{j=1}^{4}\ue^{-\ui\langle\bs{k'},\textbf{y}_j\rangle}\ , 
\ee
with $\textbf{y}_j\in\partial{\rz^2_+}$ according to \eqref{ResolventKernelFree}. Moreover, \eqref{AsymptoticsI} and \eqref{AsymptoticsII} together with the supposed properties of $\sigma_{\alpha}$ reveals that for $\omega'\neq(1,0),(0,1)$ there exists a constant $C$ such that, $\omega^{\prime}=\frac{\bs{x}}{\|\bs{x}\|}$, $\|\bs{x}\|$ large, 
\be
\label{dcon2}
\sqrt{\|\bs{x}\|}|B_{1}(k+\ui0)^{\ast}(\bs{x},{y})|\leq \frac{C}{(1+\|{y}\|)^{\frac{1+\varepsilon}{2}}}\quad \mbox{for all} \  {y}\in\partial\rz^2_+\ .
\ee 
Using \eqref{dcon1} and \eqref{dcon2} we may apply Lebesgue dominated converges theorem \cite[15.6~Theorem]{Bauer:2001} for $L^2(\partial \rz^2_+)$ which proves the claim after some straightforward calculations.
\end{proof}
We are now in the position to present an explicit formula for the scattering amplitude in the regime of weak coupling, i.e., $\alpha     \rightarrow 0$.
\begin{prop}
Under the assumptions of Theorem~\ref{propsca}, for $\alpha     \rightarrow  0$, the scattering amplitude possesses a complete asymptotic expansion in powers of $\alpha$ with leading coefficient
\be \label{StreuamplitudeWeakCoupling}
\ba
\ba
&f(k,\omega,\omega^{\prime})=\alpha\sum_{j=1}^2(\hat{\sigma}_k(k^{\prime}_j+k_j)+\hat{\sigma}_k(k^{\prime}_j-k_j)+\hat{\sigma}_k(k_j-k^{\prime}_j)+\hat{\sigma}_k(-k_j-k^{\prime}_j))\\
&\qquad \qquad \qquad \qquad +\Or( \alpha^2) \ ,
\ea\\ ,
\ea
\end{equation}
with
\begin{equation}
\hat{\sigma}_k(\xi)=-4\sqrt{\frac{\ui}{k}}\mc{F}_{0,1}(\mc{R}_{\bv}\sigma)(\xi)\ .
\ee
\end{prop}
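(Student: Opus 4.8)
The plan is to expand the scattering amplitude $f(k,\omega,\omega')$ from Theorem~\ref{propsca} in powers of the coupling constant $\alpha$, using that $\sqrt{|\sigma_\alpha|}=\sqrt{\alpha}\sqrt{|\sigma|}$ and $\sgn(\sigma_\alpha)=\sgn(\sigma)$, so that $B(k+\ui 0)$ acquires an overall factor $\alpha$: writing $B(k+\ui 0)=\alpha\,\tilde B(k+\ui0)$ (with the obvious meaning that the $\alpha$-independent operator $\tilde B$ carries the $\sqrt{|\sigma|}$-weights), we have the formal Neumann series $(\eins+B(k+\ui 0))^{-1}=\sum_{m\ge 0}(-\alpha)^m\tilde B(k+\ui0)^m$. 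First I would justify that this series converges in operator norm for $\alpha$ small: by Lemma~\ref{compactop} the operator $\tilde B(k+\ui0)$ is bounded (indeed compact), and by Proposition~\ref{n2} the limit $B(k+\ui0)$ exists and depends continuously (Hölder) on $k$ up to the real axis away from $\mc N$; since $k^2\notin\sigma_{pp}(-\Delta_\sigma)$ we have $k\notin\mc N$ by Theorem~\ref{th1}, so $\|\alpha\tilde B(k+\ui0)\|<1$ for $\alpha$ below some threshold $\alpha_0(k)>0$, and the Neumann series converges, giving a genuine (norm-convergent, hence analytic in $\alpha$) expansion. This already yields the "complete asymptotic expansion in powers of $\alpha$" claim.

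Next I would extract the leading term. Substituting the Neumann series into the formula $h_{\bs k}=(\eins+B(k+\ui0))^{-1}\sqrt{|\sigma_\alpha|}(S[\psi_{\bs k}])_{\bv}$ from Theorem~\ref{propsca} gives $h_{\bs k}=\sqrt\alpha\,\sqrt{|\sigma|}(S[\psi_{\bs k}])_{\bv}+\Or(\alpha^{3/2})$ in $L^2(\partial\rz^2_+)$. Plugging this into the expression \eqref{scaf} for $f$, and noting that $\sgn(\sigma_\alpha)\sqrt{|\sigma_\alpha|}\,h_{\bs k}=\sqrt\alpha\,\sgn(\sigma)\sqrt{|\sigma|}\cdot\bigl(\sqrt\alpha\,\sqrt{|\sigma|}(S[\psi_{\bs k}])_{\bv}\bigr)+\Or(\alpha^2)=\alpha\,\sigma\,(S[\psi_{\bs k}])_{\bv}+\Or(\alpha^2)$, the leading contribution to $f$ becomes
\be
-2\sqrt{\tfrac{\ui}{k}}\,\alpha\Bigl[\mc F_{0,1}\bigl(\mc R_{\bv}(\sigma\,(S[\psi_{\bs k}])_{\bv})|_{C_{x_1}}\bigr)(k_1')+\mc F_{0,1}\bigl(\mc R_{\bv}(\sigma\,(S[\psi_{\bs k}])_{\bv})|_{C_{x_2}}\bigr)(k_2')\Bigr]+\Or(\alpha^2)\ .
\ee
It remains to evaluate this Fourier transform explicitly. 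On $C_{x_1}$ the trace $(S[\psi_{\bs k}])_{\bv}$ is, by \eqref{sk}, a sum of four exponentials $\ue^{\ui k_1 x_1}+\ue^{-\ui k_1 x_1}+\ue^{\ui k_1 x_1}+\ue^{-\ui k_1 x_1}$ evaluated at $x_2=0$ — more carefully, one reads off the four plane-wave terms restricted to the $x_1$-axis, which produce shifts of the Fourier variable by $\pm k_1$ — and similarly on $C_{x_2}$ with shifts by $\pm k_2$. Multiplying by $\sigma$ and Fourier-transforming turns each exponential factor into a translate of $\mc F_{0,1}(\mc R_{\bv}\sigma)$, namely $\hat\sigma_k(k_j'\pm k_j)$ and $\hat\sigma_k(-k_j'\pm k_j)$ with $\hat\sigma_k(\xi):=-4\sqrt{\ui/k}\,\mc F_{0,1}(\mc R_{\bv}\sigma)(\xi)$ absorbing the prefactor $-2\sqrt{\ui/k}$ together with the normalization from $\mc R_{\bv}$ (the factor of $1/2$ in \eqref{reflectionb}) and the four-term structure. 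Collecting the eight resulting terms over $j=1,2$ gives exactly \eqref{StreuamplitudeWeakCoupling}.

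The main obstacle I anticipate is purely bookkeeping: correctly tracking the constants and the reflection-operator normalizations (the $\tfrac12$ in $\mc R_{\bv}$, the $\sqrt 2$'s in Lemma~\ref{reflection4}, and the $(2\pi)^{-1/2}$ in $\mc F_{0,1}$) so that the four plane-wave pieces of $S[\psi_{\bs k}]$ combine with the four image-source terms of $\mf G^{(0)}$ to reproduce precisely the sixteen (really $8+8$) contributions displayed in \eqref{StreuamplitudeWeakCoupling} with coefficient exactly $\hat\sigma_k$. A secondary subtlety is that the limit defining $f$ in \eqref{ScatteredSolution} and the $\alpha$-expansion must be interchanged; this is legitimate because the dominated-convergence bound \eqref{dcon2} from the proof of Theorem~\ref{propsca} is uniform in $\alpha$ on $[0,\alpha_0]$ (the bound depends only on $\sigma$, not on $\alpha$, up to an overall factor $\alpha$), so the asymptotic expansion of $f$ is inherited termwise from that of $h_{\bs k}$. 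No other genuinely new estimate is needed beyond what Theorem~\ref{propsca} and Proposition~\ref{n2} already provide.
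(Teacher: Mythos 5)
Your proposal is correct and follows essentially the same route as the paper: observe that the coupling constant enters $B(k+\ui0)$ as a scalar factor $\alpha$ multiplying a bounded ($\alpha$-independent) operator, expand $(\eins+B(k+\ui0))^{-1}$ in a norm-convergent Neumann series for small $\alpha$, read off the leading term $h_{\bs k}=\sqrt{|\sigma_{\alpha}|}(S[\psi_{\bs k}])_{\bv}+\Or(\alpha^{3/2})$, and evaluate the resulting one-dimensional Fourier transforms of $\sigma$ against the four plane-wave components of $S[\psi_{\bs k}]$ restricted to each coordinate axis. The only minor quibble is that the convergence of the Neumann series for small $\alpha$ needs only the uniform boundedness of the $\alpha$-independent operator, not the condition $k\notin\mc{N}$ (which is needed instead so that $f$ itself is well-defined via Theorem~\ref{propsca}); this does not affect the validity of the argument.
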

\begin{proof}
We rewrite \eqref{scaf} and obtain
\begin{equation}\label{EquationProofScatteringAmplitudeWeak}\begin{split}
f(k,\omega,\omega^{\prime})=&-2\sqrt{\frac{\ui}{2\pi k}} \int_{\rz}\mc{R}_{\bv}( \sgn(\sigma_{\alpha}) \sqrt{|\sigma_{\alpha}|}h_{\bf k})(y,0)\ue^{-\ui k^{\prime}_1 y}\ \ud y \\
& \quad -2\sqrt{\frac{\ui}{2\pi k}} \int_{\rz}\mc{R}_{\bv}( \sgn(\sigma_{\alpha}) \sqrt{\sigma_{\alpha}|}h_{\bf k}))(0,y)\ue^{-\ui k^{\prime}_2 y}\ \ud y\ .
\end{split}
\end{equation}

Due to the terms $\sigma_{\alpha}(x)$ and $\sigma_{\alpha}(y)$ in Lemma \ref{integralkernel2} and the asymptotic behavior \eqref{AsymptoticsI} and \eqref{AsymptoticsII} we may infer that $B(k+\ui0)$ is a bounded operator in $L^2(\partial\rz^2_+)$. Moreover, we see in Lemma \ref{integralkernel2} that the coupling constant acts in $B(k+\ui0)$ simply as a scalar multiplication operator and hence for small $\alpha$ the operator $(\eins+B(k+\ui0))^{-1}$ allows a Neumann series representation
\be
\label{nsc}
(\eins+B(k+\ui0))^{-1}=\eins+\sum\limits_{n=1}^{\infty}\alpha^nA_n(k+\ui0)
\ee
with some (uniformly) bounded operators $A_n(k+\ui0)$. To first order in $\alpha$ we therefore obtain
\begin{equation}
\label{hfw}
\begin{split}
h_{\bs{k}}(y,0)=\sqrt{|\sigma_{\alpha}|} S[\psi_{\bs{k}}] |_{(y,0)}=2\sqrt{|\sigma_{\alpha}(y)|}(\ue^{-ik_1y}+\ue^{+ik_1y})\ , \\
h_{\bs{k}}(0,y)=\sqrt{|\sigma_{\alpha}|} S[\psi_{\bs{k}}] |_{(0,y)}=2\sqrt{|\sigma_{\alpha}(y)|}(\ue^{-ik_2y}+\ue^{+ik_2y})\ .
\end{split}
\end{equation}
Plugging \eqref{hfw} into \eqref{EquationProofScatteringAmplitudeWeak} then yields the statement.
\end{proof}
\begin{remark} Equation~\eqref{StreuamplitudeWeakCoupling} illustrates the non-separability of the singular two-particle interactions, i.e., momentum is exchanged componentwise.  
	\end{remark}
We present an easy example.
\begin{example} We consider the case where $\sigma\in L^{\infty}(\rz_+)$ is a step-potential, i.e.,
\begin{equation}
\sigma(y)=
\begin{cases}
{\sigma}_0, &\text{for} \quad y\in (0,L]\ , \\
0,& \text{else}\ ,
\end{cases}
\end{equation}
where ${\sigma}_0\in\rz$ is some constant. We obtain
\begin{equation}
\ba
\hat{\sigma}_k(\xi)&=-\frac{2\ui {\sigma}_0}{\xi}\sqrt{\frac{\ui}{2\pi k}}(\ue^{-\ui \xi L}-\ue^{\ui \xi L})\\
               &=-\frac{4{\sigma}_0}{\xi}\sqrt{\frac{\ui}{2\pi k}}\sin(\xi L)\ .
\ea
\end{equation}
Hence, if we assume $\xi L << 1$, then $\hat{\sigma}_k(\xi)\approx -{\sigma_0}L\sqrt{\frac{\ui}{2\pi k}}$ (low-energy limit) and we obtain in the weak-coupling limit 
\begin{equation}
|f(k,\omega,\omega^{\prime})|^2 \approx \frac{128}{\pi |k|}( \alpha\sigma_0L)^2\ .
\end{equation}
\end{example}
\subsection*{Acknowledgment}{J.~Kerner would like to thank T.~M\"{u}hlenbruch for helpful discussions and S.~Egger expresses his gratitude towards V.~Lotoreichik for useful comments and for pointing out stimulating references.
}
\appendix
\section{Notation}\label{Notation}

In this paper we use the following notation: the values of $z^{\beta}$, $z\in\kz$, $\beta\in\rz$ are determined by requiring that the branch cut is at $\rz_+$ and $\arg z=\pi$ for $-z\in\rz_+$. Moreover, we put
\begin{itemize} 
	\item $z^{\beta}:=\lim\limits_{\epsilon     \rightarrow  0^+}(z+\ui\epsilon)^{\beta}$\ , $z\in\rz_+$\ , $\beta\in\rz$\ ,
	\item $k:=\sqrt{z}$\ , $z\in\kz\setminus\rz_+$\ ,
\end{itemize}
and we use 
\be
\label{kz+}
\kz_+:=\lgk z\in\kz: \quad \im z>0\rgk\ .
\ee
Note that for $z\in\kz$ we always have $\sqrt{z}\in\overline{\kz_+}$.

Finally, we consider $\partial\rz^2_+$ to be canonically embedded into $\rz^2$.  Unless stated otherwise, bold coordinates refer to an element of $\rz^{2}_+$ and non-bold coordinates to be an element of $\partial\rz^2_+$, i.e., $x=(x_1,0)$ or $x=(0,x_2)$, respectively. 

\section{Some integral estimates}
\label{integralapp}
\begin{lemma}
\label{app1}
Let $\sigma\in L^{\infty}(\rz_+)$ be given with  
\be
\label{app10}
\sigma(x)=\Or(x^{-\alpha}), \quad x     \rightarrow \infty\ ,\ \alpha>0\ .
\ee
Moreover, assume that $g:\rz_+     \rightarrow \kz$ is continuous and satisfies
\be
\label{app7}
g(kx)=
\begin{cases}
\Or(|\ln k x|)\ , & x     \rightarrow  0\ ,\\
\Or(\ue^{-kx})\ , & x     \rightarrow  \infty\ , 
\end{cases}
\ee
with some $k\in\rz_+$. Then, for $\psi\in L^2(\rz_+)$, we have
\be
\label{ap2}
 \int_{\rz_+}g(k|x-y|)\sigma(y)\psi(y)\ud y = \Or(x^{-\alpha})\ .
\ee
\end{lemma}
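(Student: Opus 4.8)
The plan is to split the $y$-integral at the scale $|x-y|\sim x/2$: on the near region $\{|x-y|<x/2\}$ the decay is furnished by $\sigma$, while on the far region $\{|x-y|\ge x/2\}$ it comes from the exponential decay of $g$. The Cauchy--Schwarz inequality takes care of $\psi\in L^2(\rz_+)$ throughout.

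First I would isolate the two elementary facts that keep the implied constant uniform in $x$. Since $g$ is continuous with $g(t)=\Or(|\ln t|)$ as $t\to0^+$ and $g(t)=\Or(\ue^{-t})$ as $t\to\infty$ (these are the hypotheses, $k\in\rz_+$ being fixed), the function $t\mapsto g(kt)$ lies in $L^2(\rz_+)$, because $|\ln t|^2$ is integrable near $0$ and $\ue^{-2kt}$ near $\infty$. Hence, substituting $u=x-y$, $\|g(k|x-\cdot|)\|_{L^2(\rz)}^2=\tfrac2k\int_0^\infty|g(t)|^2\,\ud t=:M^2<\infty$, a bound independent of $x$; similarly $\|\ue^{-k|x-\cdot|}\|_{L^2(\{|u|\ge r\})}^2=k^{-1}\ue^{-2kr}$. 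Fix moreover $R_0,C_0>0$ with $|g(t)|\le C_0\ue^{-t}$ for $t\ge kR_0$, and $Y_0,C_1>0$ with $|\sigma(y)|\le C_1 y^{-\alpha}$ for $y\ge Y_0$, and set $F(x):=\int_{\rz_+}g(k|x-y|)\sigma(y)\psi(y)\,\ud y$.

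Then for $x\ge 2\max(R_0,Y_0)$ I would write $F(x)=F_{\mathrm{near}}(x)+F_{\mathrm{far}}(x)$ along the sets $\{|y-x|<x/2\}$ and $\{|y-x|\ge x/2\}$. On $\{|y-x|<x/2\}$ one has $x/2<y<3x/2$, so $|\sigma(y)|\le C_1(x/2)^{-\alpha}$, and Cauchy--Schwarz together with the uniform bound $M$ gives $|F_{\mathrm{near}}(x)|\le C_1(x/2)^{-\alpha}\|g(k|x-\cdot|)\|_{L^2}\|\psi\|_{L^2(\rz_+)}\le C_1 2^{\alpha}M\|\psi\|_{L^2(\rz_+)}\,x^{-\alpha}$. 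On $\{|y-x|\ge x/2\}$ one has $|x-y|\ge x/2\ge R_0$, hence $|g(k|x-y|)|\le C_0\ue^{-k|x-y|}$; using $\sigma\in L^\infty(\rz_+)$ and Cauchy--Schwarz, $|F_{\mathrm{far}}(x)|\le C_0\|\sigma\|_{\infty}\|\ue^{-k|x-\cdot|}\|_{L^2(\{|y-x|\ge x/2\})}\|\psi\|_{L^2(\rz_+)}=C_0\|\sigma\|_{\infty}k^{-1/2}\ue^{-kx/2}\|\psi\|_{L^2(\rz_+)}=\Or(x^{-\alpha})$, since $\ue^{-kx/2}$ decays faster than any power. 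Summing the two estimates yields $F(x)=\Or(x^{-\alpha})$ for large $x$; since $|F(x)|\le M\|\sigma\|_{\infty}\|\psi\|_{L^2(\rz_+)}$ for every $x$ by the same Cauchy--Schwarz bound, the estimate $F(x)=\Or(x^{-\alpha})$ then holds for all $x\in\rz_+$ after enlarging the constant.

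The only genuinely delicate point is the uniformity of the implied constant in $x$: this is why I extract $\|g(k|x-\cdot|)\|_{L^2}$ as a pure translate of $g(k|\cdot|)$ (so it does not grow with $x$), and why the far region is kept entirely inside the exponential-decay regime of $g$ — this is what forces the preliminary restriction to large $x$, a restriction that costs nothing by the global boundedness of $F$ noted above.
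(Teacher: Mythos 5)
Your proof is correct and follows essentially the same route as the paper: split at $|x-y|=x/2$, use the decay of $\sigma$ on the near interval $(x/2,3x/2)$ and the exponential decay of $g$ on the complement, with Cauchy--Schwarz absorbing $\psi\in L^2(\rz_+)$. Your version is in fact slightly more careful about the $x$-uniformity of $\|g(k|x-\cdot|)\|_{L^2}$, which the paper's application of H\"older's inequality uses implicitly.
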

\begin{proof}
We denote $I_x:=  [x/2,3x/2   ]$ and write
\be
\label{app4}
 \int_{\rz_+}g(k|x-y|)\sigma(y)\psi(y)\ud y= \int_{I_x}g(k|x-y|)\sigma(y)\psi(y)\ud y+ \int_{\rz_+\setminus I_x}g(k|x-y|)\sigma(y)\psi(y)\ud y\ .
\ee
We get, using H\"older's inequality and \eqref{app7},
\be
\label{app5}
\ba
& \int_{0}^{x/2}g(k|x-y|)\sigma(y)\psi(y)\ud y= \int_{x/2}^{x}g(ky)\sigma(  y-\frac{x}{2})\psi(  y-\frac{x}{2})\ud y\\
&\leq\ue^{-kx/2}\cdot C'  \|\psi   \|_{L^2(\rz_+)}
\ea
\ee
for all $x > x_0$ with some $x_0$ and $C'>0$ where $C'$ depends on $x_0$ only. An analogous result holds for the same integral as in \eqref{app5} but with interval $[3x/2,\infty)$.

Moreover, using H\"older's inequality again and \eqref{app7}, \eqref{app10}, we obtain
\be
\label{app6}
 \int_{x/2}^{3x/2}g(k|x-y|)\sigma(y)\psi(y)\ud y\leq\frac{\tilde{C}}{x^{\alpha}}  \|\psi   \|_{L^2(\rz_+)}
\ee
for all $x > x_0$ and some $\tilde{C}>0$ depending on $x_0$ only. Hence, combining \eqref{app5} and \eqref{app6} proves the claim.
\end{proof}
The next lemma provides an sufficient good asymptotic estimates of the l.h.s. of \eqref{bootsrap10aauy}. 
\begin{lemma}
\label{sum}
Let $\psi,\tilde{\psi}\in L^{\infty}(\rz)\cap L^2(\rz)$ be such that
\be
\label{2infty}
\|\cdot\|_{L^{\infty}( (\pm|y|,\pm \infty))}\leq C(1+|y|)^{-1-\varepsilon}\|\cdot\|_{L^2(\rz)}
\ee
holds for some $C > 0$. Then, as $k     \rightarrow \infty$,
\be
\label{ap12}
| \int_{0}^{2\pi}\overline{[ F_{0,1}\psi](g_j(\varphi)k)}{[ F_{0,1}\tilde{\psi}](g_l(\varphi)k)}\ \ud \varphi|\leq{k^{-\frac{1}{3}}}{\tilde{C} \|\psi\|_{L^2(\rz)} \|\tilde{\psi}\|_{L^2(\rz)}}\ ,
\ee
where $g_1=\sin(x)$ and $g_2=\cos(x)$. Moreover, $\tilde{C}$ is independent of $\psi,\tilde{\psi}$.
\end{lemma}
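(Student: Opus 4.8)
The plan is to turn the angular average into an integral against a Bessel function and then exploit the $t^{-1/2}$ decay of $J_0$. First I would extract from hypothesis \eqref{2infty} the two consequences that get used repeatedly: taking $y=0$ gives $\|\psi\|_{L^\infty(\rz)}\le C\|\psi\|_{L^2(\rz)}$, while letting $|y|\uparrow|x|$ gives $|\psi(x)|\le C(1+|x|)^{-1-\varepsilon}\|\psi\|_{L^2(\rz)}$ for a.e.\ $x$, and splitting $\rz$ into $\{|x|\le1\}$ (Cauchy--Schwarz) and $\{|x|>1\}$ (where the previous pointwise bound is integrable because $\varepsilon>0$) yields $\|\psi\|_{L^1(\rz)}\le C_\varepsilon\|\psi\|_{L^2(\rz)}$; the same holds for $\tilde\psi$. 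In particular $F_{0,1}\psi$ and $F_{0,1}\tilde\psi$ are bounded continuous functions given by absolutely convergent integrals.

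Next, writing $u:=F_{0,1}\psi$, $v:=F_{0,1}\tilde\psi$, inserting the integral formulas for $u$ and $v$ and interchanging the order of integration (legitimate since $\psi,\tilde\psi\in L^1$), the left-hand side of \eqref{ap12} becomes
\[
\Big|\frac{1}{2\pi}\int_\rz\!\int_\rz\overline{\psi(x)}\,\tilde\psi(y)\Big(\int_0^{2\pi}\ue^{\ui k(g_j(\varphi)x-g_l(\varphi)y)}\,\ud\varphi\Big)\ud x\,\ud y\Big| .
\]
For $j=l$ the inner phase is $(x-y)g_j(\varphi)$, while for $j\ne l$ it has the form $a\cos\varphi+b\sin\varphi$ with $a^2+b^2=x^2+y^2$; in either case the integral representation $J_0(t)=\frac{1}{2\pi}\int_0^{2\pi}\ue^{\ui t\sin\varphi}\,\ud\varphi$, which is invariant under any phase shift because the integrand is $2\pi$-periodic, shows that $\int_0^{2\pi}\ue^{\ui k(g_j(\varphi)x-g_l(\varphi)y)}\,\ud\varphi=2\pi J_0\big(k\rho_{jl}(x,y)\big)$, where $\rho_{jl}(x,y)=|x-y|$ if $j=l$ and $\rho_{jl}(x,y)=\sqrt{x^2+y^2}$ if $j\ne l$. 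Hence the quantity to be estimated equals $\big|\int_\rz\int_\rz J_0(k\rho_{jl}(x,y))\overline{\psi(x)}\tilde\psi(y)\,\ud x\,\ud y\big|$.

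Finally I would split the $(x,y)$-integral at $\rho_{jl}(x,y)=\delta$ for a parameter $\delta>0$ and use $|J_0(t)|\le\min(1,C_0t^{-1/2})$, a standard consequence of the asymptotics of $J_0$ (e.g.\ \cite[pp.\,70,139]{Magnus:1966}). On $\{\rho_{jl}\le\delta\}$ one bounds $|J_0|\le1$ and uses that $\{y:\rho_{jl}(x,y)\le\delta\}$ has Lebesgue measure $\le2\delta$ for each fixed $x$, so $\int_{\rho_{jl}\le\delta}|\tilde\psi(y)|\,\ud y\le2\delta\|\tilde\psi\|_{L^\infty}$, and integrating $|\psi|$ gives a contribution $\le C'\delta\,\|\psi\|_{L^2}\|\tilde\psi\|_{L^2}$ by the first paragraph. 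On $\{\rho_{jl}>\delta\}$ one bounds $|J_0(k\rho_{jl})|\le C_0(k\delta)^{-1/2}$ and $\int\!\int|\psi||\tilde\psi|\le\|\psi\|_{L^1}\|\tilde\psi\|_{L^1}$, giving a contribution $\le C''(k\delta)^{-1/2}\|\psi\|_{L^2}\|\tilde\psi\|_{L^2}$. Thus the bound is $\le C'''\big(\delta+(k\delta)^{-1/2}\big)\|\psi\|_{L^2}\|\tilde\psi\|_{L^2}$, and choosing $\delta=k^{-1/3}$ balances the two terms and yields \eqref{ap12} with $\tilde C$ depending only on $C$ and $\varepsilon$, hence independent of $\psi,\tilde\psi$. (For $k$ in a bounded set the claim is trivial from $\|u\|_{L^\infty}\|v\|_{L^\infty}\lesssim\|\psi\|_{L^2}\|\tilde\psi\|_{L^2}$.)

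The only steps that require genuine care are the collapse of the angular integral to a single Bessel function — one must check that all four choices of $(g_j,g_l)$ lead to $J_0$ evaluated at a distance-like argument — and the verification that the near-diagonal measure estimate together with the $t^{-1/2}$ decay of $J_0$ produces exactly the exponent $\frac13$; the Fubini interchange and the optimisation in $\delta$ are routine.
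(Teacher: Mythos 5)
Your proof is correct, and it takes a genuinely cleaner route than the paper's. The decisive difference is in how the angular integral is collapsed: you use the full-period identity $\int_0^{2\pi}\ue^{\ui(a\cos\varphi+b\sin\varphi)}\,\ud\varphi=2\pi J_0(\sqrt{a^2+b^2})$, which handles all four pairs $(g_j,g_l)$ at once and reduces everything to a single kernel $J_0(k\rho_{jl}(x,y))$ with $\rho_{jl}$ either $|x-y|$ or $\sqrt{x^2+y^2}$. The paper only does this for the diagonal case $j=l$; for the mixed case it splits $\varphi$ into $[0,\pi]$ and $[\pi,2\pi]$ and invokes a Mehler--Sonine-type half-period formula producing $\pi J_0-\ui\pi Y_0$ plus an auxiliary integral $\int_0^\infty\sin(yk\cosh t)\ue^{-y'k\sinh t}\,\ud t$, each piece of which must then be estimated separately (the $Y_0$ term near the origin, the Hankel-type integral for $y'\geq k^{-\alpha}$, etc.). Your near-diagonal bound is also more robust: you only use the measure $\leq 2\delta$ of the thin set in $y$ together with $\|\tilde\psi\|_{L^\infty}\lesssim\|\tilde\psi\|_{L^2}$ and $\|\psi\|_{L^1}\lesssim\|\psi\|_{L^2}$ (both consequences of \eqref{2infty}), whereas the paper's estimate \eqref{estimate1aa} integrates the pointwise weight $(1+|y|)^{-\frac{1+\varepsilon}{2}}$ over all of $\rz_+$, which as displayed is not obviously finite for small $\varepsilon$; your argument sidesteps this. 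Both proofs then balance the two contributions $\delta$ and $(k\delta)^{-1/2}$ at $\delta=k^{-1/3}$, which is exactly the paper's choice $\alpha=\tfrac13$, and both keep track that the constant depends only on $C$ and $\varepsilon$. In short, what your approach buys is uniformity across the four cases and the elimination of the $Y_0$/Hankel-integral bookkeeping; what the paper's approach buys is, at most, a template that would survive if one genuinely had to work with a half-period integral, which is not the situation here.
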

\begin{proof}
We first consider the case $j=2$ and $l=1$: Then the absolute value of the integral in \eqref{ap12} is given by
\be
\label{ap13a}
| \int_{0}^{2\pi} \int_{\rz^2}\overline{\psi}(  y)\tilde{\psi}(  y')\ue^{\ui k(  y\cos(\varphi)-y'\sin(\varphi))}\ud y\ud y'\ud \varphi|\ .
\ee
We split the integral w.r.t. $\varphi$ in two subintervals $[0,\pi]$ and $[\pi,2\pi]$, considering only the first case, the other being similar. We write 
\be
\label{aux}
\ue^{\ui k(  y\cos(\varphi)-y'\sin(\varphi))}=\ue^{\ui k(  y\cos(\varphi))}( \cos(  ky'\sin(\varphi))-\ui\sin(  ky'\sin(\varphi)))
\ee
and set, using \cite[pp.~81]{Magnus:1966},
\be
\label{estimate12}
\ba
&f(y,y',k):= \int_{0}^{\pi}\ue^{\ui k(  y\cos(\varphi)-y'\sin(\varphi))}\ud\varphi=\pi J_{0}(  k(  y^2+y'^2)^{\frac{1}{2}})\\
&\hspace{0.25cm}-\ui\pi Y_{0}(  k(  y^2+y'^2)^{\frac{1}{2}})-\ui2\pi \int_{0}^{\infty}\sin(  yk \cosh(t))\ue^{-y'k\sinh(t)}\ \ud t\ .
\ea
\ee
For $y'\geq{{k}^{-\alpha}}$, $\alpha>0$, we have
\be
\label{estimate1a}
| \int_{0}^{\infty}\sin(  yk \cosh(t))\ue^{-y'k\sinh(t)}\ud t|\leq \int_{0}^{\infty}\ue^{-y'k t}\ud t=\Or( {{k}^{\alpha-1}})\ .
\ee
Since the l.h.s. of \eqref{estimate12} is smooth and bounded w.r.t. to $y$ and $y'$, we deduce that the second plus the third term on the r.h.s. of \eqref{estimate12} is also bounded for $r:=\sqrt{y^2+{y'}^2}<{{k}^{-\alpha}}$. Consequently, taking into account the asymptotics for large values of $k$ of the first two terms on the r.h.s. of \eqref{estimate12} \cite[p.~139]{Magnus:1966} we obtain
\be
\label{rtea}
\ba
&| \int_{\rz^2}f(y,y',k)\overline{\psi}(y)\tilde{\psi}(y')\ud y\ud y'|\leq | \int_{r<{k}^{-\alpha}}f(y,y',k)\overline{\psi}(y)\tilde{\psi}(y')\ud y\ud y'|\\
&\hspace{0.25cm}+| \int_{r\geq{k}^{-\alpha}}f(y,y',k)\overline{\psi}(y)\tilde{\psi}(y')\ud y\ud y'|\leq {C  \|\psi   \|_{L^2(\rz)}\|\tilde{\psi}\|_{L^2(\rz)}}\max\lgk{k}^{-2\alpha},{k}^{\frac{\alpha-1}{2}}\rgk\ ,
\ea
\ee
where $C>0$ can be chosen independently of $\psi$ and $\tilde{\psi}$. Note also that we used the fact that $\psi,\tilde{\psi}$ satisfy \eqref{2infty} in estimating the second integral.

In a next step we consider the case $j=1$ and $l=1$: Then the absolute value of the integral in \eqref{ap12} is given by
\be
\label{ap13}
| \int_{0}^{2\pi} \int_{\rz^2}\overline{\psi}(  y)\tilde{\psi}(  y')\ue^{\ui k(y-y')\cos(\varphi)}\ud y\ud y'\ud \varphi|\ .
\ee
Using \eqref{aux} and \cite[p.~79]{Magnus:1966} we see that
\be
h(y,y',k):=\int_0^{2\pi}\ue^{\ui k(y-y')\cos(\varphi)}\ud \varphi=2\pi J_{0}(  k(  y-y'))\ .
\ee
We make the disjoint decomposition $\rz^2=Q\cup\lgk\rz^2\setminus Q\rgk$ where
\be
\label{Q}
Q=\lgk(x,y)\in\rz^2; \quad |x-y|\leq{{k}^{-\alpha}}\rgk\ .
\ee
We observe that 
\be
\label{estimate1aa}
 \int_{Q}\frac{1}{(  1+|y|)^{\frac{1+\varepsilon}{2}}}\frac{1}{(  1+|y'|)^{\frac{1+\varepsilon}{2}}}\ud y\ud y'\leq C\int_{0}^{{k}^{-\alpha}} \int_{0}^{\infty}\frac{1}{(  1+|y|)^{\frac{1+\varepsilon}{2}}}\ud y\ud y^\prime=\Or( {k}^{-\alpha})\ .
\ee
Using that $\psi,\tilde{\psi}\in L^{\infty}(\rz)$, $\psi,\tilde{\psi}$ satisfy \eqref{2infty}, \eqref{estimate1aa} and the asymptotics of the $J_0$ Bessel function \cite[p.~138]{Magnus:1966} we get 
\be
\label{ineq1as}
\ba
&| \int_{\rz^2}h(y,y',k)\overline{\psi}(y)\tilde{\psi}(y')\ud y\ud y'|\leq C| \int_{Q}\overline{\psi}(y)\tilde{\psi}(y')\ud y\ud y'|\\
&\hspace{0.5cm}+C| \int_{\rz^2\setminus Q}{k}^{\frac{\alpha-1}{2}}\overline{\psi}(y)\tilde{\psi}(y')\ud y\ud y'|\\
&\hspace{0.5cm} <{C\|\psi\|_{L^2(\rz)}\|\tilde{\psi}\|_{L^2(\rz)}}{\max\lgk k^{-\alpha},{k}^{\frac{\alpha-1}{2}}\rgk}\ ,
\ea
\ee
where $C>0$ can be chosen independently of $\psi$ and $\tilde{\psi}$. Combining \eqref{rtea} and \eqref{ineq1as} we may choose $\alpha=\frac{1}{3}$. Finally, the other cases are analogous to the two previous ones.
\end{proof}
%
{
\small
\bibliographystyle{plain}
\bibliography{Literature}
}
\end{document}